\documentclass{article}

\usepackage{arxiv}

\usepackage[utf8]{inputenc} % allow utf-8 input
\usepackage[T1]{fontenc}    % use 8-bit T1 fonts
\usepackage{hyperref}       % hyperlinks
\usepackage{url}            % simple URL typesetting
\usepackage{booktabs}       % professional-quality tables
\usepackage{amsfonts}       % blackboard math symbols
\usepackage{nicefrac}       % compact symbols for 1/2, etc.
\usepackage{microtype}      % microtypography
\usepackage{lipsum}
\usepackage{graphicx}
\usepackage{subfigure}
\usepackage{amsmath}
\usepackage{amssymb}
\usepackage{amsthm}
\usepackage{algorithm}
\usepackage[noend]{algpseudocode}
\usepackage[dvipsnames]{xcolor}
\usepackage{enumerate}
\usepackage{comment}
\usepackage{tikz}

\newcommand{\lnarrow}[1]{%
\begin{tikzpicture}[#1]%
\draw (0.5ex,0) -- (1ex,1ex);%
\draw (1ex,1ex) -- (0,1ex);%
\end{tikzpicture}%
}

\newcommand{\unarrow}[1]{%
\begin{tikzpicture}[#1]%
\draw (0,0) -- (1ex,0);%
\draw (0,0) -- (0.5ex,1ex);%
\end{tikzpicture}%
}

\newcommand{\uwide}[1]{%
\begin{tikzpicture}[#1]%
\draw (0,0) -- (0.75ex,0);%
\draw (0.75ex,0) -- (1.25ex,1ex);%
\end{tikzpicture}%
}

\newcommand{\lwide}[1]{%
\begin{tikzpicture}[#1]%
\draw (0.75ex,0.5ex) -- (0,0.5ex);%
\draw (0,0.5ex) -- (-0.5ex,-0.5ex);%
\end{tikzpicture}%
}

\newcommand{\dlnarrow}[1]{%
\begin{tikzpicture}[#1]%
\draw (0.5ex,0) -- (0.5ex,1ex);%
\draw (0.5ex,1ex) -- (-0.5ex,0);%
\end{tikzpicture}%
}

\newcommand{\dunarrow}[1]{%
\begin{tikzpicture}[#1]%
\draw (0,0) -- (0,1ex);%
\draw (0,0) -- (1ex,1ex);%
\end{tikzpicture}%
}

\newcommand{\duwide}[1]{%
\begin{tikzpicture}[#1]%
\draw (0,0) -- (0,0.75ex);%
\draw (0,0) -- (-0.75ex,-0.75ex);%
\end{tikzpicture}%
}

\newcommand{\dlwide}[1]{%
\begin{tikzpicture}[#1]%
\draw (0,0) -- (0.75ex,0.75ex);%
\draw (0,0) -- (0,-0.75ex);%
\end{tikzpicture}%
}

\newcommand{\dashd}[1]{%
\begin{tikzpicture}[#1]%
\draw (0,0) -- (1ex,0);%
\draw (0.5ex,0) -- (1ex,1ex);%
\end{tikzpicture}%
}

\newcommand{\ddash}[1]{%
\begin{tikzpicture}[#1]%
\draw (0,1ex) -- (1ex,1ex);%
\draw (0.5ex,1ex) -- (0,0);%
\end{tikzpicture}%
}

\newcommand{\diaghor}[1]{%
\begin{tikzpicture}[#1]%
\draw (0,0) -- (1ex,1ex);%
\draw (0.5ex,0.5ex) -- (0,0.5ex);%
\end{tikzpicture}%
}

\newcommand{\diagver}[1]{%
\begin{tikzpicture}[#1]%
\draw (0,0) -- (1ex,1ex);%
\draw (0.5ex,0.5ex) -- (0.5ex,1ex);%
\end{tikzpicture}%
}

\newcommand{\hordiag}[1]{%
\begin{tikzpicture}[#1]%
\draw (0,0) -- (1ex,1ex);%
\draw (0.5ex,0.5ex) -- (1ex,0.5ex);%
\end{tikzpicture}%
}

\newcommand{\verdiag}[1]{%
\begin{tikzpicture}[#1]%
\draw (0,0) -- (1ex,1ex);%
\draw (0.5ex,0.5ex) -- (0.5ex,0);%
\end{tikzpicture}%
}

\newcommand{\ubranch}[1]{%
\begin{tikzpicture}[#1]%
\draw (0,0) -- (0,1ex);%
\draw (0,0.5ex) -- (1ex,1ex);%
\end{tikzpicture}%
}

\newcommand{\lbranch}[1]{%
\begin{tikzpicture}[#1]%
\draw (0,0) -- (1ex,0.5ex);%
\draw (1ex,0) -- (1ex,1ex);%
\end{tikzpicture}%
}

\DeclareMathOperator*{\argmin}{argmin}

\newtheorem{thm}{Theorem}
\newtheorem{lem}{Lemma}

\newtheorem{rmk}{Remark}

\title{Edge Intersection Graphs of Paths on a Triangular Grid}

\author{
  Vitor T. F. de Luca\\
  Universidade do Estado do Rio de Janeiro\\
  Rio de Janeiro, Brazil\\
  \texttt{toccivitor8@gmail.com} \\
  %% examples of more authors
   \And
   María Pía Mazzoleni\\
  Universidad Nacional de La Plata\\
  La Plata, Argentina\\
  \texttt{pia@mate.unlp.edu.ar} \\
   \And
 Fabiano S. Oliveira \\
  Universidade do Estado do Rio de Janeiro\\
  Rio de Janeiro, Brazil\\
  \texttt{fabiano.oliveira@ime.uerj.br} \\
  \And
  Tanilson D. Santos\\
  Universidade Federal do Tocantins\\
  Palmas, Brazil\\
  \texttt{tanilson.dias@uft.edu.br}\\
  \AND
  Jayme L. Szwarcfiter \\
  Universidade Federal do Rio de Janeiro \hspace{0.5cm} Universidade do Estado do Rio de Janeiro \\
  Rio de Janeiro, Brazil \\
  \texttt{jayme@nce.ufrj.br} \\
  %% \And
  %% Coauthor \\
  %% Affiliation \\
  %% Address \\
  %% \texttt{email} \\
  %% \And
  %% Coauthor \\
  %% Affiliation \\
  %% Address \\
  %% \texttt{email} \\
}

\begin{document}
\maketitle

\begin{abstract}
We introduce a new class of intersection graphs, the edge intersection graphs of paths on a triangular grid, called EPG\textsubscript{t} graphs. We show similarities and differences from this new class to the well-known class of EPG graphs. A turn of a path at a grid point is called a \emph{bend}. An EPG\textsubscript{t} representation in which every path has at most $k$ bends is called a B$_k$-EPG\textsubscript{t} representation and the corresponding graphs are called B$_k$-EPG\textsubscript{t} graphs. We provide examples of B$_{2}$-EPG graphs that are B$_{1}$-EPG\textsubscript{t}. We characterize the representation of cliques with three vertices and chordless 4-cycles in B$_{1}$-EPG\textsubscript{t} representations. We also prove that B$_{1}$-EPG\textsubscript{t} graphs have Strong Helly number $3$. Furthermore, we prove that B$_{1}$-EPG\textsubscript{t} graphs are $7$-clique colorable.
\end{abstract}

% keywords can be removed
\keywords{Triangular grid\and Intersection graphs\and Paths on a grid\and Bend}

\section{Introduction}
In 2009, Golumbic, Lipshteyn and Stern~\cite{golumbic2009edge} introduced the notion of edge intersection graphs of paths on a rectangular grid. This family of graphs, called EPG graphs, is a generalization of the edge intersection graphs of paths on a degree four tree~\cite{golumbic1985edge,golumbic1985edge1,golumbic2006representations,golumbic2008representing}. They shifted from trees, as underlying structures, which do not allow cycles, to a rectangular grid. We consider here an even more general structure, from which a family of paths is taken, the triangular grid. A triangular grid consists of a rectangular grid with an extra direction (see Figure~\ref{fig: triangular_grid}). We call this extra direction the \emph{diagonal}. In most applications, a triangular grid is usually displayed as depicted in Figure~\ref{fig: triangular_grid2}. However, in the context of this paper, it is more natural to consider it depicted as in Figure~\ref{fig: triangular_grid1}, since we are treating such a grid as a generalization of the rectangular one. Notice that both drawings are equivalent, in the sense that the grid of Figure~\ref{fig: triangular_grid2} is that of Figure~\ref{fig: triangular_grid1} rotated 30 degrees in clockwise direction. We call the edge intersection graphs of paths on a triangular grid as EPG\textsubscript{t} graphs.

One motivation for studying these graphs is the same from EPG graphs, coming originally from circuit layout problems~\cite{brady1990stretching,molitor1991survey}. Another motivation is a rather natural optimization one, which consists of deciding whether an EPG\textsubscript{t} graph admits a representation having paths bending at most $k$ times.

The triangular grid has been studied in the context of the \emph{channel assignment problem with separation (CAPS)}. In cellular networks, a large number of base stations are expected to cover communications over a region~\cite{zander2000trends}. Such a covering can be achieved by placing base stations according to a regular plane tessellation. The most important regular tessellation of the plane is triangular tessellation~\cite{bertossi2004channel}, and the corresponding topology of such a tessellation is the triangular grid, known as \emph{triangular lattice} on those applications. The reason for adopting this particular tessellation comes from the fact that base stations are uniformly distributed in the coverage region, and an individual base station generally has six directional transceivers~\cite{janssen1998distributed}. Thus, the base station’s coverage area can be idealized as a regular triangular tessellation. The channel assignment problem with separation (CAPS) deals with assigning frequencies to stations such that there is no interference between frequencies assigned to nearby stations while trying to minimize the span (the difference between highest and lowest frequencies) of assigned frequencies.

In this paper, we introduce this new class of EPG\textsubscript{t} graphs and provide a characterization of representations of cliques and 4-cycles on those grids, extending the analogous results for the EPG graphs.

\begin{figure}[htb]
    \centering
    \subfigure[][]{\includegraphics[scale=0.5]{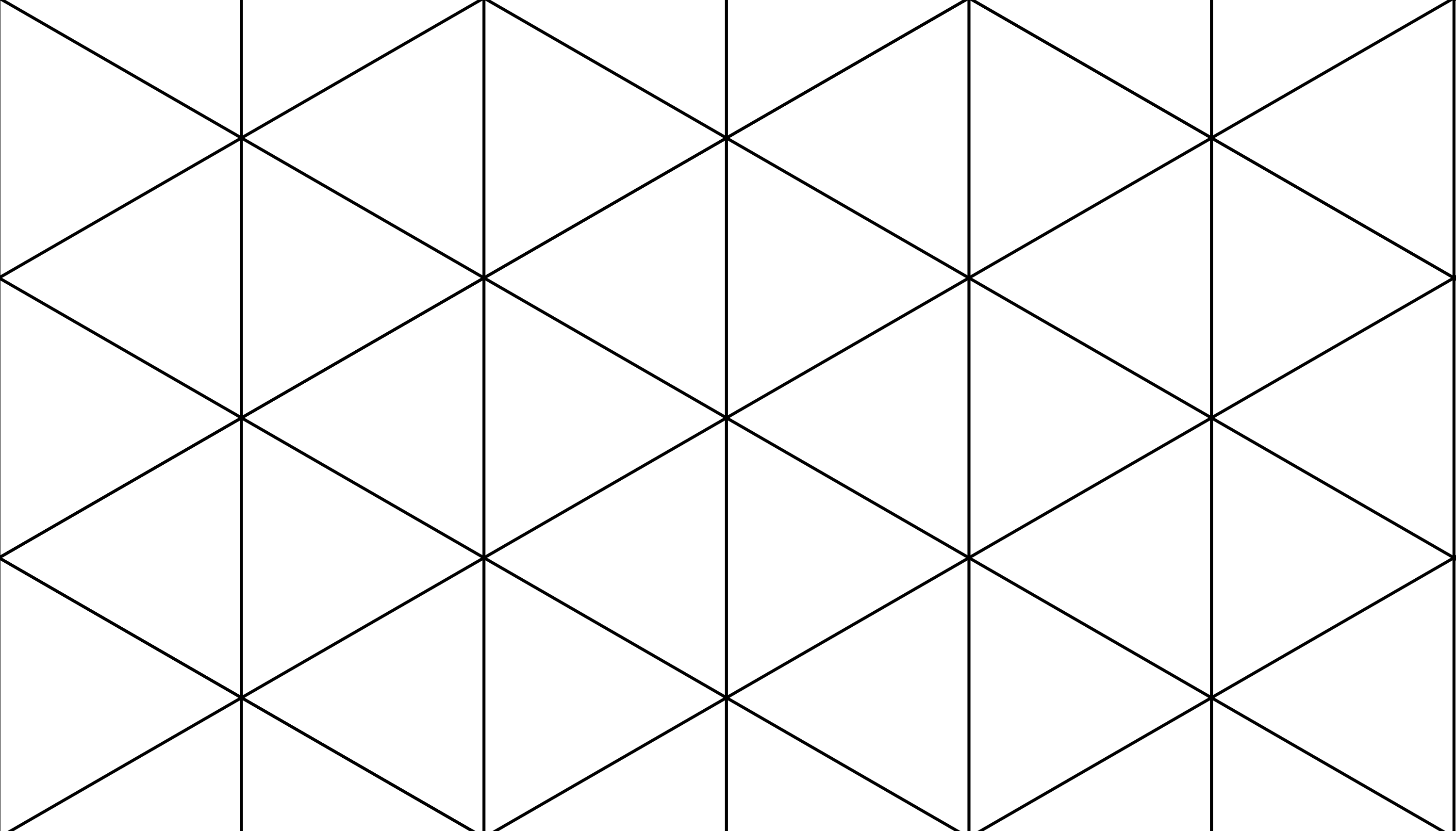} \label{fig: triangular_grid2}}
    \qquad
    \subfigure[][]{\includegraphics[scale=0.5]{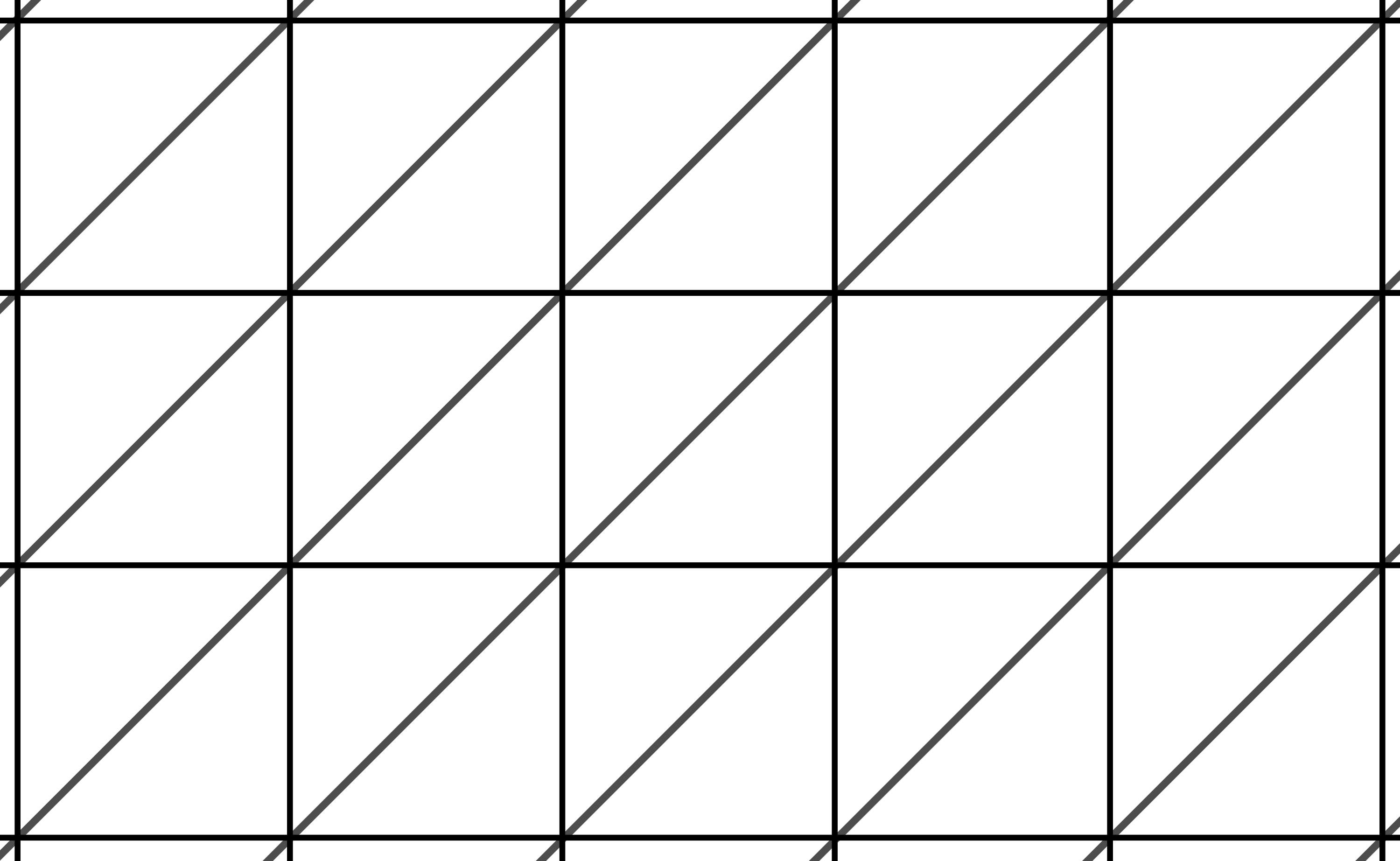} \label{fig: triangular_grid1}}
    \caption{The triangular grid.}
    \label{fig: triangular_grid}
\end{figure}

\section{Preliminaries}
We consider graphs that are undirected and without loops. Let $G$ be a graph. A set $C\subseteq V(G)$ is a \emph{clique} of $G$ if $xy\in E(G)$ for all $x,y\in C$. A set $S\subseteq V(G)$ is an \emph{independent set} of $G$ if $xy\notin E(G)$ for all $x,y\in S$. Let $v\in V(G)$. The \emph{neighborhood of $v$} is denoted by $N(v)=\{u\in V(G)\mid uv\in E(G)\}$ and the \emph{degree of $v$} is denoted by $d(v)=|N(v)|$. Let $H_1$ and $H_2$ be graphs. Denote by $H_1\cup H_2$ the graph $(V(H_1)\cup V(H_2), E(H_1)\cup E(H_2))$. Let $G$ be a graph and $x,y\in V(G)$. We denote by $P_{x,y}$ a path of $G$ connecting $x$ and $y$. The \emph{edge subdivision operation} for an edge $uv\in E(G)$ is the deletion of $uv$ from $E(G)$, the addition of a new vertex $w$ to $V(G)$ and the addition of the edges $uw$ and $vw$ to $E(G)$. A graph which has been derived from $G$ by a sequence of edge subdivision operations is called a \emph{subdivision of $G$}. Let $P$ and $Q$ be paths on a triangular grid $\mathcal{G}$. We denote by $P\cap_v Q$ when referring to the intersection between $P$ and $Q$ that includes both vertices and edges, and $P\cap_e Q$ when referring to the edge intersection between $P$ and $Q$.

Let $\mathcal{G}$ be a rectangular grid. Let $\mathcal{P}$ be a collection of nontrivial simple paths on $\mathcal{G}$. The edge intersection graph EPG$(\mathcal{P})$ of $\mathcal{P}$ to have vertices which correspond to the members of $\mathcal{P}$, so that two vertices are adjacent in EPG$(\mathcal{P})$ if and only if the corresponding paths in $\mathcal{P}$ share at least one edge in $\mathcal{G}$. A graph $G$ is called an \emph{edge intersection graph of paths on a rectangular grid} (EPG) if $G=\text{EPG$(\mathcal{P})$}$ for some $\mathcal{P}$ and $\mathcal{G}$, and $\langle \mathcal{P},\mathcal{G}\rangle$ is an EPG representation of $G$. Figure~\ref{fig: graph-example-epg-b} illustrates an EPG representation of the graph in Figure~\ref{fig: graph-example-epgt-a}. On those representations, for clearness of the drawings, the grid lines are omitted, and parts of a path that are parallel and mutually close are considered to belong to a same grid line. A turn of a path at a grid point is called a \emph{bend}. An EPG representation is \emph{B$_k$-EPG} if each path has at most $k$ bends. A graph that has a B$_k$-EPG representation is called \emph{B$_k$-EPG}. The \emph{rectangular bend-number} of a graph $G$ is the least $k$ such that $G$ is B$_k$-EPG.

We define the \emph{edge intersection graph of paths} EPG\textsubscript{t}$(\mathcal{P})$ of a collection of paths $\mathcal{P}$ to have vertices which correspond to the members of $\mathcal{P}$, such that two vertices are adjacent in EPG\textsubscript{t}$(\mathcal{P})$ if and only if the corresponding paths in $\mathcal{P}$ share at least one edge on a triangular grid $\mathcal{G}$. A graph $G$ is called an \emph{edge intersection graph of paths on a triangular grid} (EPG\textsubscript{t}) if $G=\text{EPG\textsubscript{t}$(\mathcal{P})$}$ for some $\mathcal{P}$ and $\mathcal{G}$, and $\langle \mathcal{P},\mathcal{G}\rangle$ is an EPG\textsubscript{t} representation of $G$. The graph in Figure~\ref{fig: graph-example-epgt-a} is B$_1$-EPG\textsubscript{t}, as the representation in Figure~\ref{fig: graph-example-epgt-c} shows. Similarly to the EPG graphs, a turn of a path at a grid point is called a \emph{bend}. A path is a \emph{B$_k$-path} if it contains at most $k$ bends. An EPG\textsubscript{t} representation is \emph{B$_k$-EPG\textsubscript{t}} if each path has at most $k$ bends. A graph that has a B$_k$-EPG\textsubscript{t} representation is called \emph{B$_k$-EPG\textsubscript{t}}. The \emph{triangular bend-number} of a graph $G$ is the least $k$ such that $G$ is B$_k$-EPG\textsubscript{t}.

A set of edges of a grid is \emph{co-linear} if all edges of the set belong to the same line of the grid, horizontal or vertical. The set of edges is called \emph{parallel} if all its edges lie on parallel lines of the grid, but no two of them are co-linear. A \emph{segment} of a path is a maximal subpath of the path with no bends. Therefore, a $0$-bend path has only one segment (the path itself), whereas a $1$-bend path has two segments. A $1$-bend path can be referred to as \emph{narrow} (Figure~\ref{fig: 45_angle}), \emph{normal} (Figure~\ref{fig: 90_angle}) or \emph{wide} (Figure~\ref{fig: 135_angle}), depending on the angle formed by its two segments. Note in Figure~\ref{fig: graph-example-epgt-c} that $P_a$ and $P_b$ are wide paths, $P_d$ and $P_e$ are normal paths and $P_c$ is a narrow path.

\begin{figure}[htb]
    \centering
    \subfigure[][]{\includegraphics[scale=0.7]{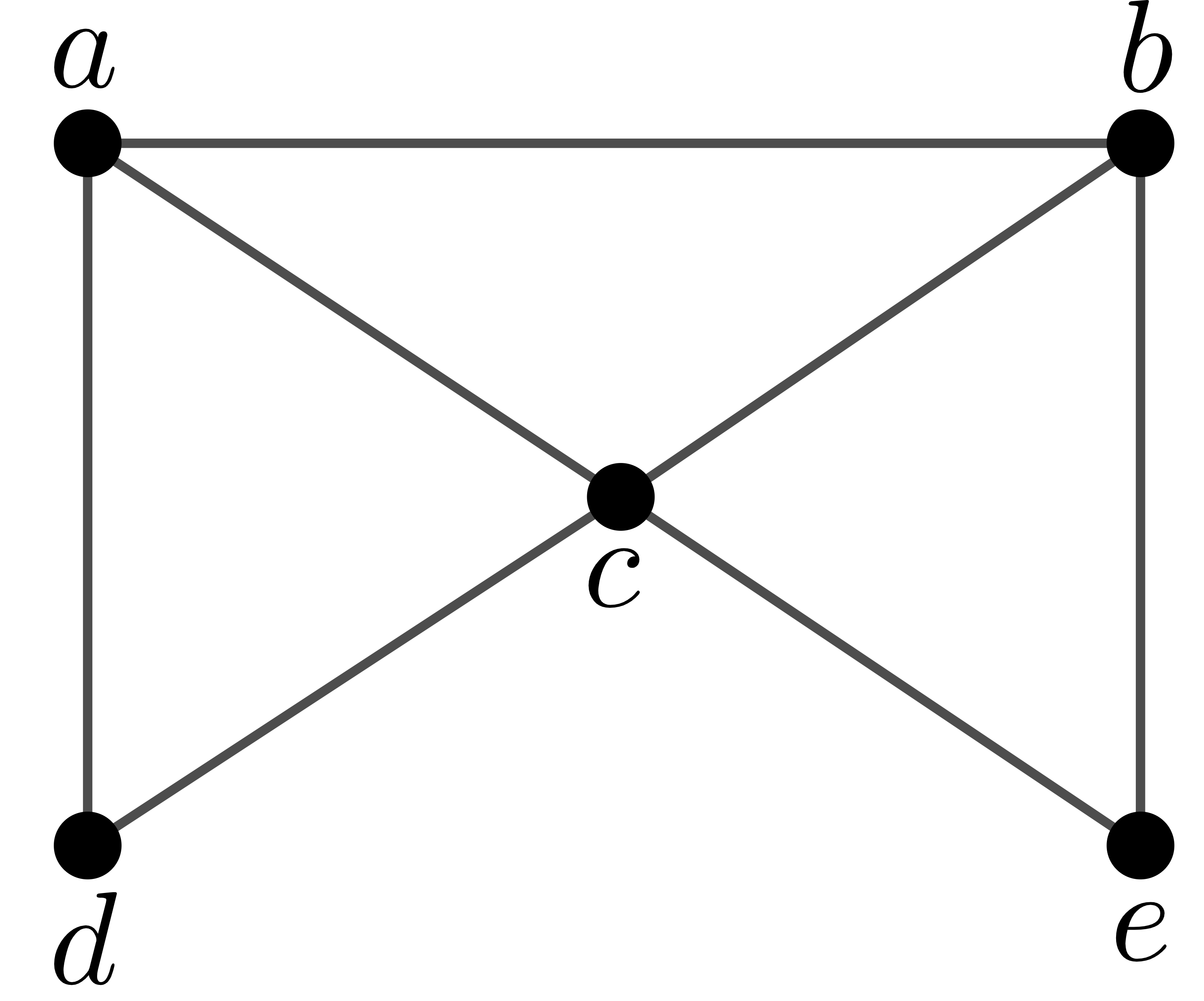} \label{fig: graph-example-epgt-a}}
    \qquad
    \subfigure[][]{\includegraphics[scale=0.7]{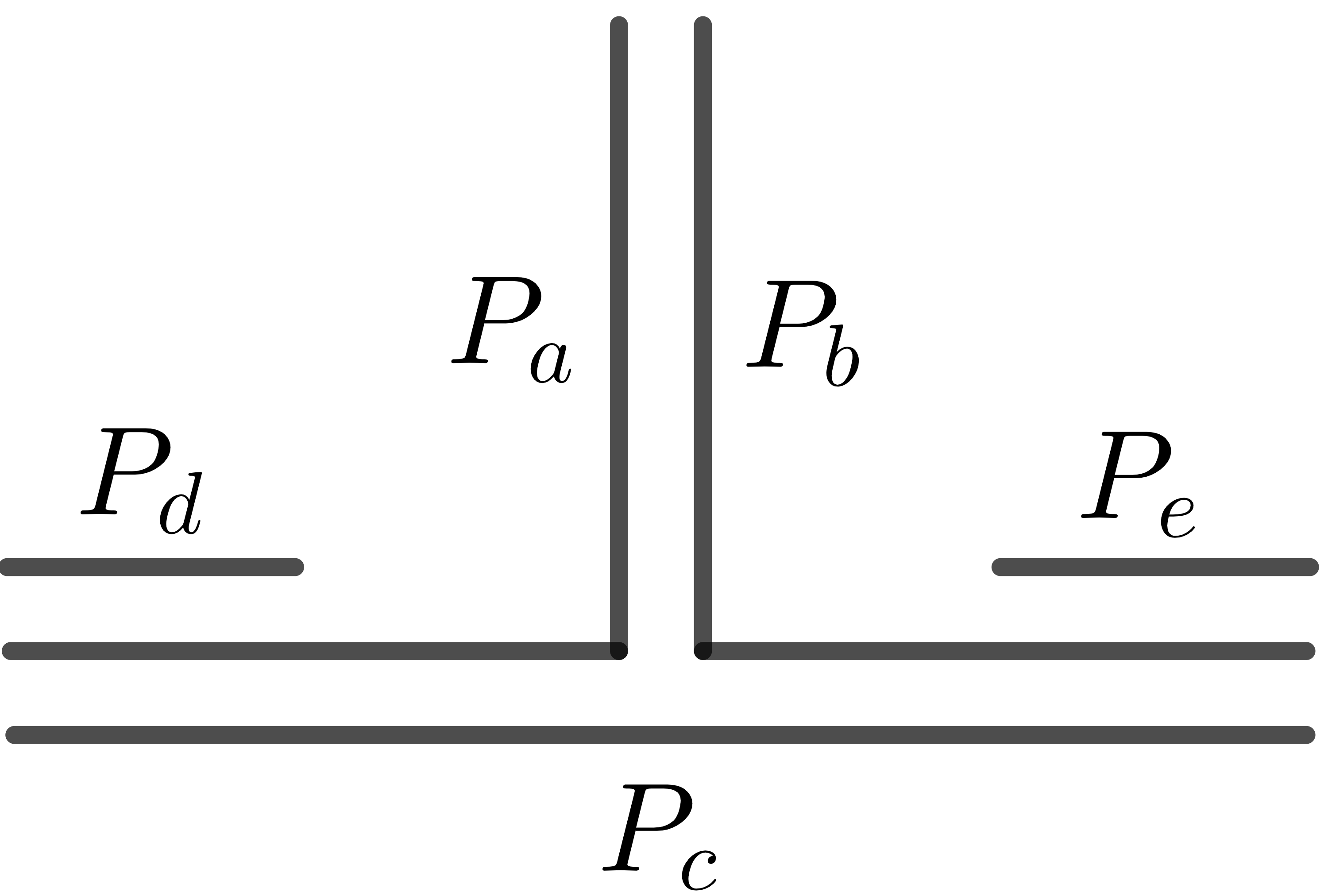} \label{fig: graph-example-epg-b}}
    \qquad
    \subfigure[][]{\includegraphics[scale=0.7]{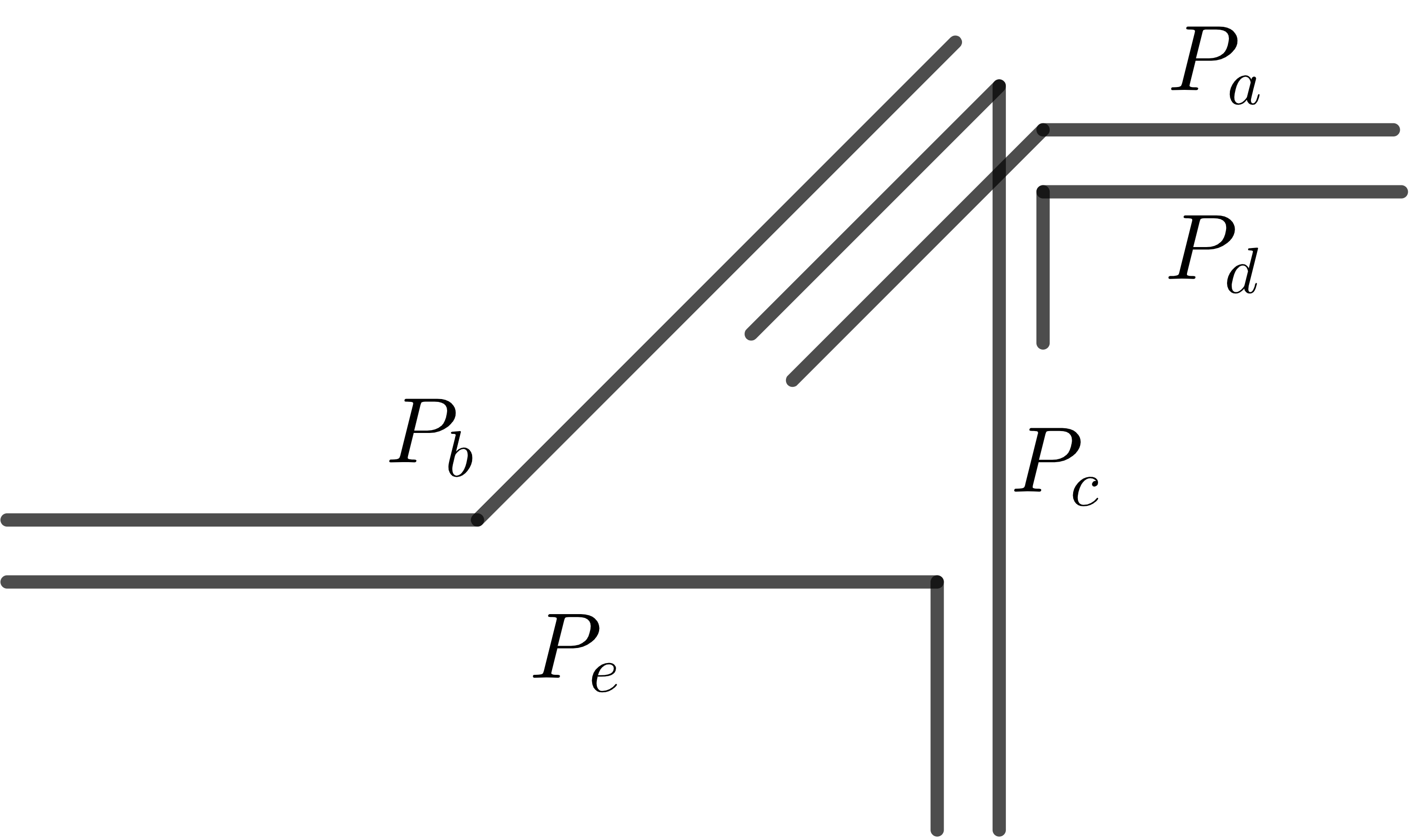} \label{fig: graph-example-epgt-c}}
    \caption{(a) A graph, (b) a B$_1$-EPG representation and (c) a B$_1$-EPG\textsubscript{t} representation.}
    \label{fig: graph-example-epgt}
\end{figure}

Let $G$ be a graph and $\langle \mathcal{P},\mathcal{G}\rangle$ a B$_1$-EPG\textsubscript{t} representation of $G$ on a triangular grid $\mathcal{G}$, where $\mathcal{P}=\{P_i\mid 1\leq i\leq |V(G)|\}$. We define $U(\mathcal{P})\subset\mathcal{G}$ as the \emph{underlying grid hosting the paths of $\mathcal{P}$}, such that 
\begin{eqnarray*}
U(\mathcal{P})&=&\{s\mid \text{$s$ is a segment of a path $P_i$ such that $P_i\cap_e P_j\neq\emptyset$}\\
& & \text{for some $1\leq i, j\leq|V(G)|$ with $i\neq j$}\}
\end{eqnarray*}
that is, $U(\mathcal{P})$ is the set of all segments of paths which intersect any other path. The collection of paths $\mathcal{P}=\{P_1, P_2, P_3, P_4, P_5\}$ in Figure~\ref{fig: underlying_graph_example_b} is a B$_1$-EPG\textsubscript{t} representation of the graph in Figure~\ref{fig: underlying_graph_example_a}, and $U(\mathcal{P})$ is formed by the vertices and edges of the segments marked in blue. Note that all segments of $P_2$, $P_3$ and $P_4$ edge intersect with some $P_i\in\mathcal{P}$, only one segment of $P_1$ edge intersect with some $P_i\in\mathcal{P}$ and none of the segments of $P_5$ edge intersect with any $P_i\in\mathcal{P}$. The graph of $U(\mathcal{P})$ is defined as the graph $G$ in which $V(G)$ is the set of grid points of $U(\mathcal{P})$ and $E(G)$ is the set of grid edges of $U(\mathcal{P})$. For convenience, we may refer to the subgrid $U(\mathcal{P})$ and the graph of $U(\mathcal{P})$ interchangeably, as long as no ambiguity arises.

\begin{figure}[htb]
    \centering
    \subfigure[][]{\includegraphics[scale=0.7]{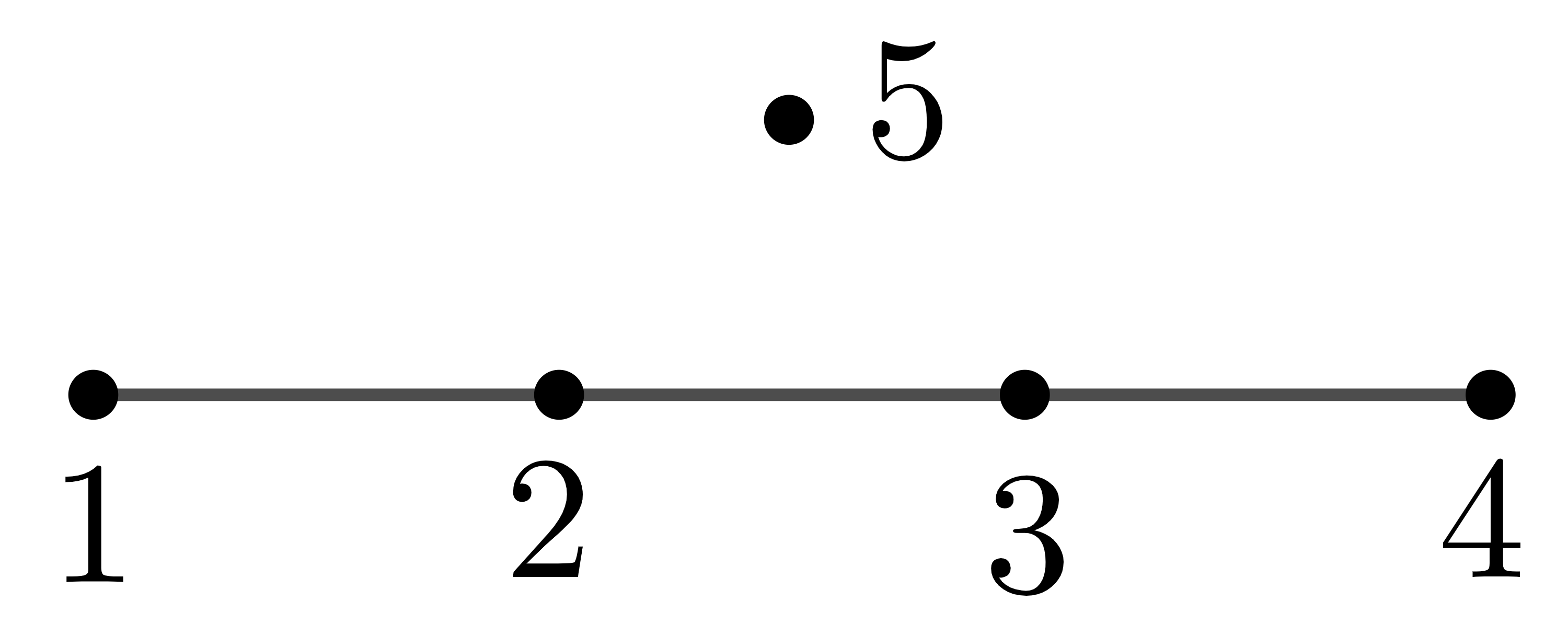} \label{fig: underlying_graph_example_a}}
    \qquad
    \subfigure[][]{\includegraphics[scale=0.7]{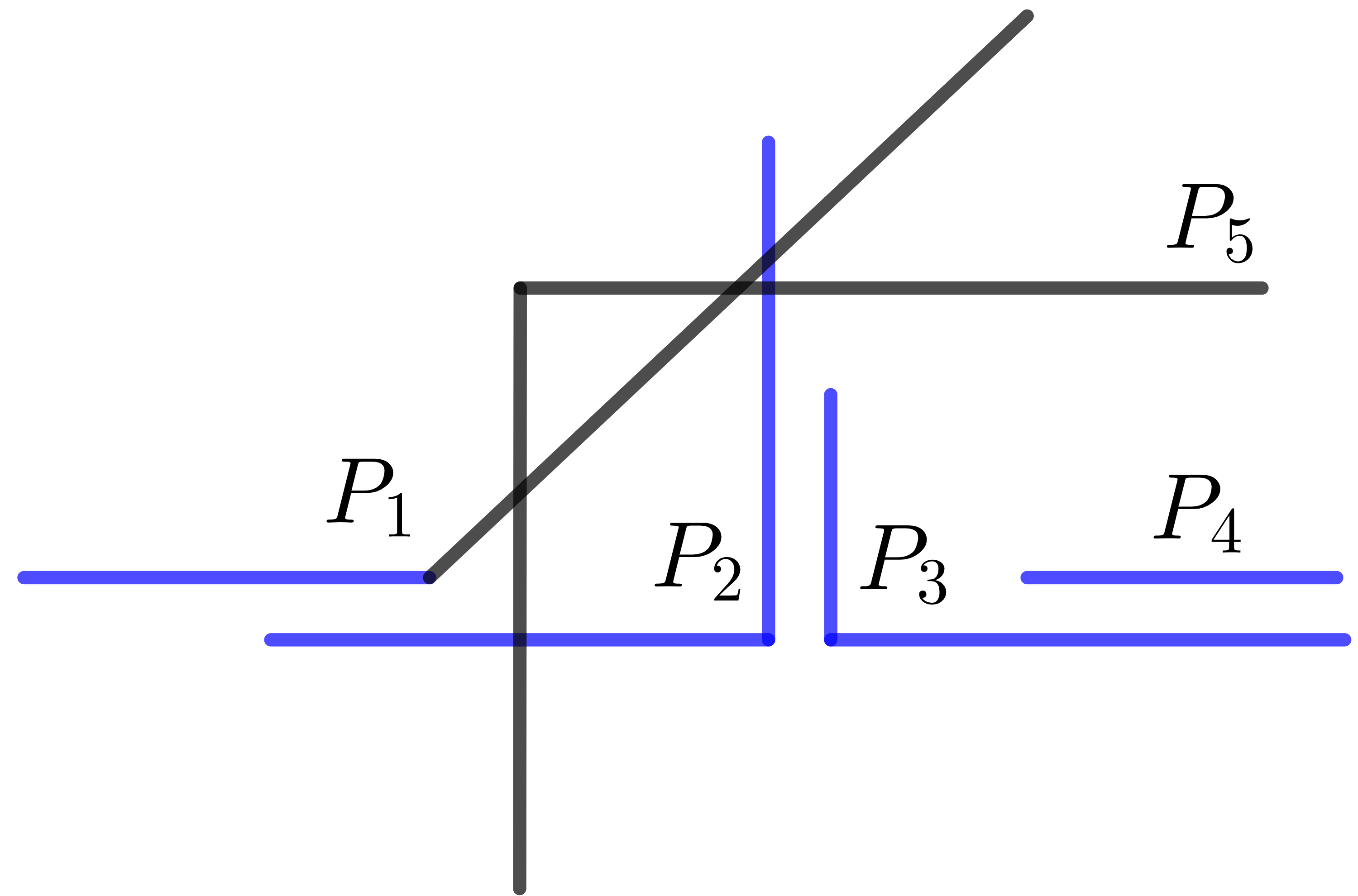} \label{fig: underlying_graph_example_b}}
    \caption{A graph and a collection of paths $\mathcal{P}=\{P_1, P_2, P_3, P_4, P_5\}$ forming a B$_1$-EPG\textsubscript{t} representation. The segments marked in blue are the ones forming $U(\mathcal{P})$.}
    \label{fig: underlying_graph_example}
\end{figure}

\begin{comment}

Let $\mathcal{P}$ be a B$_k$-EPG\textsubscript{t} representation of a graph $G$ on a triangular grid $\mathcal{G}$. We define $U(\mathcal{P})\subset\mathcal{G}$ as the \emph{underlying graph hosting the paths of $\mathcal{P}$}, such that $U(\mathcal{P})=\bigcup_{P\in\mathcal{P}} m(P)$, where $m(P)=\argmin_{P'\subseteq P}\{|P'|\mid P'\supseteq P\cap_e Q, \text{for all }Q\in\mathcal{P}, P\cap_e Q\neq\emptyset\}$, that is, $m(P)\subseteq P$ is the least subpath of $P$ that contains $P\cap_e Q$, for every $Q\in\mathcal{P}$ such that $P\cap_e Q\neq\emptyset$.

\end{comment}
The \emph{$k$-cycle} graph $C_k$, $k\geq3$, has $k$ vertices, $v_1,\ldots,v_k$, and edges $v_iv_{i+1}$ for all $1\leq i\leq k$, where addition is assumed to be modulo $k$. The \emph{$k$-sun} graph $S_k$, $k\geq3$, has $2k$ vertices, consisting of a cycle $x_1y_1,y_1x_2,x_2y_2,y_2x_3,\ldots,x_ky_k,y_kx_1$, called the \emph{outer cycle}, such that $\{y_1,\ldots,y_k\}$ is a clique, called the \emph{inner clique}. A \emph{complete bipartite graph} $K_{m,n}$ is such that $V(K_{m,n})=V_1\cup V_2$, where $V_1$ and $V_2$ are independent sets, such that $|V_1|=m$, $|V_2|=n$ and for every two vertices $v_1\in V_1$ and $v_2\in V_2$, $v_1v_2\in E(K_{m,n})$. The complete bipartite graph $K_{1,k}$ is called a \emph{$k$-star}, $k\geq3$. The $3$-star is called a \emph{claw}.

\begin{figure}[htb]
    \center
    \subfigure[]{\includegraphics[scale=0.8]{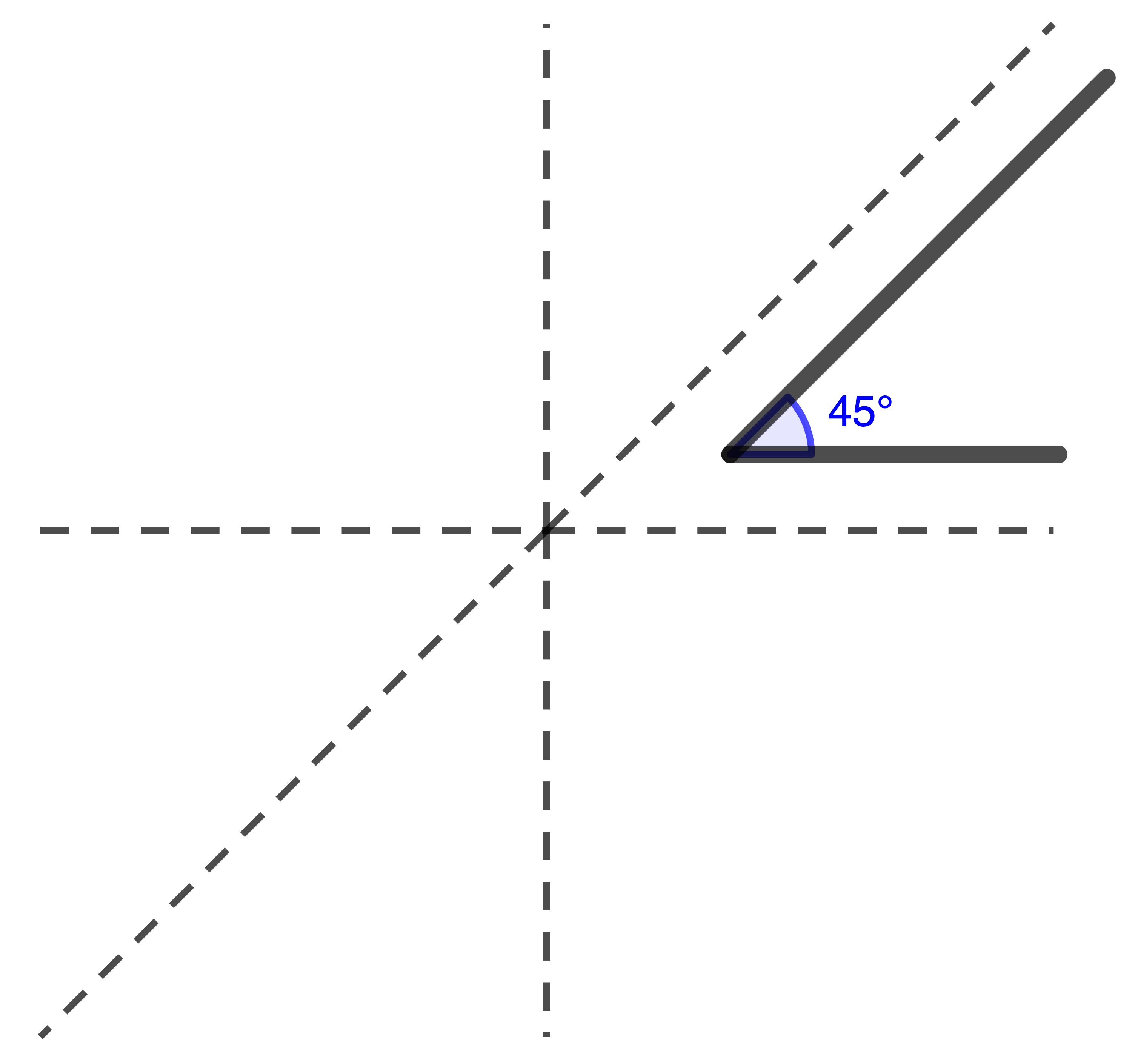} \label{fig: 45_angle}}
    \qquad
    \subfigure[]{\includegraphics[scale=0.8]{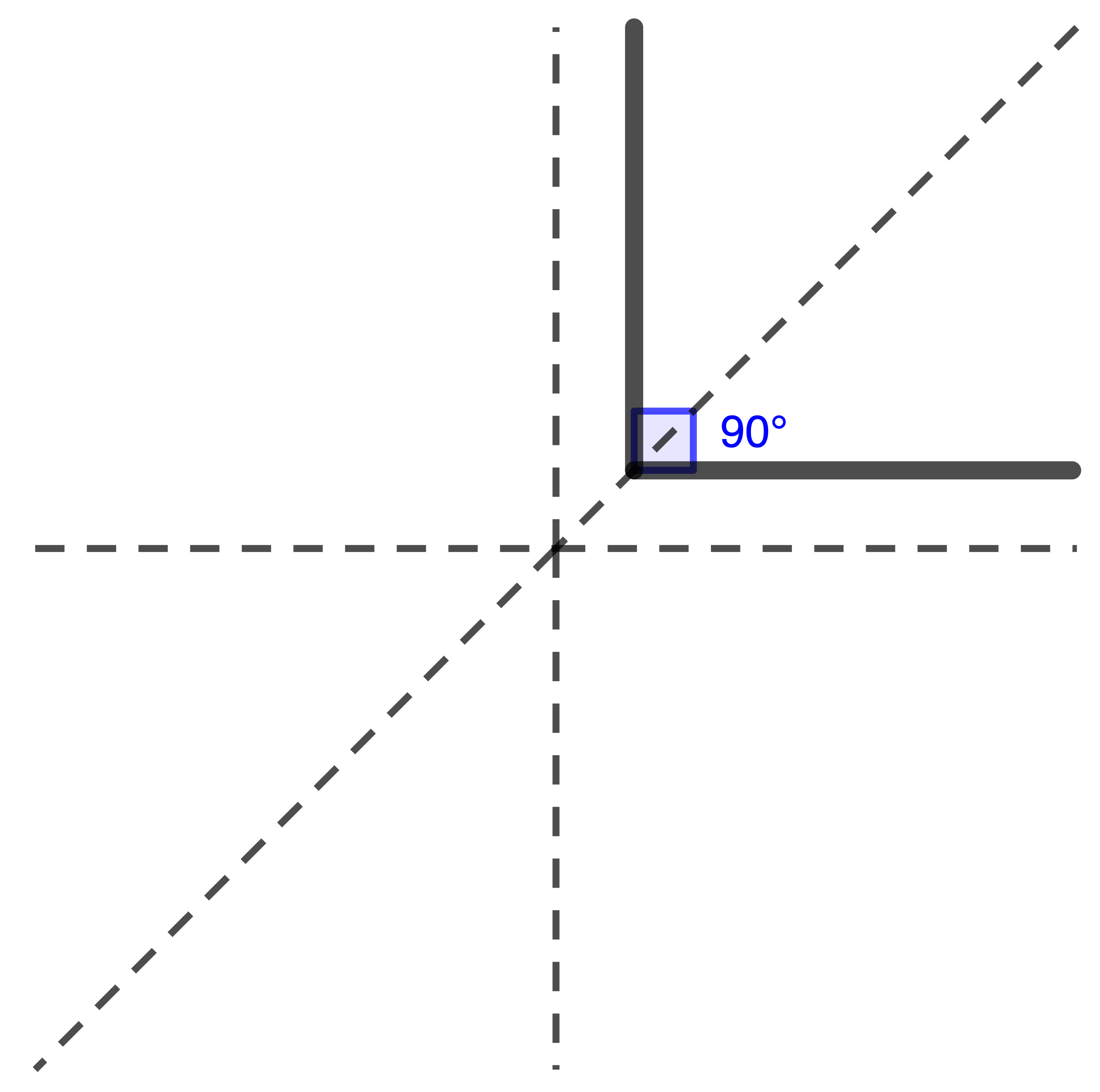} \label{fig: 90_angle}}
    \qquad
    \subfigure[]{\includegraphics[scale=0.8]{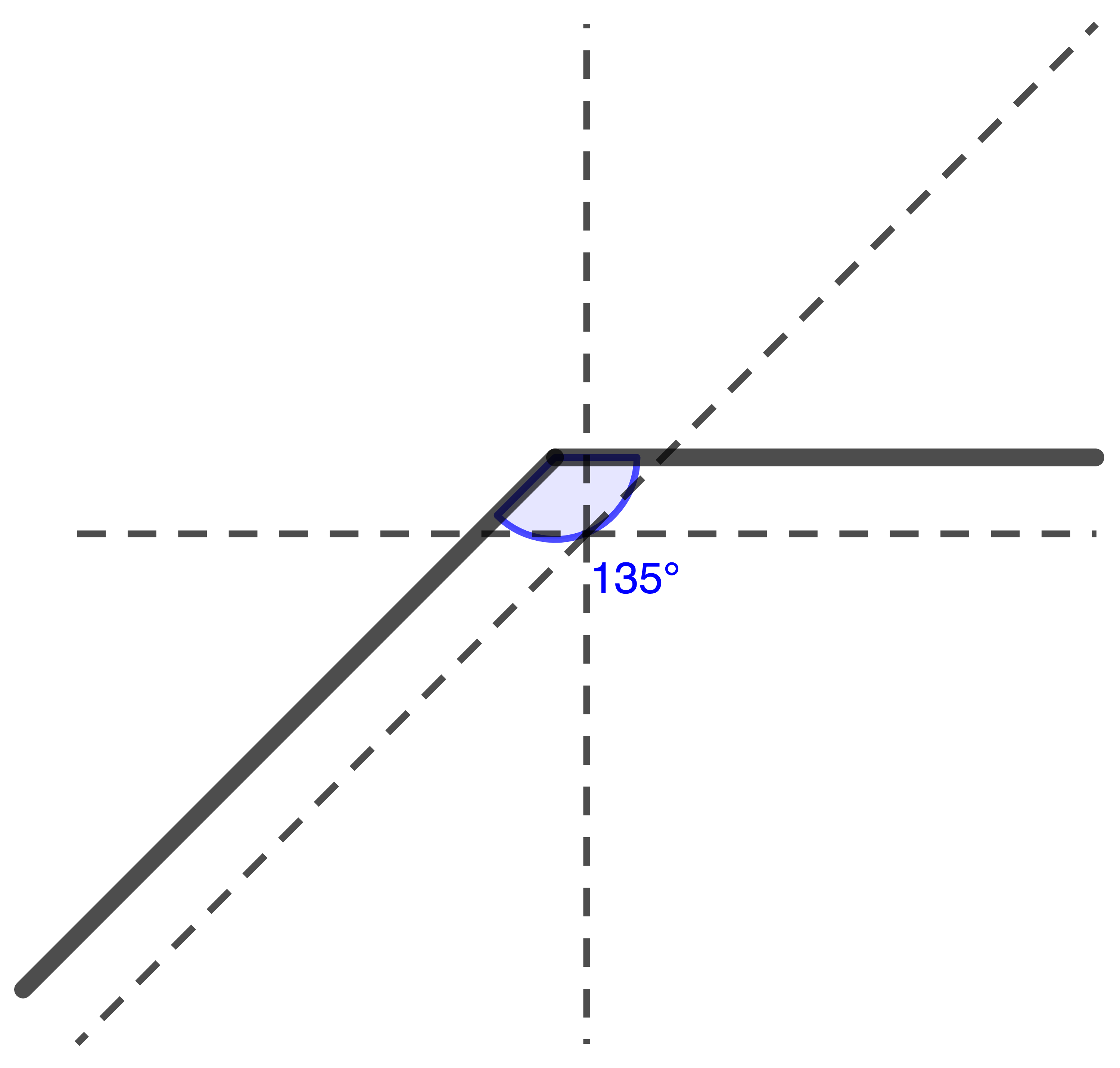} \label{fig: 135_angle}}
    \caption{The types of $1$-bend paths on the triangular grid: (a) narrow, (b) normal and (c) wide.}
    \label{fig: grid_angles}
\end{figure}
The following remarks are straight forward for a grid $\mathcal{G}$.
\begin{rmk} \label{1bend_int}
Two $1$-bend paths intersect in at most three segments. To see this, let $P_1=s_1\cup s_2$ and $P_2=s_3\cup s_4$ be two $1$-bend paths, such that $s_i$, $1\leq i\leq4$, are the segments of $\mathcal{G}$ forming $P_1$ and $P_2$. Since $P_1$ is a $1$-bend path, $s_1$ and $s_2$ must be in different directions. Since there are only three possible directions on the triangular grid, either $s_3$ or $s_4$ is parallel to $s_1$ or $s_2$.
\end{rmk}
\begin{rmk} \label{1bend_int_1}
Let $P_1=s_1\cup s_2$ and $P_2=s_3\cup s_4$ be $1$-bend paths, such that $s_i$, $1\leq i\leq4$, are the segments of $\mathcal{G}$ forming $P_1$ and $P_2$. If $P_1\cap_e P_2\neq\emptyset$, then either $P_1$ and $P_2$ have the same bend point and are of the same type, or $P_1\cap_e P_2\subseteq s_i$ for some $1\leq i\leq4$.
\end{rmk}
\begin{rmk} \label{1bend_int_2}
Let $P_1=s_1\cup s_2$ and $P_2=s_3\cup s_4$ be two $1$-bend paths, such that $s_i$, $1\leq i\leq4$, are the segments of $\mathcal{G}$ forming $P_1$ and $P_2$. Suppose $s_i$ and $s_j$ are on the same grid line $l$, for some $i\in\{1,2\}$ and $j\in\{3,4\}$, and $s_i\cap_v s_j=\{b\}$. If $P_1\cap_e P_2\neq\emptyset$, then $b$ is the bend point of $P_1$ and $P_2$, $b\in x$, for some $x\in P_1\cap_e P_2$ and $P_1\cap_e P_2\not\subset l$.
\end{rmk}
\begin{rmk} \label{1bend_int_3}
Let $l_1$ and $l_2$ be two grid lines and $P$ a path in $\mathcal{G}$.
\begin{itemize}
    \item If $l_1\cap_v l_2=\emptyset$, $P\cap_e l_1\neq\emptyset$ and $P\cap_e l_2\neq\emptyset$, then $P$ must have at least two bends. 
    \item If $l_1\cap_v l_2=\{b\}$, $P\cap_e l_1\neq\emptyset$, $P\cap_e l_2\neq\emptyset$ and $P$ has a single bend, then $b$ is the bend point of $P$ and one of the segments of $P$ is contained in $l_1$ and the other is contained in $l_2$.
\end{itemize}
\end{rmk}

\section{Helly and Strong Helly Numbers of B$_1$-EPG\textsubscript{\lowercase{t}} Graphs}
Let $\cal {F}$ be a family of subsets of some universal set $U$, and $h\geq 1$ be an integer.  We say that $\cal{F}$ is $h$-\emph{intersecting} if every subfamily of $h$ sets of $\cal {F}$ intersect. The \emph{core} of $\cal {F}$ is the intersection of all sets of $\cal {F}$, denoted by $core(\cal F)$. Note that $\cal {F}$ can be $h$-intersecting even if $core(\cal F) = \emptyset$.

The family $\cal{F}$ is $h$-\emph{Helly} if every $h$-intersecting subfamily $\cal{F'}$ of it satisfies $core(\cal{F'}) \neq \emptyset$~\cite{duchet1978propriete}. On the other hand, if for every subfamily $\cal{F'}$ of $\cal{F}$, there are $h$ subsets whose core equals the core of  $\cal {F'}$, then $\cal {F}$ is said to be \emph{strong} $h$-\emph{Helly}. Clearly, if $\cal {F}$ is $h$-Helly then it is $h'$-Helly, for all $h' \geq h$. Similarly, 
 if ${\cal F}$ is strong $h$-Helly then it is strong $h'$-Helly, for all $h' \geq h$.

The \emph{Helly number} of the family $\cal{F}$ is the least integer $h$, such that $\cal{F}$ is $h$-Helly. Similarly, the \emph{strong Helly number} of $\cal{F}$ is the least $h$, for which  $\cal{F}$ is strong $h$-Helly. It also follows that the strong Helly number of $\cal{F}$ is at least equal to its  Helly number.

A \emph{class} $\cal {C}$ of families $\cal {F}$  of subsets of some universal set $U$ is a subcollection  of the families  $\cal {F}$ of $U$. We say that $\cal C$ is a \emph{hereditary} class when it is closed under inclusion. The \emph{Helly number} of a class $\cal{C}$ of families $\cal{F}$ of subsets is the largest Helly number among the families $\cal {F}$. Similarly, the \emph{strong Helly number} of a class $\cal {C}$ is the largest strong Helly number of the families of $\cal {C}$.

If $\cal F$ is a family of subsets and $\cal C$ is a class of families, denote by $H(\cal F)$ and 
$H(\cal C)$,  the Helly numbers of $\cal F$ and $\cal C$, respectively, while  $sH({\cal F})$ and $sH({\cal C})$  represent the strong Helly numbers of $\cal F$ and $\cal C$.

 The Helly property known in the literature is when a family of subsets has Helly number $2$. It is well-known~\cite{golumbic1985edge1} that any collection of paths on a tree has Helly number $2$, any collection of intervals on a line has strong Helly number $2$, and any collection of B$_1$-paths on a rectangular grid has Helly number $4$~\cite{golumbic2013single}.

In \cite{santos2021helly}, the authors solved the problem of determining both the Helly and strong Helly numbers, for B$_k$-EPG, and B$_k$-VPG graphs, for each value $k$. Next, we use the same strategy to determine both the Helly and strong Helly numbers for B$_1$-EPG\textsubscript{t} graphs.

\subsection{The Helly Number of B$_1$-EPG\textsubscript{\lowercase{t}} Graphs}
The following theorem characterizes $h$-Helly families of subsets.

\begin{thm}[\cite{BD75}]\label{thm:BD}
A family $\cal{F}$  of subsets of the universal set $U$ is $h$-Helly if and only if for every subset $U' \subseteq U$, $|U'|= h+1$, the subfamily $\cal{F'}$ of $\cal{F}$, formed by the subsets containing at least $h$ of the $h+1$ elements of $U'$, has a non-empty core. 
\end{thm}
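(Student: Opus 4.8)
The plan is to prove the two implications separately: necessity is a short pigeonhole argument, while sufficiency is obtained from a minimal counterexample, with the displayed condition invoked exactly once at the very end.

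For necessity I would assume $\mathcal{F}$ is $h$-Helly, fix a set $U'\subseteq U$ with $|U'|=h+1$, and let $\mathcal{F}'$ be the subfamily of members of $\mathcal{F}$ containing at least $h$ elements of $U'$. It suffices to check that $\mathcal{F}'$ is $h$-intersecting, since then $h$-Hellyness immediately gives $core(\mathcal{F}')\neq\emptyset$. And $h$-intersecting is clear: any $h$ chosen members of $\mathcal{F}'$ each miss at most one point of $U'$, so together they miss at most $h$ of the $h+1$ points of $U'$, leaving a point of $U'$ common to all of them.

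For sufficiency I would argue by contradiction: assume the displayed condition holds, yet some $h$-intersecting subfamily of $\mathcal{F}$ has empty core. Restricting attention to finite subfamilies (the only case needed here, and automatic in the applications of the theorem), pick a counterexample $\mathcal{H}=\{G_1,\dots,G_m\}$ of minimum cardinality; since an $h$-element $h$-intersecting family has nonempty core, minimality forces $m\geq h+1$. By minimality each $\mathcal{H}\setminus\{G_i\}$ has a point $x_i\in\bigcap_{j\neq i}G_j$, and since $core(\mathcal{H})=\emptyset$ we get $x_i\notin G_i$; a one-line check (if $x_i=x_j$ with $i\neq j$ then $x_i\in G_j$, contradicting $x_j\notin G_j$) shows the $x_i$ are pairwise distinct. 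Then I would take $U'=\{x_1,\dots,x_{h+1}\}$, so $|U'|=h+1$: for $i\leq h+1$ the set $G_i$ contains the $h$ points $U'\setminus\{x_i\}$, and for $i>h+1$ it contains all of $U'$, so every member of $\mathcal{H}$ lies in the subfamily $\mathcal{F}'$ associated to this $U'$. The displayed condition gives $core(\mathcal{F}')\neq\emptyset$, which contradicts $core(\mathcal{F}')\subseteq core(\mathcal{H})=\emptyset$. Hence no such counterexample exists and $\mathcal{F}$ is $h$-Helly.

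I expect the crux to be in the sufficiency direction, specifically recognizing that the correct $(h+1)$-element witness set is precisely $\{x_1,\dots,x_{h+1}\}$, the points produced by deleting one set at a time from a minimal counterexample, and that this choice is exactly what forces the whole counterexample family into $\mathcal{F}'$. The only routine points to dispatch are the reduction to finite subfamilies and the distinctness of the $x_i$; necessity, by contrast, is pure counting.
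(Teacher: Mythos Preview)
The paper does not actually prove this theorem: it is quoted from \cite{BD75} and stated without proof, serving only as a tool for the subsequent results on Helly numbers. There is therefore no ``paper's own proof'' to compare against.

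That said, your argument is correct and is essentially the standard proof of the Berge--Duchet characterization. The necessity direction is exactly the pigeonhole count you describe. For sufficiency, your minimal-counterexample construction of the witness set $U'=\{x_1,\dots,x_{h+1}\}$ is the expected one, and your check that every $G_i$ lands in $\mathcal{F}'$ (and hence $core(\mathcal{F}')\subseteq core(\mathcal{H})=\emptyset$) is clean. The only caveat worth flagging explicitly, which you already note, is the reduction to finite subfamilies so that a counterexample of minimum cardinality exists; in the setting of this paper (finite collections of paths on a grid) this is indeed automatic.
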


The next theorem is central to our results.

\begin{thm}[\cite{santos2021helly}]\label{thm:minimal}
Let ${\cal C}$ be a hereditary class of families ${\cal F}$ of subsets of the universal set $U$, whose Helly number $H({\cal C})$ equals $h$. Then there exists a family ${\cal F'} \in {\cal C}$ with exactly $h$ subsets, satisfying the following condition: 

For each subset $P_i \in \cal {F'}$, there is exactly one distinct element $u_i \in U$, such that \\
$$u_i \not \in P_i,$$ 
but $u_i$ is contained in all  subsets 
$$P_j \in {\cal F'} \setminus P_i.$$
\end{thm}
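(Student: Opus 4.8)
The plan is to obtain ${\cal F}'$ as a suitably chosen subfamily of a \emph{minimal} family that witnesses the failure of the $(h-1)$-Helly property. (We take $h\geq 2$, the value $h=1$ being degenerate.) Since $H({\cal C})=h$, some family in ${\cal C}$ has Helly number exactly $h$ and hence is not $(h-1)$-Helly; by definition it therefore has a subfamily that is $(h-1)$-intersecting but has empty core, and by heredity this subfamily already lies in ${\cal C}$. Call a member of ${\cal C}$ \emph{critical} if it is $(h-1)$-intersecting with empty core. Critical families exist, so I would fix one, ${\cal F}$, with the fewest subsets, and put $m:=|{\cal F}|$. A preliminary observation is that $m\geq h$: if $m\leq h-1$, then $(h-1)$-intersectingness forces the members of ${\cal F}$ to have a common point, so $core({\cal F})\neq\emptyset$, contradicting criticality.

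The key step is to show that $core({\cal F}\setminus\{P\})\neq\emptyset$ for every $P\in{\cal F}$. The family ${\cal F}\setminus\{P\}$ is again $(h-1)$-intersecting, it still lies in ${\cal C}$ by heredity, and it has $m-1\geq h-1$ subsets. If $m-1=h-1$, then ${\cal F}\setminus\{P\}$ is a family of $h-1$ mutually intersecting sets, so its core is nonempty directly; if $m-1\geq h$, then by the minimality of ${\cal F}$ the smaller family ${\cal F}\setminus\{P\}$ cannot be critical, and since it is $(h-1)$-intersecting it can fail criticality only by having a nonempty core. In either case I pick $x_P\in core({\cal F}\setminus\{P\})$, so $x_P$ lies in every member of ${\cal F}$ other than $P$; moreover $x_P\notin P$, for otherwise $x_P\in core({\cal F})=\emptyset$.

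To finish, choose any $h$ subsets $P_1,\ldots,P_h$ of ${\cal F}$ (possible since $m\geq h$), set ${\cal F}'=\{P_1,\ldots,P_h\}$, and let $u_i:=x_{P_i}$. For $j\neq i$ the set $P_j$ belongs to ${\cal F}\setminus\{P_i\}$, hence $u_i\in P_j$, while $u_i\notin P_i$; in particular $u_1,\ldots,u_h$ are pairwise distinct, since each $u_i$ separates $P_i$ from all the other $P_j$. Thus ${\cal F}'$ has exactly $h$ subsets with the asserted star-like structure, and ${\cal F}'\in{\cal C}$ by heredity, as required.

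I expect the main difficulty to lie not in any calculation but in the logical bookkeeping: isolating a critical family inside ${\cal C}$ from the definition of $H({\cal C})$ (this is exactly where heredity is used), the case split $m=h$ versus $m>h$ that makes every one-deleted subfamily have a nonempty core, and the realization that a minimal critical family may have \emph{more} than $h$ subsets -- for instance $\{\,U'\setminus\{u\}\;:\;u\in U'\,\}$ with $|U'|=h+1$ is critical and minimal of size $h+1$ -- so that the final step must \emph{select} exactly $h$ of its subsets and check that the star structure survives the restriction. The characterization in Theorem~\ref{thm:BD} offers an alternative route to the first step but is not needed here.
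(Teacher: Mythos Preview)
Your argument is correct, but it proceeds differently from the paper's. The paper (following \cite{santos2021helly}) first invokes heredity to pass directly to a family ${\cal F'}\in{\cal C}$ with Helly number exactly $h$ and \emph{exactly} $h$ subsets, and then applies the Berge--Duchet characterization (Theorem~\ref{thm:BD}) to that family to extract the elements $u_1,\ldots,u_h$. You instead run a self-contained minimality argument: take a smallest critical family, show every one-deletion has nonempty core, and pick the $u_i$'s as witnesses to those cores. Your route is more elementary in that it never calls on Theorem~\ref{thm:BD}; the paper's route is shorter once that theorem is available. One remark: your worry that a minimal critical family inside ${\cal C}$ might have $m>h$ subsets is actually unfounded under the hypothesis $H({\cal C})=h$. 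Indeed, your own deletion step shows that any $m-1$ members of $\cal F$ have a common point, so $\cal F$ is $(m-1)$-intersecting with empty core and hence has Helly number at least $m$; since $H({\cal C})=h$ this forces $m=h$. Your example $\{\,U'\setminus\{u\}:u\in U'\,\}$ with $|U'|=h+1$ has Helly number $h+1$ and therefore cannot lie in such a ${\cal C}$. This does not affect correctness---the case split $m=h$ versus $m>h$ simply has a vacuous second branch---but it lets you streamline the write-up.
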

\begin{comment}

\begin{proof}
Let ${\cal C}$ be a class of families ${\cal F}$ of subsets $P$, each subset formed by elements $u \in U$, such that the Helly number $H({\cal C})$ equals $h$. Then each family ${\cal F} \in {\cal C}$ satisfies $H({\cal F}) \leq h$. Consider a family ${\cal F'} \in {\cal C}$  whose Helly number is exactly $h$, and containing exactly $h$ subsets. Such a family must exist since ${\cal C}$ is hereditary. Since $H({\cal F'}) = h$, $\cal F'$ is $h$-intersecting, and therefore $(h-1)$-intersecting. Furthermore, ${\cal F'}$ is not $(h-1)$-Helly. Applying  Theorem \ref{thm:BD}, we conclude that there are $h$ elements $U' = \{u_1, \ldots, u_h\} \subset U$, such that each set of ${\cal F'}$ contains at least $h-1$ elements of $U'$. Since $H({\cal F'}) > h-1$, $core({\cal F'}) = \emptyset$ and therefore there is no common element among the sets of $\cal F'$. In particular, since each set $P_i \in {\cal F'}$ contains at least $h-1$ elements of $U'$, and $core(\cal F') = \emptyset$, we can choose $h$ subsets $P_i$, in which each of them misses a distinct element $u_i \in U'$. Then for each subset $P_i \in \cal F$, there exists some element $u_i \not \in P_i$, but $u_i \in P_j$, for all $P_j \in \cal F'$, $j \neq i$.
\end{proof}

\end{comment}

Let $\cal{ F'}$ be as in the previous theorem. It is simple to conclude that the removal of any subset from $\cal {F'}$ makes it an $(h-1)$-Helly family. Therefore we  call $\cal {F'}$ a \emph{minimal non}-$(h-1)$-\emph{Helly family}. Moreover, the element $u_i \not \in P_i$, contained in all subsets $P_j \in {\cal{F'}} \setminus P_i$, except $P_i$, is the \emph{$h$-non-representative} of $P_i$.

We can apply this notion of minimal families of subsets for the B$_1$-EPG\textsubscript{t} representations. Note that B$_1$-EPG\textsubscript{t} is a hereditary class.

\begin{lem} \label{lemma:3colin}
Let $\cal {F}$ be a minimal non-$(h-1)$-Helly family of paths on a triangular grid containing three co-linear non-representative edges. Then, $\cal{F}$ must contain paths with at least three bends.
\end{lem}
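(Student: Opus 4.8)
The plan is to translate the hypotheses into a statement about three non-representative edges lying on one grid line, single out the one that lies between the other two, and show that the path omitting it is forced to leave that line and come back --- a manoeuvre that costs three bends on a triangular grid.

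First I would recall the structure furnished by Theorem~\ref{thm:minimal}: writing ${\cal F}=\{P_1,\dots,P_h\}$, there are edges $e_1,\dots,e_h$ with $e_i\notin P_i$ but $e_i\in P_j$ for every $j\neq i$, and by hypothesis three of them, say $e_1,e_2,e_3$, lie on one grid line $l$. These three edges are pairwise distinct: if, say, $e_1=e_2$, then $e_1\in P_2$ and simultaneously $e_1=e_2\notin P_2$, a contradiction (and symmetrically for the other pairs). Hence along $l$ they occur in a definite order; let $e_j$ be the one strictly between the other two and set $\{i,k\}=\{1,2,3\}\setminus\{j\}$. The path to examine is $P_j$: it contains $e_i$ and $e_k$ (because $j\neq i,k$), it does not contain $e_j$ (its own non-representative), and $e_j$ is an edge of $l$ strictly between $e_i$ and $e_k$.

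The core is a purely geometric claim: a path $P$ on a triangular grid that uses two distinct co-linear edges $f,f'$ of a line $l$ but omits an edge $f''$ of $l$ lying strictly between them has at least three bends. To see this, follow the subpath $R$ of $P$ joining $f$ to $f'$. If every edge of $R$ lay on $l$, then $R$, being a simple path on a line, would be the straight stretch of $l$ from one end of $f$ to the far end of $f'$ and hence would contain $f''$; so $R$ must leave $l$. Let $v_c$ be the last vertex of the initial stretch of $R$ whose edges all lie on $l$: there is a bend at $v_c$, from an $l$-edge to a non-$l$-edge. Symmetrically, let $v_d$ be the first vertex of the final all-on-$l$ stretch of $R$; there is a bend at $v_d$, and a short index check shows $v_d$ comes strictly after $v_c$. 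Finally, the portion of $R$ between $v_c$ and $v_d$ cannot be a single straight segment, since such a segment would have to lie along a grid line through the two distinct points $v_c,v_d\in l$ in a direction other than that of $l$ --- impossible, because a grid line not parallel to $l$ meets $l$ in at most one point, and a simple path cannot immediately retrace an edge. Hence there is a third bend strictly between $v_c$ and $v_d$, so $P$ has at least three (distinct) bends. Applying this with $P=P_j$, $f=e_i$, $f'=e_k$, $f''=e_j$ yields the lemma.

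The main obstacle is precisely this geometric claim, specifically making rigorous that one cannot rejoin a grid line after leaving it using fewer than three bends. The ingredients are that the triangular grid has only three directions, that any straight segment not parallel to $l$ shares at most one point with $l$ (so it cannot bridge two distinct points of $l$), and that a simple path never doubles back on an edge; together these force the ``leave, turn, return'' pattern. The remaining steps --- unpacking the definition of non-representative edge, verifying that the $e_i$ are distinct, and ordering three collinear edges --- are routine.
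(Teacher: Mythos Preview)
Your proposal is correct and follows exactly the same approach as the paper: pick the middle of the three co-linear non-representative edges, note that the corresponding path must contain the two outer edges while avoiding the middle one, and conclude that leaving and returning to the line forces three bends. The paper's proof asserts this last geometric fact in a single sentence, whereas you supply the details (the two bends at the exit and re-entry points, plus the impossibility of a single off-line segment connecting two distinct points of $l$), so your version is simply a fleshed-out rendition of the same argument.
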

\begin{proof}
Let $u_i$ be the middle one of the three co-linear non-representative edges. It corresponds to the path $P_i$ of $\cal {F}$, not containing $u_i$.
Then $P_i$ must go through the other two non-representative edges, but it cannot include the middle edge. Therefore path $P_i$ must leave the common line of the grid, containing those three representatives edges, and return to that same line, thus requiring at least three bends.
\end{proof}

\begin{lem} \label{lemma:Lwit}
Let $\cal{F}$ be a minimal non-$(h-1)$-Helly family of paths on a grid with  Helly number $H({\cal F}) \geq 4$. If ${\cal F}$ contains three non-representative edges that lie on a common B$_1$-subpath $P_i$, then $\cal {F}$ must have some path with at least two bends.
\end{lem}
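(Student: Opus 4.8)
The plan is to name the objects carefully. Write $Q$ for the B$_1$-subpath called $P_i$ in the statement, let $u_1,u_2,u_3$ be the three non-representative edges lying on $Q$, and let $P_1,P_2,P_3\in\mathcal{F}$ be the paths of which $u_1,u_2,u_3$ are, respectively, the $h$-non-representatives. By the defining property of a minimal non-$(h-1)$-Helly family, $u_j\notin P_j$ while $u_j\in P_k$ for every $k\ne j$; in particular $P_1$ contains $u_2$ and $u_3$ but not $u_1$, and symmetrically for $P_2$ and $P_3$. Since $Q$ has at most one bend it is the union of at most two grid segments $t_1\subseteq\ell_1$ and $t_2\subseteq\ell_2$; if $Q$ actually bends then $\ell_1\ne\ell_2$ and these two grid lines meet precisely at the bend point $b$ of $Q$. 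Every edge of $Q$ lies on exactly one of $\ell_1,\ell_2$, so by the pigeonhole principle two of $u_1,u_2,u_3$ lie on a common segment; relabel so that $u_1,u_2\in t_1$.

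First I would dispose of the case in which $u_3$ also lies on $t_1$ (this includes the case where $Q$ is straight): now $u_1,u_2,u_3$ are co-linear on $\ell_1$, and, as in the proof of Lemma~\ref{lemma:3colin}, the path $P_m$ corresponding to the middle one of the three edges along $\ell_1$ must leave $\ell_1$ and later return to it --- no single segment of $P_m$ contained in $\ell_1$ can contain both outer edges without also containing $u_m$. Leaving and returning are two distinct bends, so $P_m$ has at least two bends and we are done in this case.

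The substantive case is $u_3\in t_2$, so that $u_1,u_2$ lie on $\ell_1$ and $u_3$ lies on $\ell_2$. After possibly swapping the names of $u_1$ and $u_2$, assume that $u_1$ precedes $u_2$ when $t_1$ is traversed starting from $b$; I claim $P_1$ then has at least two bends. Indeed $P_1$ contains an edge of $\ell_1$ (namely $u_2$) and an edge of $\ell_2$ (namely $u_3$), so $P_1$ is not straight; and if $P_1$ had exactly one bend then, since $\ell_1\cap_v\ell_2=\{b\}$, the second item of Remark~\ref{1bend_int_3} forces the bend point of $P_1$ to be $b$ and the two segments of $P_1$ to lie on $\ell_1$ and on $\ell_2$. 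The segment of $P_1$ lying on $\ell_1$ then has $b$ as an endpoint and contains $u_2$, hence contains the whole stretch of $\ell_1$ between $b$ and $u_2$, and in particular it contains $u_1$ --- contradicting $u_1\notin P_1$. Thus $P_1$ has at least two bends, which completes the proof. I expect the only delicate point to be this last case: one must check that the hypotheses of Remark~\ref{1bend_int_3} genuinely apply (the two lines carrying the segments of $Q$ meet at a single point, which the remark then forces to coincide with the bend point of $P_1$) and must read off the order of $u_1,u_2$ relative to $b$ from the fact that both edges lie on the segment $t_1$ of $Q$ that ends at $b$; everything else is bookkeeping with the definition of non-representative edges, and the hypothesis $H(\mathcal{F})\ge 4$ is used only insofar as it guarantees that the three non-representative edges come from three (distinct) paths.
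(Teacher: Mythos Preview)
Your proof is correct and follows essentially the same idea as the paper's: single out the non-representative edge that lies between the other two along the B$_1$-path, and observe that the corresponding path must contain both outer edges while avoiding the middle one, which forces at least two bends. The paper does this in two lines by simply ordering $u_1,u_2,u_3$ along $P_4$ with $u_2$ in the middle and asserting that $P_2$ then needs two bends; you unpack the same argument via a pigeonhole split into the co-linear case and the two-segment case, and invoke Remark~\ref{1bend_int_3} explicitly in the latter. The added detail is fine and makes the geometric reason transparent, but the underlying argument is the same.
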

\begin{proof}
Since $\cal{F}$ is a minimal $(h-1)$-family having Helly number bigger or equal than $4$, it contains at least four paths. Without loss of generality, let  $u_1, u_2, u_3$ be the three non-representative edges contained in $P_4$ and such that $u_2$ lies between $u_1$ and $u_3$ in $P_4$. Then path $P_2$ must contain $u_1$ and $u_3$, but avoid $u_2$, thus requiring at least two bends.
\end{proof}

\begin{thm} \label{thm:helly_b1epgt}
H(B$_1$-EPG\textsubscript{t}) = 3.
\end{thm}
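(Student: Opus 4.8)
I would prove the two bounds $H($B$_1$-EPG\textsubscript{t}$)\ge 3$ and $H($B$_1$-EPG\textsubscript{t}$)\le 3$ separately. For the lower bound it suffices to exhibit a single family of B$_1$-paths on a triangular grid that is $2$-intersecting but has empty core, because such a family is a $2$-intersecting subfamily of itself with empty core and hence is not $2$-Helly. Take a unit triangle of the grid with vertices $a,b,c$, whose sides $ab$, $bc$, $ca$ lie in the three pairwise distinct grid directions, and let $P_1=ab\cup bc$, $P_2=bc\cup ca$ and $P_3=ca\cup ab$ be the three $1$-bend paths formed by pairs of sides (a B$_1$-EPG\textsubscript{t} representation of $C_3$). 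Any two of these paths share exactly one grid edge, so $\{P_1,P_2,P_3\}$ is $2$-intersecting, while no grid edge lies on all three; thus the core is empty and $H($B$_1$-EPG\textsubscript{t}$)\ge 3$.

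For the upper bound I would argue by contradiction, in the spirit of \cite{santos2021helly}. Suppose $h:=H($B$_1$-EPG\textsubscript{t}$)\ge 4$. Since B$_1$-EPG\textsubscript{t} is hereditary, Theorem~\ref{thm:minimal} yields a minimal non-$(h-1)$-Helly family $\mathcal{F}'=\{P_1,\dots,P_h\}$ of B$_1$-paths on a triangular grid, with distinct non-representative edges $u_1,\dots,u_h$ satisfying $u_i\notin P_i$ and $u_i\in P_j$ for all $j\ne i$, and with $H(\mathcal{F}')=h\ge 4$. Because $h\ge 4$, the path $P_1$ contains the three distinct non-representative edges $u_2,u_3,u_4$, and $P_1$ is itself a B$_1$-path. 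Hence $\mathcal{F}'$ has Helly number at least $4$ and contains three non-representative edges lying on a common B$_1$-subpath (namely $P_1$), so Lemma~\ref{lemma:Lwit} applies and forces $\mathcal{F}'$ to contain a path with at least two bends --- contradicting the fact that every path in a B$_1$-EPG\textsubscript{t} representation has at most one bend. Therefore $h\le 3$, and with the lower bound we get $H($B$_1$-EPG\textsubscript{t}$)=3$.

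Given Theorem~\ref{thm:minimal} and Lemmas~\ref{lemma:3colin} and~\ref{lemma:Lwit}, the proof is essentially an assembly of known pieces, so I do not anticipate a deep obstacle. The step that most deserves care is confirming that the minimal family supplied by Theorem~\ref{thm:minimal} really satisfies both hypotheses of Lemma~\ref{lemma:Lwit} --- that $H(\mathcal{F}')\ge 4$ and that three of its non-representative edges lie on one $1$-bend path --- after which the clash with the ``each path has at most one bend'' restriction closes the argument at once. A secondary bookkeeping point is the existence of a finite value $h=H($B$_1$-EPG\textsubscript{t}$)$ on which to run the contradiction; this is harmless, since the argument rules out every $h\ge 4$, and in any case one can extract a minimal non-$3$-Helly subfamily directly from any family whose Helly number is at least $4$.
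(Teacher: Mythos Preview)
Your proof is correct and follows essentially the same strategy as the paper: a small pairwise-intersecting family with empty core for the lower bound, and for the upper bound the minimal non-$(h-1)$-Helly family from Theorem~\ref{thm:minimal} together with the lemmas on non-representative edges. Two minor differences are worth noting: your lower-bound witness is the flag-type triangle configuration rather than the claw-clique the paper cites (both work equally well), and your upper bound applies Lemma~\ref{lemma:Lwit} directly to the B$_1$-path $P_1$, whereas the paper first splits off the co-linear case via Lemma~\ref{lemma:3colin} --- your observation that $P_1$ itself is a B$_1$-subpath containing all three non-representative edges makes that case split unnecessary, so your argument is in fact slightly cleaner.
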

\begin{proof}
Let $\cal{F}$ be a family of three $1$-bend paths that pairwise intersect but which have no common edge, as depicted in Figure~\ref{fig:claw-example-epgt}. Then $\cal{F}$ is a $2$-intersecting B$_1$-EPG\textsubscript{t} family of three paths, having an empty core. 
Furthermore, removing any of the paths from $\cal{F}$ makes its core become non-empty. Therefore, $\cal{F}$ is a minimal non-$2$-Helly family and $H(\text{B$_1$-EPG\textsubscript{t}})\geq3$.

Assume by contradiction that the Helly number of B$_1$-paths is $h > 3$. In this case, consider a minimal non-$(h-1)$-Helly family of $\cal F$ of B$_1$-paths. Then $\cal F$ contains at least $h$ paths. Any path $P_1 \in \cal{F}$ must contain $h-1$ non-representative edges corresponding to the $h-1$ distinct paths of $\cal F$ other than $P_1$. Since $h-1 \geq 3$, $P_1$ contains at least three distinct non-representative edges $u_2, u_3, u_4 \in P_i$, with $u_3$ lying  between $u_2$ and $u_4$ in the path.

If $u_2$, $u_3$ and $u_4$ are co-linear, then by Lemma~\ref{lemma:3colin}, $P_3 \in \cal{F}$ must contain at least three bends. Otherwise, the edges must lie on $P_1$ which has a single bend. Thus, it follows from Lemma~\ref{lemma:Lwit} that $P_3$ has two bends. In any situation, a contradiction arises, implying that $H({\cal F}) \leq 3$.

This concludes the proof of the theorem.
\end{proof}

\subsection{The Strong Helly Number of B$_1$-EPG\textsubscript{\lowercase{t}} Graphs}
In this section, we determine the strong Helly number of B$_1$-EPG\textsubscript{t} graphs. In~\cite{santos2021helly}, the authors obtained the following result.

\begin{thm}[\cite{santos2021helly}]\label{thm:minimal-strong}

Let ${\cal C}$ be a hereditary class of families $\cal F$ of subsets of the universal set $U$, whose strong Helly number $sH({\cal C})$ equals $h$. Then there exists a family ${\cal F'} \in {\cal C}$ with exactly $h$ subsets satisfying the following condition: 

For each subset $P_i \in \cal {F'}$, there is exactly one distinct element $u_i \in U$, such that \\
$$u_i \not \in P_i,$$ 
but $u_i$ is contained in all  subsets 
$$P_j \in {\cal F'} \setminus P_i.$$
\end{thm}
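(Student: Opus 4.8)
The plan is to run a minimum‑counterexample argument in the spirit of Theorem~\ref{thm:minimal}, but working directly from the definition of the strong Helly number together with heredity, so that (unlike in the Helly case) no analogue of Theorem~\ref{thm:BD} is needed. Since $sH({\cal C}) = h > h-1$, not every family of ${\cal C}$ is strong $(h-1)$-Helly, so the collection of families in ${\cal C}$ that are \emph{not} strong $(h-1)$-Helly is nonempty; pick ${\cal G} = \{P_1,\dots,P_m\} \in {\cal C}$ of minimum cardinality among them. This forces $m \ge h$ (a family with at most $h-1$ sets is strong $(h-1)$-Helly, witnessed by itself), and by heredity every proper subfamily of ${\cal G}$ lies in ${\cal C}$; by minimality of ${\cal G}$ each such proper subfamily is strong $(h-1)$-Helly, in particular so is each ${\cal G}\setminus\{P_i\}$.

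The crux is to show $core({\cal G}\setminus\{P_i\}) \supsetneq core({\cal G})$ for every $i$. If instead $core({\cal G}\setminus\{P_i\}) = core({\cal G})$ for some $i$, then applying the strong $(h-1)$-Helly property of ${\cal G}\setminus\{P_i\}$ to itself yields a subfamily ${\cal H} \subseteq {\cal G}\setminus\{P_i\} \subseteq {\cal G}$ with at most $h-1$ sets and $core({\cal H}) = core({\cal G}\setminus\{P_i\}) = core({\cal G})$, contradicting that ${\cal G}$ is not strong $(h-1)$-Helly. Since $core({\cal G}) \subseteq core({\cal G}\setminus\{P_i\})$ always, the inclusion is strict, so I can choose $u_i \in core({\cal G}\setminus\{P_i\}) \setminus core({\cal G})$. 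Then $u_i$ lies in every $P_j$ with $j \ne i$, whence $u_i \notin P_i$ (otherwise $u_i$ would lie in $core({\cal G})$); and the $u_i$ are pairwise distinct, because $u_i \notin P_i$ while $u_k \in P_i$ for every $k \ne i$.

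It remains to prove $m = h$, after which ${\cal F}' := {\cal G}$ with distinguished elements $u_1,\dots,u_h$ is exactly the family asserted. Suppose $m > h$ and put ${\cal G}' := \{P_1,\dots,P_h\} \subseteq {\cal G}$. Any subfamily of ${\cal G}'$ with at most $h-1$ sets omits some $P_i$ with $i \le h$, hence is contained in ${\cal G}'\setminus\{P_i\}$ and so has $u_i$ in its core; but $u_i \notin core({\cal G}')$ since $u_i \notin P_i \in {\cal G}'$. Thus no subfamily of ${\cal G}'$ of size at most $h-1$ attains $core({\cal G}')$, so ${\cal G}' \in {\cal C}$ is not strong $(h-1)$-Helly while $|{\cal G}'| = h < m$, contradicting the minimality of ${\cal G}$. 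The step needing the most care is the interplay with the definition of strong $(h-1)$-Helly for small subfamilies --- notably the degenerate range $m=h$, where each ${\cal G}\setminus\{P_i\}$ is strong $(h-1)$-Helly for trivial cardinality reasons, and the need in the last step to exclude subfamilies of \emph{every} size at most $h-1$, not just those of size exactly $h-1$; once this bookkeeping is settled the whole argument rests only on heredity and the minimal choice of ${\cal G}$.
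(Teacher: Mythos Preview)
Your argument is correct and follows essentially the same idea as the paper's proof: both extract, for each $P_i$ in a suitable size-$h$ family, an element $u_i \in core(\mathcal{F}'\setminus\{P_i\}) \setminus core(\mathcal{F}')$, which is exactly the ``non-representative'' element required. The only organizational difference is how the size is pinned down to $h$: the paper uses the fact that $\mathcal{C}$ is strong $h$-Helly to directly select $h$ sets achieving the core of a witnessing subfamily, whereas you run a minimum-counterexample argument and then rule out $m>h$ by exhibiting a smaller witness; both routes are short and either is acceptable here.
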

\begin{comment}

\begin{proof}
The strong Helly number of ${\cal C}$ is $h$ and not $h - 1$, so that  there must exist some family ${\cal F} \in {\cal C}$ whose strong Helly number is exactly $h$, that is, $\cal F$  contains $h$ subsets $P_i$ whose intersection equals $core(\cal F')$ but is such that no  $h-1$ of its subsets have the same intersection. In particular, let $\cal F'$ be the family containing exactly the $h$ subsets $P_i$ described above. Such a family must exist, since $\cal C$ is hereditary. Then each $P_i$ does not contain at least one element $u_i$ in the intersection of the remaining $h-1$ subsets $P_j$, $j \ne i$, since the intersection of these $h-1$ subsets must not be equal to the $core(\cal F')$.
\end{proof}

\end{comment}

Again, if we consider the family $\cal F'$ described in the theorem above it is simple to conclude that the removal of any subset from $\cal {F'}$ turns it $(h-1)$-strong Helly. Then call $\cal {F'}$ a \emph{minimal} non-$(h-1)$-strong Helly family. Moreover, the element $u_i \not \in P_i$, contained in all subsets $P_j \in {\cal{F'}} \setminus P_i$, except $P_i$, is the \emph{$h$-non-representative} of $P_i$.

As before, we employ the above minimal families of subsets, applied to paths on a triangular grid.
\begin{thm}
sH(B$_1$-EPG\textsubscript{t}) = 3.
\end{thm}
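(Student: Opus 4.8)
The plan is to prove the two inequalities $sH(\text{B}_1\text{-EPG}_\text{t}) \ge 3$ and $sH(\text{B}_1\text{-EPG}_\text{t}) \le 3$ separately, in both cases recycling the machinery already set up for the Helly number. For the lower bound I would invoke that the strong Helly number of a family is never below its Helly number, together with $H(\text{B}_1\text{-EPG}_\text{t}) = 3$ from Theorem~\ref{thm:helly_b1epgt}; more concretely, the family $\mathcal{F}$ of three pairwise-intersecting $1$-bend paths with empty core depicted in Figure~\ref{fig:claw-example-epgt} already witnesses this. Indeed, taking the subfamily $\mathcal{F}' = \mathcal{F}$ we have $core(\mathcal{F}') = \emptyset$, but no two of the three paths have empty intersection (any two of them share an edge), so no $2$ of its members reproduce $core(\mathcal{F}')$; hence $\mathcal{F}$ is not strong $2$-Helly and $sH(\text{B}_1\text{-EPG}_\text{t}) \ge 3$.

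For the upper bound I would argue by contradiction, mirroring the proof of Theorem~\ref{thm:helly_b1epgt} but with Theorem~\ref{thm:minimal-strong} in place of Theorem~\ref{thm:minimal}. Suppose $sH(\text{B}_1\text{-EPG}_\text{t}) = h \ge 4$. By Theorem~\ref{thm:minimal-strong} there is a family $\mathcal{F}'$ of exactly $h$ $1$-bend paths such that each $P_i \in \mathcal{F}'$ has a distinct $h$-non-representative edge $u_i$ with $u_i \notin P_i$ but $u_i \in P_j$ for every $j \ne i$. Fixing any $P_1 \in \mathcal{F}'$, it must contain the $h-1 \ge 3$ non-representative edges of all the other paths; choose three of them, say $u_2, u_3, u_4$, ordered so that $u_3$ lies between $u_2$ and $u_4$ along $P_1$.

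Then I would split into the same two cases as before. If $u_2, u_3, u_4$ are co-linear, Lemma~\ref{lemma:3colin} forces $\mathcal{F}'$ to contain a path with at least three bends; otherwise the three edges lie on the single-bend path $P_1$, hence on a common B$_1$-subpath, and Lemma~\ref{lemma:Lwit} forces $\mathcal{F}'$ to contain a path with at least two bends. Either way this contradicts $\mathcal{F}'$ being a family of $1$-bend paths, so $h \le 3$, and combined with the lower bound we conclude $sH(\text{B}_1\text{-EPG}_\text{t}) = 3$.

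The one point that needs care — and essentially the only obstacle — is that Lemmas~\ref{lemma:3colin} and~\ref{lemma:Lwit} are phrased for \emph{minimal non-$(h-1)$-Helly} families, whereas here $\mathcal{F}'$ is a \emph{minimal non-$(h-1)$-strong-Helly} family. I would dispose of this by observing that the proofs of both lemmas use only the "distinct non-representative edge" structure (and, for Lemma~\ref{lemma:Lwit}, that the family has at least four members), and both of these are guaranteed in our situation by Theorem~\ref{thm:minimal-strong} and the hypothesis $h \ge 4$; thus the lemmas apply verbatim. Alternatively, one could restate the two lemmas for an abstract "minimal family satisfying the non-representative property of Theorems~\ref{thm:minimal} and~\ref{thm:minimal-strong}," after which both the Helly and the strong-Helly upper bounds follow as immediate corollaries with no further work.
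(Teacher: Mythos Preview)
Your proposal is correct and follows essentially the same approach as the paper: the lower bound via $sH \ge H$ together with Theorem~\ref{thm:helly_b1epgt}, and the upper bound by contradiction through Theorem~\ref{thm:minimal-strong} followed by Lemmas~\ref{lemma:3colin} and~\ref{lemma:Lwit}. Your final paragraph, noting that those lemmas are stated for minimal non-$(h-1)$-Helly families but apply equally to the strong-Helly case because their proofs use only the non-representative-edge structure, is a point the paper leaves implicit; your treatment is in fact slightly more careful here.
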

\begin{proof}
Recall that $sH(\text{B$_1$-EPG\textsubscript{t}})\geq H(\text{B$_1$-EPG\textsubscript{t}})$. Thus, by Theorem~\ref{thm:helly_b1epgt}, $sH(\text{B$_1$-EPG\textsubscript{t}})\geq3$.

Assume by contradiction that the strong Helly number of B$_1$-paths is $h > 3$. In this case, consider a minimal non-$(h-1)$-strong Helly family of $\cal F$ of B$_1$-paths. Then $\cal F$ contains at least $h$ paths. Any path $P_1 \in \cal{F}$ must contain $h-1$ non-representative edges corresponding to the $h-1$ distinct paths of $\cal F$ other than $P_1$. Since $h-1 \geq 3$, $P_1$ contains at least three distinct non-representative edges $u_2, u_3, u_4 \in P_i$, with $u_3$ lying  between $u_2$ and $u_4$ in the path.

If $u_2$, $u_3$ and $u_4$ are co-linear, then by Lemma~\ref{lemma:3colin}, $P_3 \in \cal{F}$ must contain at least three bends. Otherwise, the edges must lie on $P_1$ which has a single bend. Thus, it follows from Lemma~\ref{lemma:Lwit} that $P_3$ has two bends. In any situation, a contradiction arises, implying that $sH({\cal F}) \leq 3$.

This concludes the proof of the theorem.
\end{proof}

\section{Cliques on B$_1$-EPG\textsubscript{\lowercase{t}} Representations}
In this section, we characterize the B$_1$-EPG\textsubscript{t} representations of cliques with three vertices.

Let $T$ be a right triangle on the triangular grid $\mathcal{G}$. Let $u$ be a corner of $T$ and $(u, x_1)$ and $(u, x_2)$ the grid edges incident to $u$, such that $(u, x_1) \in E(T)$ and $(u, x_2) \in E(T)$. If $Q$ is a path that bends at $u$, then if it contains both $(u, x_1)$ and $(u, x_2)$, we say that $Q$ is an \emph{inside path relative to $u$}. If $Q$ contains either $(u, x_1)$ or $(u, x_2)$, we say that $Q$ is a \emph{midway path relative to $u$}. Otherwise, $Q$ is an \emph{outside path relative to $u$}.

Let $\langle \mathcal{P},\mathcal{G}\rangle$ be a B$_1$-EPG\textsubscript{t} representation of a graph $G$ on a triangular grid $\mathcal{G}$. Let $C$ be a maximal clique of $G$ and $\mathcal{P}_C\subseteq\mathcal{P}$ the set of paths representing the vertices of $C$. If $\bigcap_e P \neq \emptyset$, for all $P \in \mathcal{P}_C$, then $C$ is called an \emph{edge-clique}. If $\bigcap_e P = \emptyset$ and $\bigcap_v P = \{b\}$, for all $P \in \mathcal{P}_C$, then $C$ is called a \emph{claw-clique}. If $U(\mathcal{P}_C)$ has a right triangle $T$ as a subgraph, then $C$ is called a \emph{triangular-clique}. Let $T \subset U(\mathcal{P}_C)$, such that every corner, $x$, $y$ and $z$, of $T$ is the bend point of at least one path in $\mathcal{P}_C$. Assume every path of $\mathcal{P}_C$ bends at $x$, $y$ or $z$, and let $\mathcal{P}_C^x$, $\mathcal{P}_C^y$ and $\mathcal{P}_C^z$ be the paths of $\mathcal{P}_C$ that bend at $x$, $y$ and $z$, respectively. Consider the following cases:
\begin{itemize}
    \item If every path of $\mathcal{P}_C$ is an inside path relative to $x$, $y$ or $z$, then $C$ is called a \emph{flag-clique}.
    \item If at least one path of $\mathcal{P}_C^x$ is a midway path relative to $x$, and every path of $\mathcal{P}_C^y$ and every path of $\mathcal{P}_C^z$ are inside paths relative to $y$ and $z$, then $C$ is called a \emph{paw-clique}.
    \item If every path of $\mathcal{P}_C^x$ and every path of $\mathcal{P}_C^y$ are inside paths relative to $x$ and $y$, and at least one path of $\mathcal{P}_C^z$ is an outside path relative to $z$, then $C$ is called a \emph{cricket-clique}.
    \item If at least one path of $\mathcal{P}_C^x$ and at least one path of $\mathcal{P}_C^y$ are midway paths relative to $x$ and $y$, and every path of $\mathcal{P}_C^z$ is an inside path relative to $z$, then $C$ is called a \emph{bull-clique}.
    \item If every path of $\mathcal{P}_C^x$ is an inside path relative to $x$, at least one path of $\mathcal{P}_C^y$ is a midway path relative to $y$, and at least one path of $\mathcal{P}_C^z$ is an outside path relative to $z$, then $C$ is called an \emph{extended-bull-clique}.
    \item If at least one path of $\mathcal{P}_C^x$ is a midway path relative to $x$, at least one path of $\mathcal{P}_C^y$ is a midway path relative to $y$ and at least one path of $\mathcal{P}_C^z$ is a midway path relative to $z$, then $C$ is called a \emph{net-clique}.
\end{itemize}

These subtypes of the triangular-clique get their names after the respective layout of their paths on the grid (see Table~\ref{Tab:table-cliques}). Note that the existence of a third direction on the grid allows the arising of a new type of clique. See in Figure~\ref{fig: triangular_clique_examples} some examples of a triangular-clique.

\begin{figure}[htb]
\center
\subfigure[][]{\includegraphics[scale=0.6]{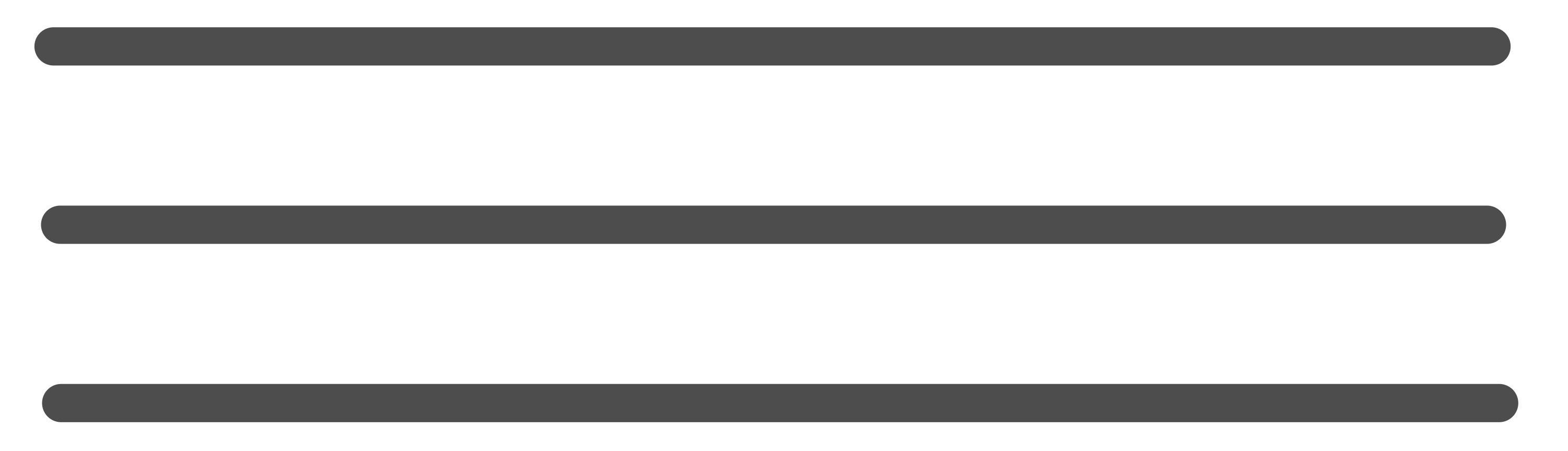}}
\qquad
\subfigure[][]{\includegraphics[scale=0.6]{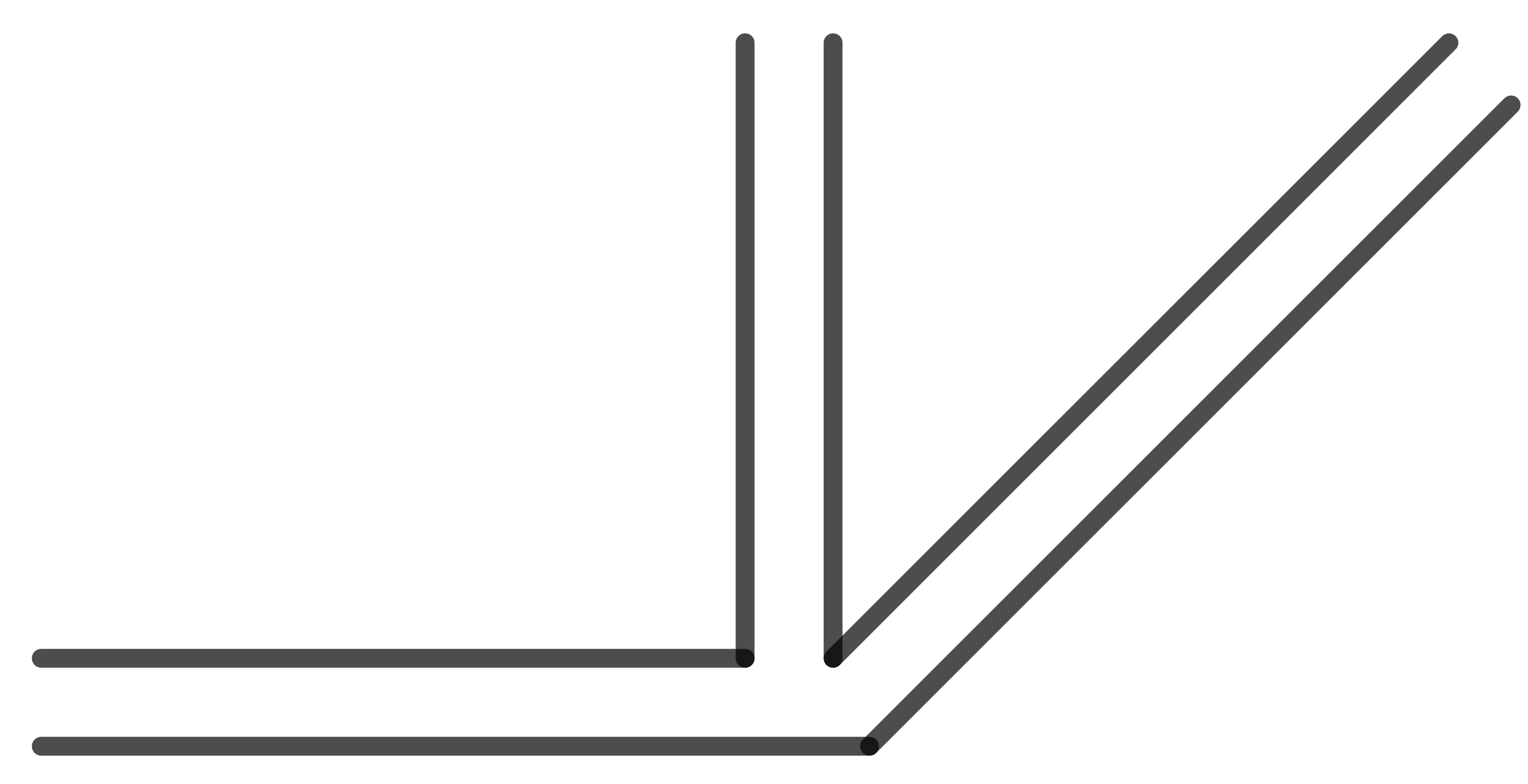} \label{fig:claw-example-epgt}}
\qquad
\subfigure[][]{\includegraphics[scale=0.6]{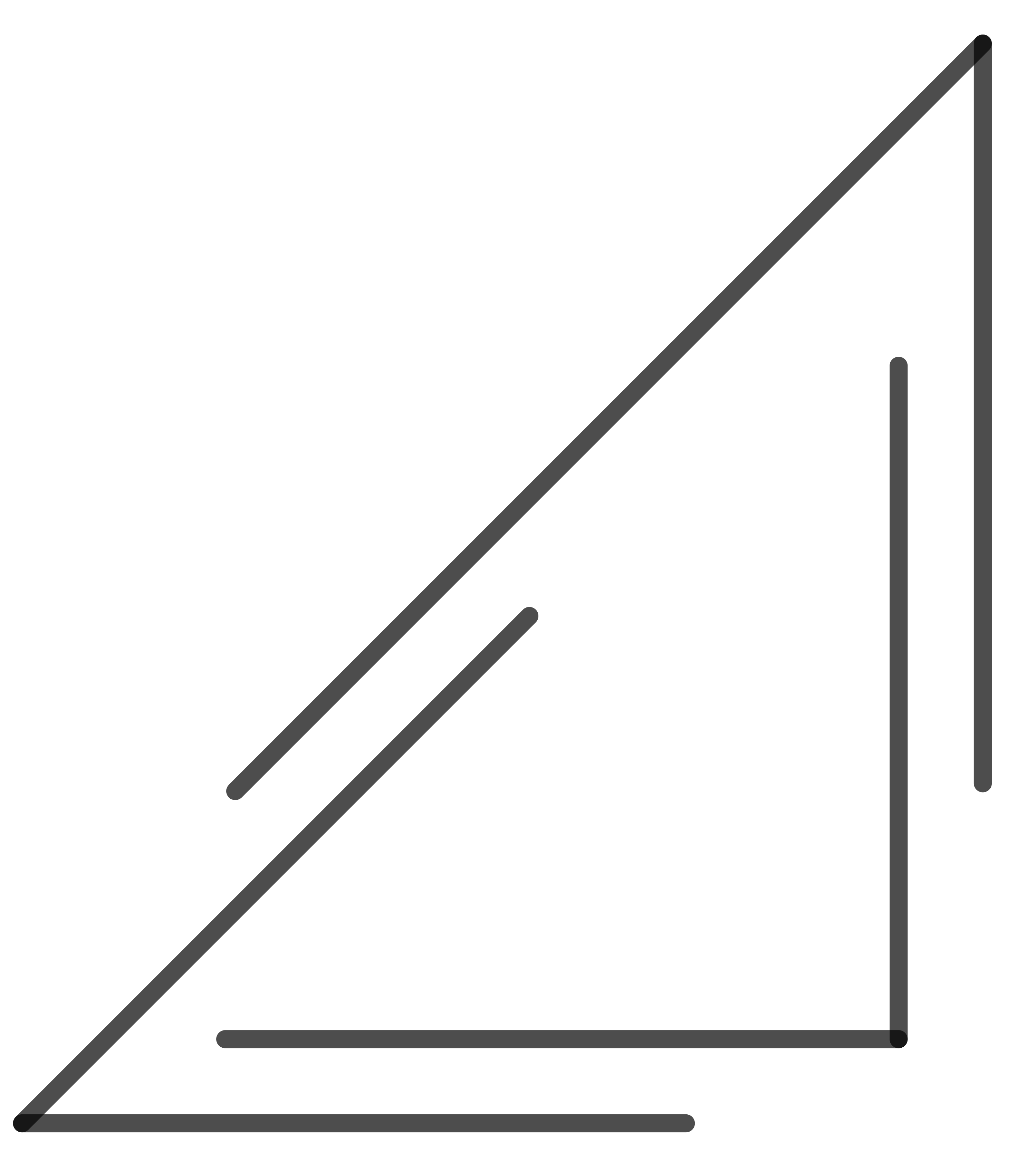}}
\qquad
\subfigure[][]{\includegraphics[scale=0.6]{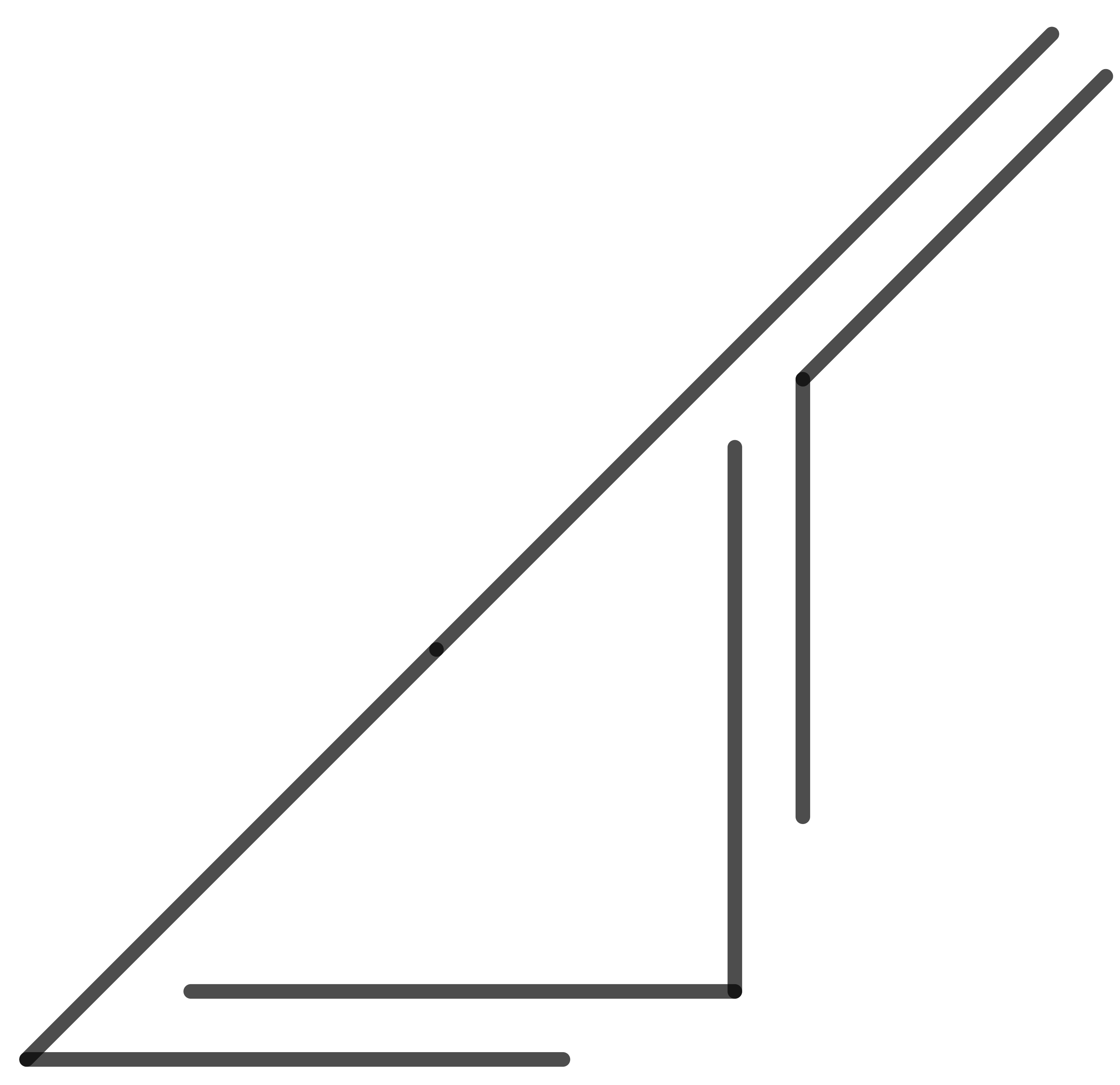}}
\qquad
\subfigure[][]{\includegraphics[scale=0.6]{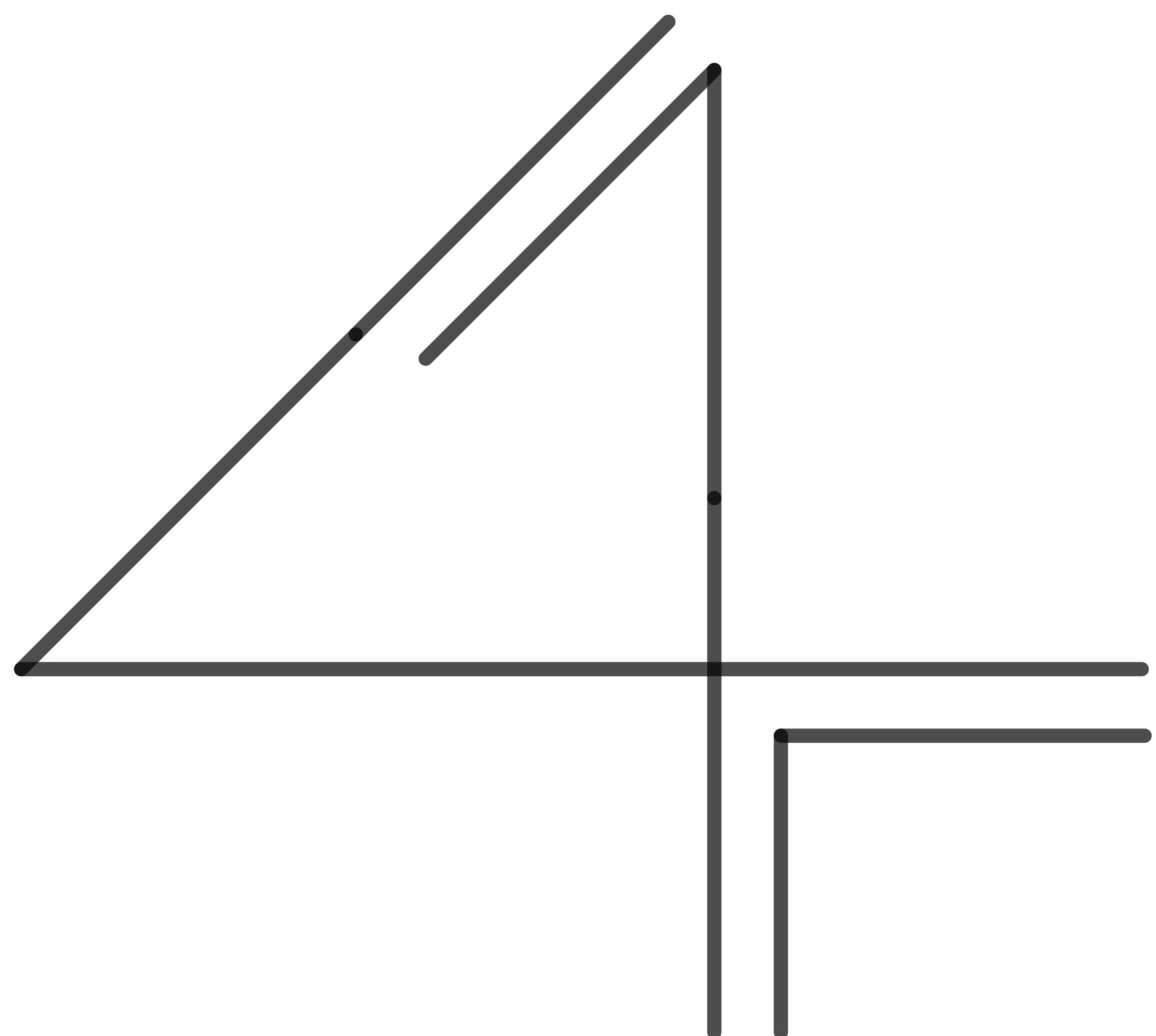}}
\qquad
\subfigure[][]{\includegraphics[scale=0.6]{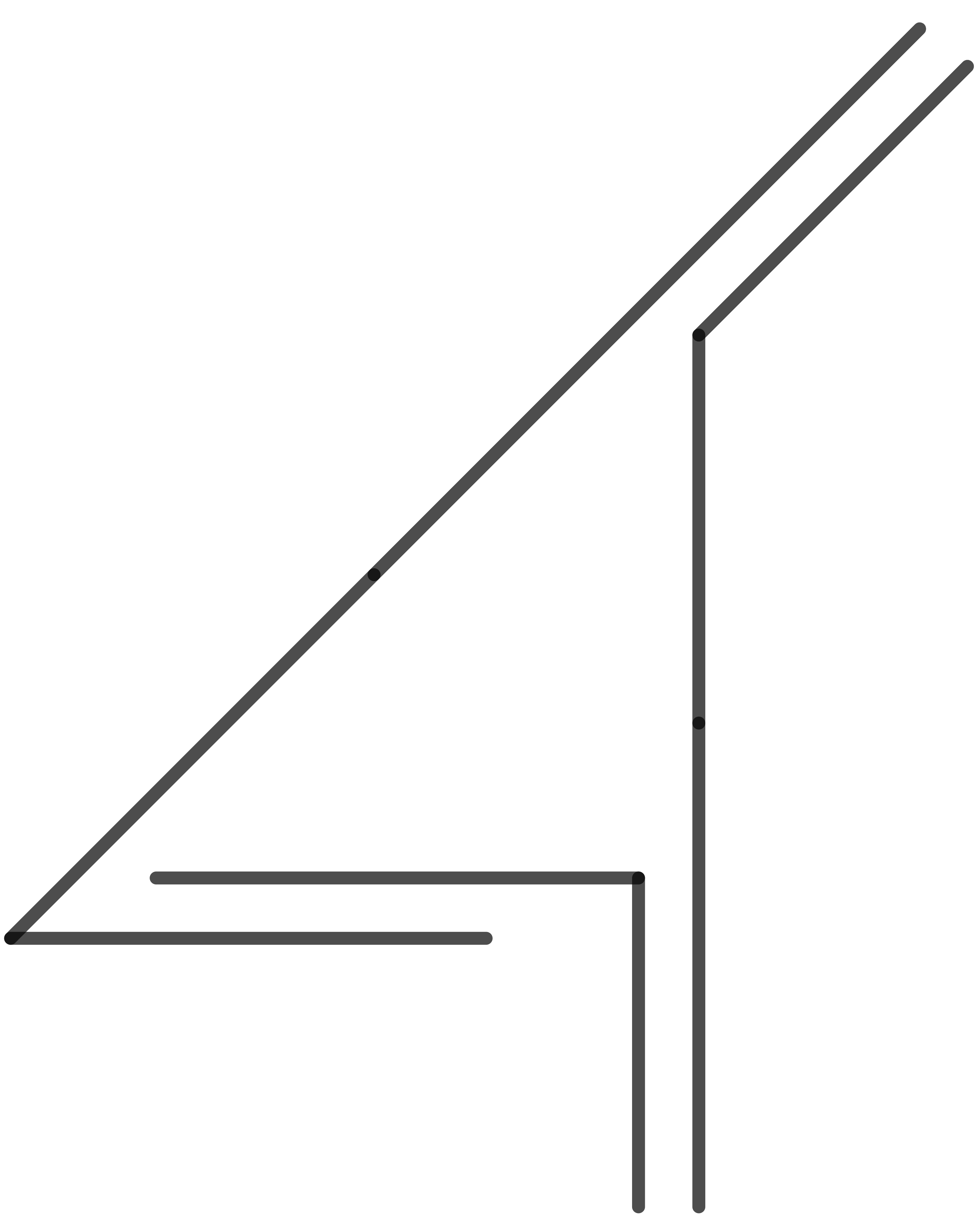}}
\qquad
\subfigure[][]{\includegraphics[scale=0.6]{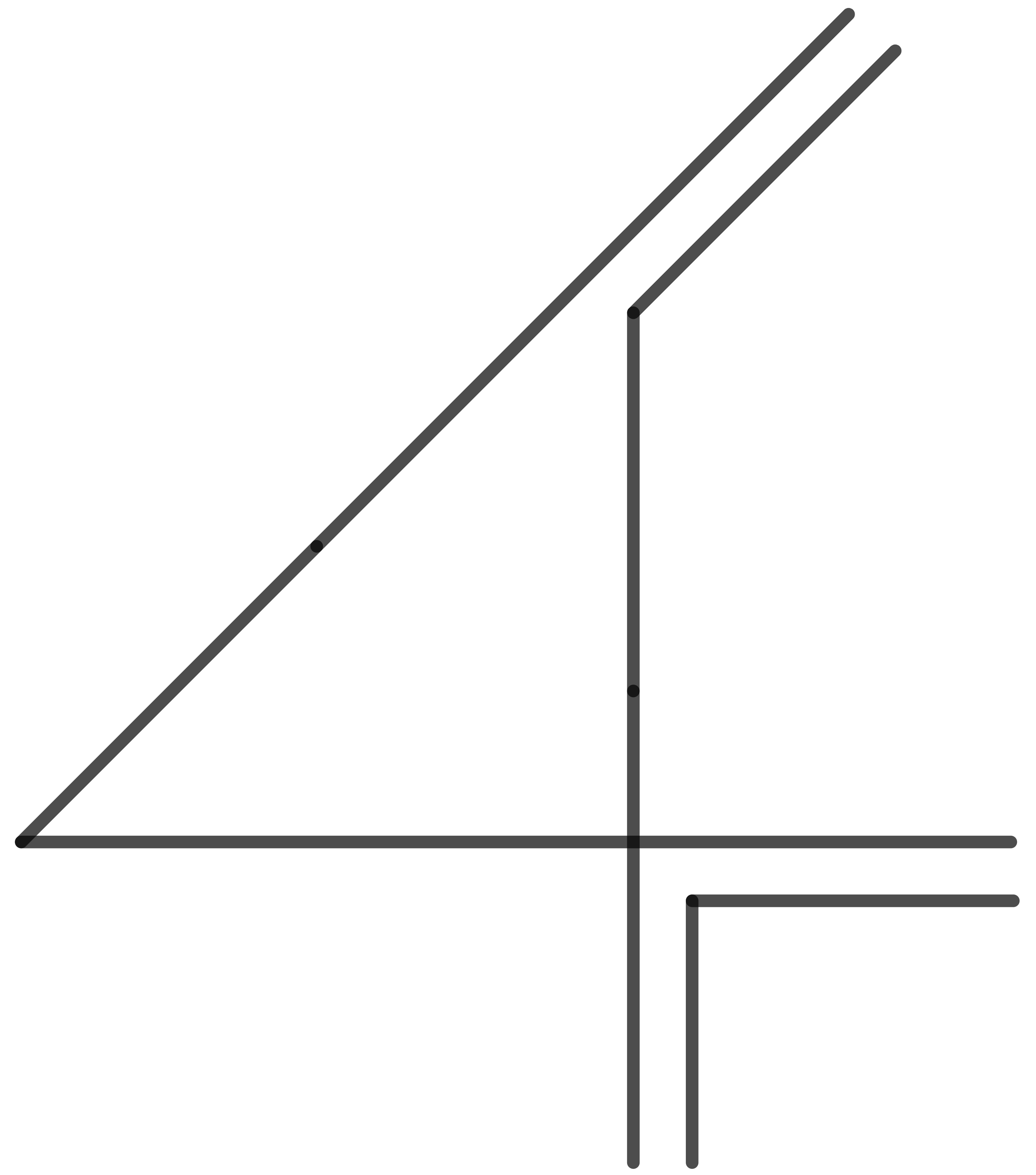}}
\qquad
\subfigure[][]{\includegraphics[scale=0.6]{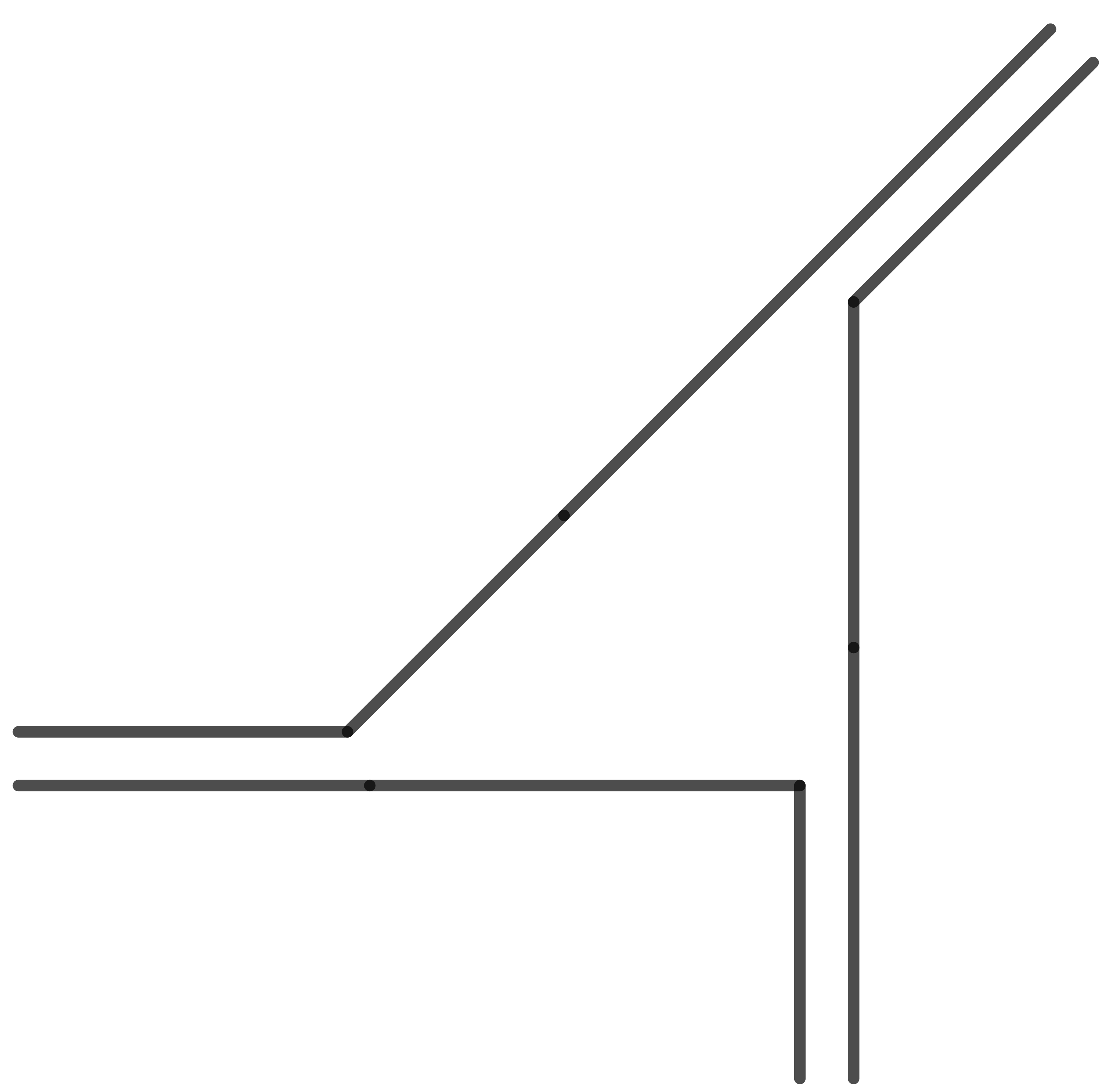}}
\caption{Examples of B$_1$-EPG\textsubscript{t} representations of a clique: (a) edge-clique, (b) claw-clique, (c) flag-clique, (d) paw-clique, (e) cricket-clique, (f) bull-clique, (g) extended-bull-clique and (h) net-clique.}
\label{fig: triangular_clique_examples}
\end{figure}

Now, let us turn our attention to triangular-cliques with three vertices. Let $C$ be a clique with three vertices and $\mathcal{P}=\{P_1, P_2, P_3\}$ the set of paths representing the vertices of $C$, such that $U(\mathcal{P})$ has a right triangle $T$ as a subgraph. In this case, there is exactly one path of $\mathcal{P}$ bending at each corner of $T$. Each $P_i\in \mathcal{P}$ is either an inside path (I), a midway path (M) or an outside path (O) relative to its respective corner. Let us use the notation $(x, y, z)$, where $x,y,z\in\{\text{I, M, O}\}$, in which the $i$-th coordinate indicates how $P_i$ bend relative to its respective corner. These are not ordered triples, that is, $(\text{I, M, O}) = (\text{I, O, M})$. Thus, there are ten different triples: $(\text{I, I, I})$, $(\text{M, M, M})$, $(\text{O, O, O})$, $(\text{I, I, M})$, $(\text{I, I, O})$, $(\text{M, M, I})$, $(\text{M, M, O})$, $(\text{O, O, I})$, $(\text{O, O, M})$ and $(\text{I, M, O})$. See in Table~\ref{Tab:table-cliques} how each one of the subtypes of the triangular-clique correspond to a different triple.

\begin{table}
\begin{center}
\begin{tabular}{| c | c | c |}
\hline
Paths & Layout & Name\\
\hline
$(\text{I, I, I})$ & \includegraphics[scale=0.6]{figuras/flag_example.png} & Flag\\
\hline
$(\text{I, I, M})$ & \includegraphics[scale=0.6]{figuras/paw_example.png} & Paw\\
\hline
$(\text{I, I, O})$ & \includegraphics[scale=0.6]{figuras/cricket_example.png} & Cricket\\
\hline
$(\text{M, M, I})$ & \includegraphics[scale=0.6]{figuras/bull_example.png} & Bull\\
\hline
$(\text{I, M, O})$ & \includegraphics[scale=0.6]{figuras/extended-bull_example.png} & Extended-bull\\
\hline
$(\text{M, M, M})$ & \includegraphics[scale=0.6]{figuras/net_example.png} & Net\\
\hline
\end{tabular}
\end{center}
\caption{The types of triangular-clique according to how the paths bend relative to the corners of the triangle. The letters I, M and O stand for inside path, midway path and outside path respectively.}
\label{Tab:table-cliques}
\end{table}

Note that $(\text{M, M, O})$, $(\text{O, O, I})$, $(\text{O, O, M})$ and $(\text{O, O, O})$ are not associated with any of the subtypes. However, it does not imply that these triples form types of triangular-clique different from those already considered. It is easy to see that if there are at least two outside paths, they will not edge intersect one another and, therefore, it will not be a clique. In the case of $(\text{M, M, O})$, let $b_i$ be the bend point of $P_i$ for $i\in\{1, 2, 3\}$, and $(x, b_3)$ and $(y, b_3)$ edges of $P_3$ that do not belong to the right triangle. Note that, $P_1$ must contain $(x, b_3)$ and $P_2$ must contain $(y, b_3)$ or vice-versa. Since $P_1$ and $P_2$ must bend at $b_1$ and $b_2$ respectively, the only they would intersect each other is either if both of them are inside paths, or one of them is an inside path and the other is a midway path.

\begin{thm} \label{thm cliques}
Let $\langle \mathcal{P},\mathcal{G}\rangle$ be a B$_1$-EPG\textsubscript{t} representation of a graph $G$, and let $C=\{v_1,v_2,v_3\}$ be a maximal clique of $G$. Then, $C$ corresponds to either an edge-clique, a claw-clique or a triangular-clique.
\end{thm}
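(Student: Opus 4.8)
First I would dispose of the edge-clique case: if $\bigcap_e\mathcal{P}_C\neq\emptyset$ then $C$ is an edge-clique. So assume $\bigcap_e\mathcal{P}_C=\emptyset$ and write $I_{ij}=P_i\cap_e P_j$; these are nonempty (as $C$ is a clique) and pairwise disjoint, since $I_{ij}\cap I_{ik}=\bigcap_e\mathcal{P}_C=\emptyset$. By Remark~\ref{1bend_int_1}, each $I_{ij}$ either lies on a single grid line $\ell_{ij}$ or has edges on two grid lines; in the latter case $P_i,P_j$ share their bend point and use the same two segment-rays from it, a situation I call an \emph{L-intersection}. The first key step is to rule L-intersections out when $\bigcap_e\mathcal{P}_C=\emptyset$: if $I_{12}$ and $I_{13}$ are both L-intersections, then $P_1,P_2,P_3$ leave a common bend point $b$ along the same two rays, hence share the edge of each ray at $b$, a contradiction; and if exactly one, say $I_{12}$, is an L-intersection, on lines $m_1,m_2$, then $I_{13}\subseteq E(P_1)$ and $I_{23}\subseteq E(P_2)$ each lie on $m_1$ or on $m_2$, and a short argument --- using disjointness and that $I_{13},I_{23}$ must avoid the two edges of $I_{12}$ at $b$, together with Remark~\ref{1bend_int_3} --- forces $P_3$ to bend at $m_1\cap m_2=b$ and hence to contain an edge of $I_{12}$, which is impossible. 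So every $I_{ij}$ lies on a single grid line $\ell_{ij}$.

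Next I would analyse the three lines $\ell_{12},\ell_{13},\ell_{23}$. Since $P_1$ meets both $\ell_{12}$ and $\ell_{13}$ with at most one bend, Remark~\ref{1bend_int_3} gives that either $\ell_{12}=\ell_{13}$ or $\ell_{12}\cap\ell_{13}$ is the bend point of $P_1$, with a segment of $P_1$ on each line; and symmetrically for $P_2$ and $P_3$. If all three lines coincide, each $P_i$ has a segment $s_i$ on that line with $I_{ij}=s_i\cap s_j$; the $s_i$ pairwise intersect, so by the Helly property of intervals they share an edge, contradicting $\bigcap_e\mathcal{P}_C=\emptyset$. If all three lines are pairwise distinct --- hence in three different directions --- set $b_1=\ell_{12}\cap\ell_{13}$, $b_2=\ell_{12}\cap\ell_{23}$, $b_3=\ell_{13}\cap\ell_{23}$; when $b_1,b_2,b_3$ are distinct, $I_{12}$ is a nonempty set of edges on $\ell_{12}$ lying on $P_1$'s ray from $b_1$ and on $P_2$'s ray from $b_2$, so those two rays together cover the grid-segment $[b_1,b_2]$, and likewise $[b_1,b_3]$ and $[b_2,b_3]$ are covered; thus $U(\mathcal{P}_C)$ contains the right triangle with corners $b_1,b_2,b_3$ and $C$ is a triangular-clique. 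In the two remaining subcases --- exactly two of the $\ell_{ij}$ equal, or $b_1=b_2=b_3$ --- one checks directly (again via Remark~\ref{1bend_int_3}) that all three paths pass through a common grid point $b$.

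It then remains to show that once a common point $b$ has been found, $C$ is a claw-clique or a triangular-clique. If $\bigcap_v\mathcal{P}_C=\{b\}$ we have a claw-clique. Otherwise take $c\in\bigcap_v\mathcal{P}_C$ with $c\neq b$; each $P_i$ contains a $(b,c)$-subpath $\pi_i$ with at most one bend, and the $\pi_i$ are not all equal (else they would share an edge). Taking $\pi_1\neq\pi_2$, a short case distinction over the few possible straight and $1$-bend $(b,c)$-paths shows that two distinct such paths that moreover share an edge always have a union containing a right triangle (while for the remaining pairs $P_1$ and $P_2$ would fail to share any edge, contradicting $P_1\sim P_2$). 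Hence $U(\mathcal{P}_C)$ contains a right triangle and $C$ is a triangular-clique, which completes the classification.

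The main obstacle I anticipate is the case analysis itself, and in particular the two sharp uses of the ``no common edge'' hypothesis: showing a single L-intersection cannot occur for a non-edge-clique (the crux being that a third path overlapping $P_1$ and $P_2$ on the two lines through their common bend point is forced to bend there), and showing that two shared grid points without a shared edge force a right triangle into $U(\mathcal{P}_C)$. One must also be careful about exactly which segments of $P_1,P_2,P_3$ belong to $U(\mathcal{P}_C)$ when invoking these triangles. The rest reduces to Remarks~\ref{1bend_int_1}--\ref{1bend_int_3} and the Helly property of intervals on a line.
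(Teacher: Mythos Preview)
Your line-based decomposition in the first two steps is sound and is a genuinely different route from the paper's shape-by-shape case analysis: ruling out L-intersections and then analysing the three lines $\ell_{12},\ell_{13},\ell_{23}$ is clean, and the ``all three lines distinct with distinct bend points'' case does yield a triangular-clique directly. The paper instead fixes $P_1$ and splits on whether $P_1\cap_e P_2$ and $P_1\cap_e P_3$ lie on the same segment of $P_1$, and in the second case runs through the shapes (narrow, normal, wide) of $P_1$ with pictures; your approach avoids this but pays for it with the extra ``two common points'' step.

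That last step has a real gap. Your claim that two common points $b\neq c$ force a right triangle in $U(\mathcal{P}_C)$ fails because the triangle you build from $\pi_i\cup\pi_j$ need not lie in $U(\mathcal{P}_C)$: one of its sides can sit on a segment of some $P_k$ that does not edge-intersect either of the other two paths, and by the paper's definition such a segment is excluded from $U(\mathcal{P}_C)$. Concretely, take $q=(0,0)$, and let $P_1=[(-3,0),(0,0)]\cup[(0,0),(0,2)]$, $P_2=[(-3,0),(0,0)]\cup[(0,0),(0,-2)]$, $P_3=[(0,-1),(0,1)]\cup[(0,1),(-2,-1)]$. These pairwise edge-intersect, have empty common edge, and $\bigcap_v\{P_1,P_2,P_3\}=\{(0,0),(-1,0)\}$. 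Your $\pi$-construction gives the triangle with corners $(0,0),(-1,0),(0,1)$, but its diagonal side $[(0,1),(-1,0)]$ lies only on $P_3$'s diagonal segment, which meets neither $P_1$ nor $P_2$ in an edge; hence $U(\mathcal{P}_C)$ is just a T-shape with no right triangle. The paper's own proof declares exactly this configuration a claw-clique (``if $P_2$ and $P_3$ contain the same grid edge $(b_1,x_i)$ \ldots\ then $C$ is a claw clique''), so in practice the paper is using ``claw-clique'' to mean merely $\bigcap_e=\emptyset$ and $\bigcap_v\neq\emptyset$, not $|\bigcap_v|=1$. If you adopt that reading, your Step~4 is unnecessary and the proof closes immediately once a common point is found; if you keep the literal reading, you need a different argument for the two-common-point case, since the triangle you produce is not guaranteed to live inside $U(\mathcal{P}_C)$.
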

\begin{proof}
Let $P_1$, $P_2$ and $P_3$ be paths of $\mathcal{P}$ representing $v_1$, $v_2$ and $v_3$, and $b_1$, $b_2$ and $b_3$ their corresponding bend points, respectively.

Suppose one of the paths, say $P_1$, is such that $s_1=P_1\cap_e P_2$ and $s_2=P_1\cap_e P_3$ are both on a same segment of $P_1$, and let $l_1$ be the grid line that contains such a segment. If $s_1\cap_e s_2\neq\emptyset$, then $C$ is an edge clique. If $s_1\cap_e s_2=\emptyset$, note that $P_2$ and $P_3$ cannot share edges on $l_1$, since this would force them to have more than one bend. Suppose $P_2$ and $P_3$ bend on different grid points, that is $b_2\neq b_3$. Let $l_2$ and $l_3$ be the grid lines (other than $l_1$) that contain $b_2$ and $b_3$ respectively, such that $l_2$ contains a segment of $P_2$ and $l_3$ contains a segment of $P_3$. In this case, note that $P_2$ has all of its edges on grid lines $l_1$ and $l_2$, and $P_3$ has all of its edges on grid lines $l_1$ and $l_3$, since both of them are $1$-bend paths. But, since $P_2$ and $P_3$ don't share edges on $l_1$, this scenario is not possible because the rest of their edges are on different grid lines ($l_2$ and $l_3$). Thus, $P_2$ and $P_3$ must bend at the same grid point, and in this case $C$ is a claw clique.

Now, suppose $s_1=P_1\cap_e P_2$ and $s_2=P_1\cap_e P_3$ are on different segments of $P_1$. Then, $C$ clearly cannot be an edge clique. Let $P_1'$ and $P_1''$ be the segments forming $P_1$, and $l_1'$ and $l_1''$ the grid lines containing $P_1'$ and $P_1''$ respectively. Without loss of generality, assume $s_1\subseteq P_1'$. Let $(b_1, x_i)$, for all $1\leq i\leq4$, be the edges of $\mathcal{G}$ that contain $b_1$, such that $(b_1, x_i)\notin E(P_1)$. Note that, if $P_2$ and $P_3$ contain the same grid edge $(b_1, x_i)$ for some $i$, then $C$ is a claw clique. Assume $P_2$ and $P_3$ do not contain the same grid edge $(b_1, x_i)$ for some $i$. Thus, $b_2$ is on $l_1'$, $b_3$ is on $l_1''$ and $b_2\neq b_3$. Also, $s_3=P_2\cap_e P_3$ is not on grid lines $l_1'$ and $l_1''$. Therefore, there must exist a grid line $l$ such that $s_3\subset l$, $b_2\in l$ and $b_3\in l$. Moreover, $l$, $l_1'$ and $l_1''$ are in different directions. The path $P_1$ is either narrow, normal or wide. Without loss of generality, assume $P_1'$ is horizontal. Let us analyze each case separately:
\begin{itemize}
    \item[(i)] If $P_1$ is narrow, then $l$ must be a vertical grid line. Thus, $C$ must be a triangular-clique. See in Figure~\ref{fig: narrow-case} that $C$ is either a flag-clique, a paw-clique, a bull-clique, an extended-bull-clique or a cricket-clique.
    \item[(ii)] If $P_1$ is normal, then $l$ must be a diagonal grid line. Since all cliques with three vertices, having at least one of them represented by a narrow path, were analyzed in the previous case, we may assume here that $P_2$ and $P_3$ are not narrow. Note that $P_1$ can have one of the following shapes: $\llcorner, \lrcorner, \ulcorner$ and $\urcorner$. Note also that, since $\mathcal{G}$ has only one diagonal direction, we cannot flip horizontally (or vertically) an entire representation, since this could imply that $/$-shaped paths would be turned into $\backslash$-shaped ones. However, it is possible to perform two (horizontal or vertical) flip operations in a row and obtain an isomorphic representation to the one we started with. Thus, without loss of generality, we may assume that $P_1$ is either $\llcorner$-shaped or $\lrcorner$-shaped. In both cases $C$ is a triangular-clique. If $P_1$ is $\llcorner$-shaped, $C$ is either a net-clique (see Figure~\ref{fig: regular-1}) or a bull-clique (see Figure~\ref{fig: regular-1_1}). If $P_1$ is $\lrcorner$-shaped, $C$ is either a bull-clique (see Figure~\ref{fig: regular-2}) or a cricket-clique (see Figure~\ref{fig: regular-2_1}).
    \item[(iii)] If $P_1$ is wide, then $l$ must be a vertical grid line. Since all cliques with three vertices, having at least one of them represented by a narrow or a normal path, were analyzed in the previous cases, we may assume here that $P_2$ and $P_3$ are both wide as well. Note, however, that we are assuming $P_1'$ is horizontal and $s_1\subseteq P_1'$. Therefore, $P_2$ must be a normal path (contradiction!). Thus, $P_1, P_2$ and $P_3$ cannot all be wide.
\end{itemize}
\begin{figure}[htb]
    \centering
    \subfigure[][]{\includegraphics[scale=0.5]{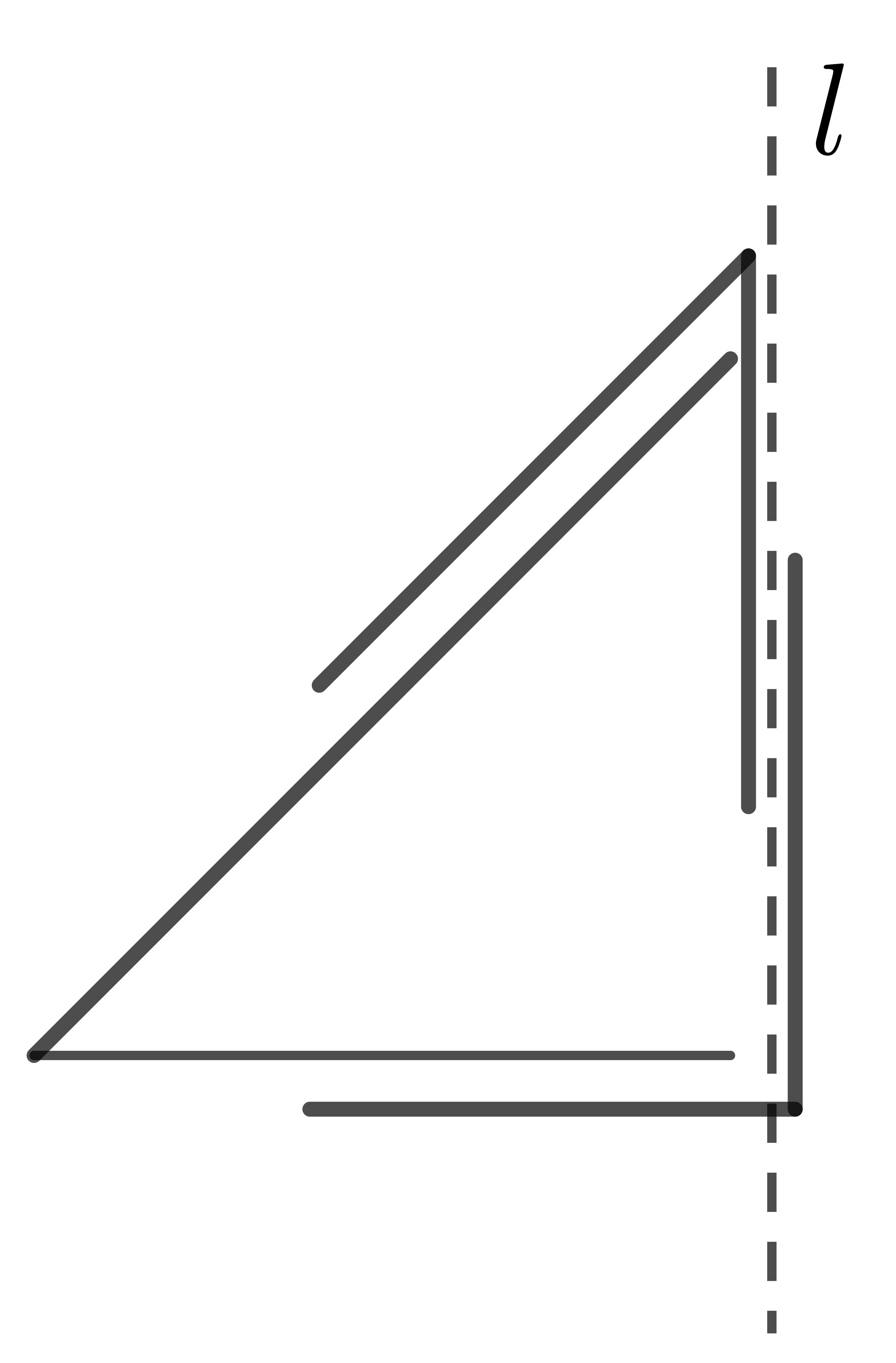} \label{fig: narrow-1}}
    \qquad
    \subfigure[][]{\includegraphics[scale=0.5]{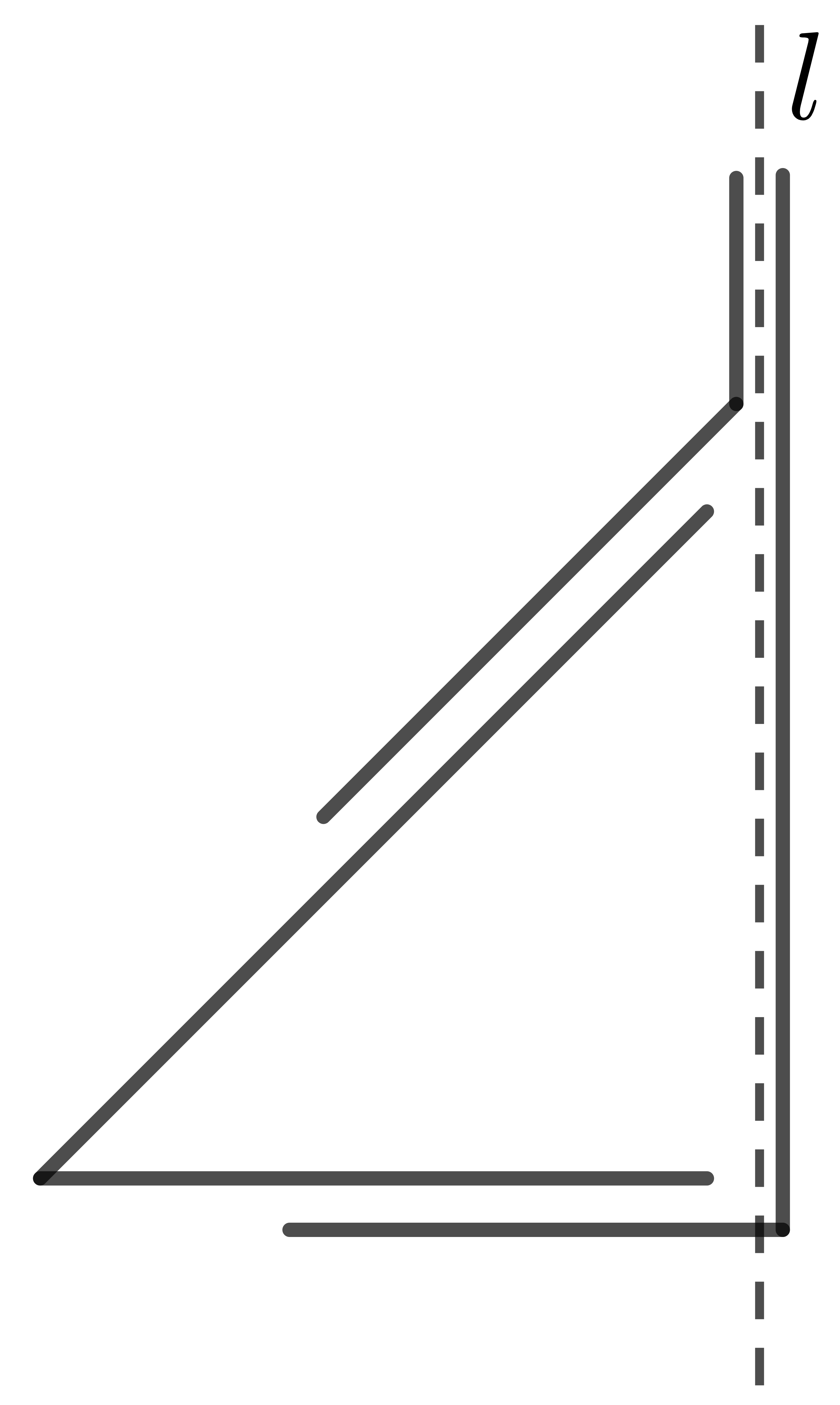} \label{fig: narrow-2}}
    \qquad
    \subfigure[][]{\includegraphics[scale=0.5]{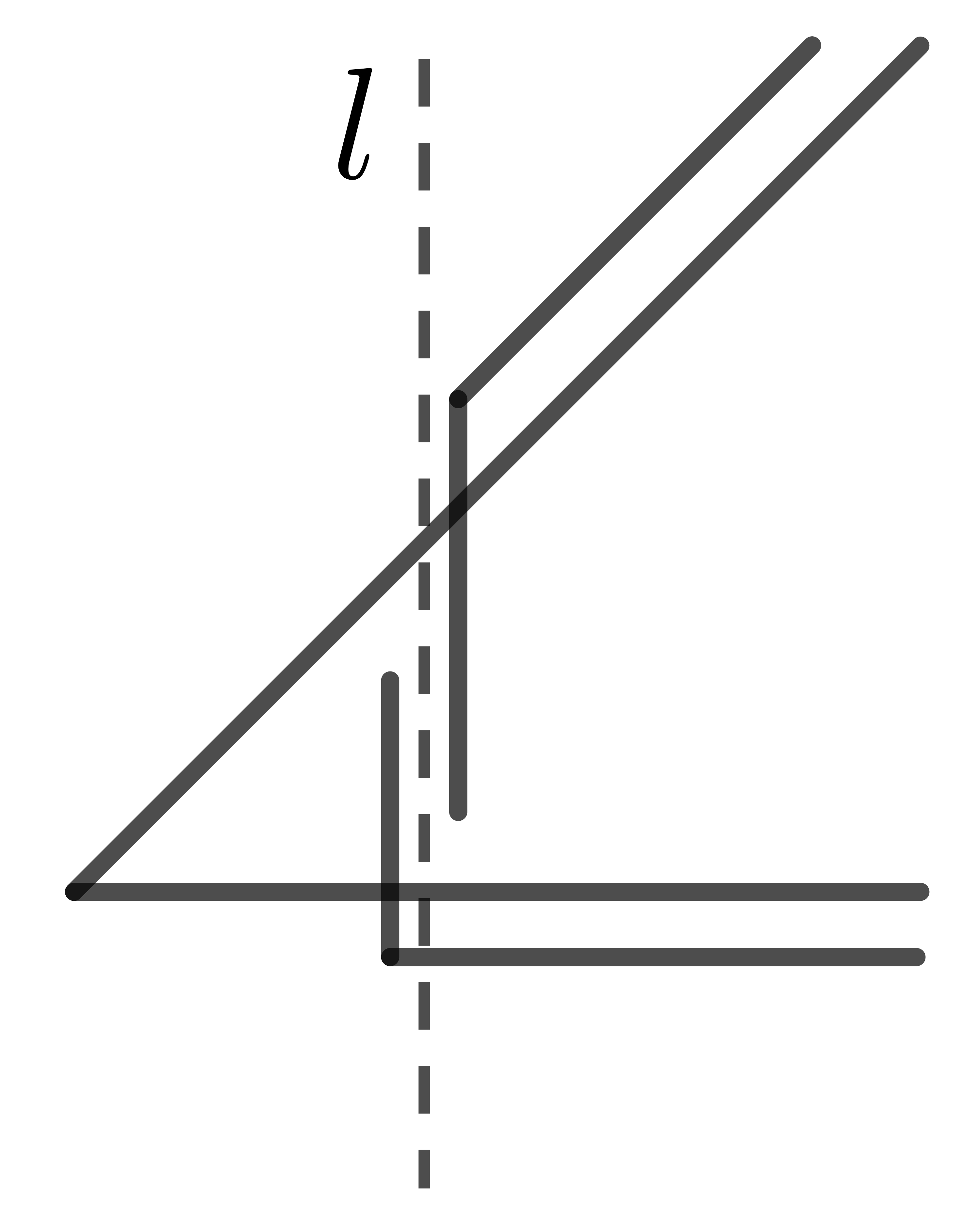} \label{fig: narrow-3}}
    \qquad
    \subfigure[][]{\includegraphics[scale=0.5]{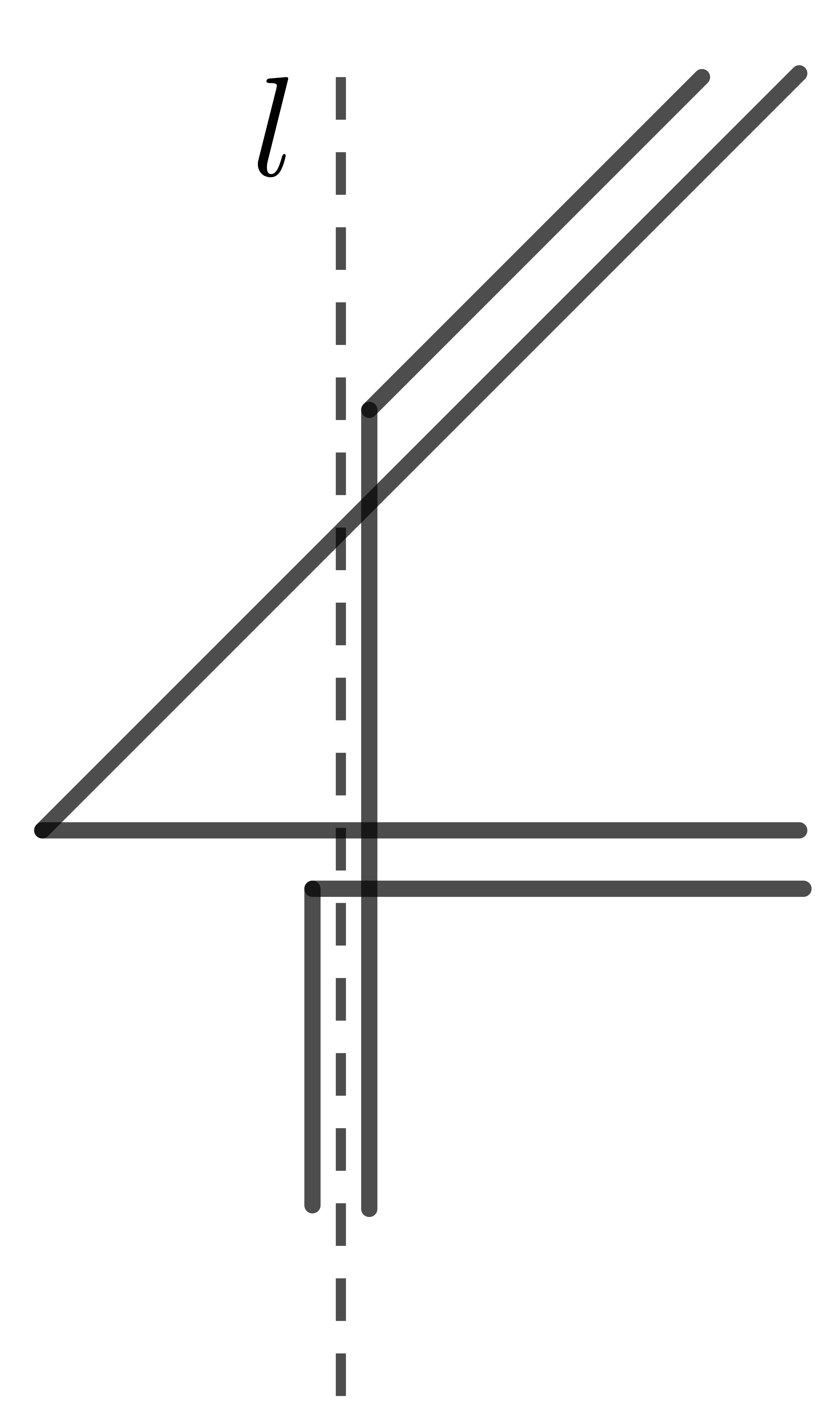} \label{fig: narrow-4}}
    \qquad
    \subfigure[][]{\includegraphics[scale=0.5]{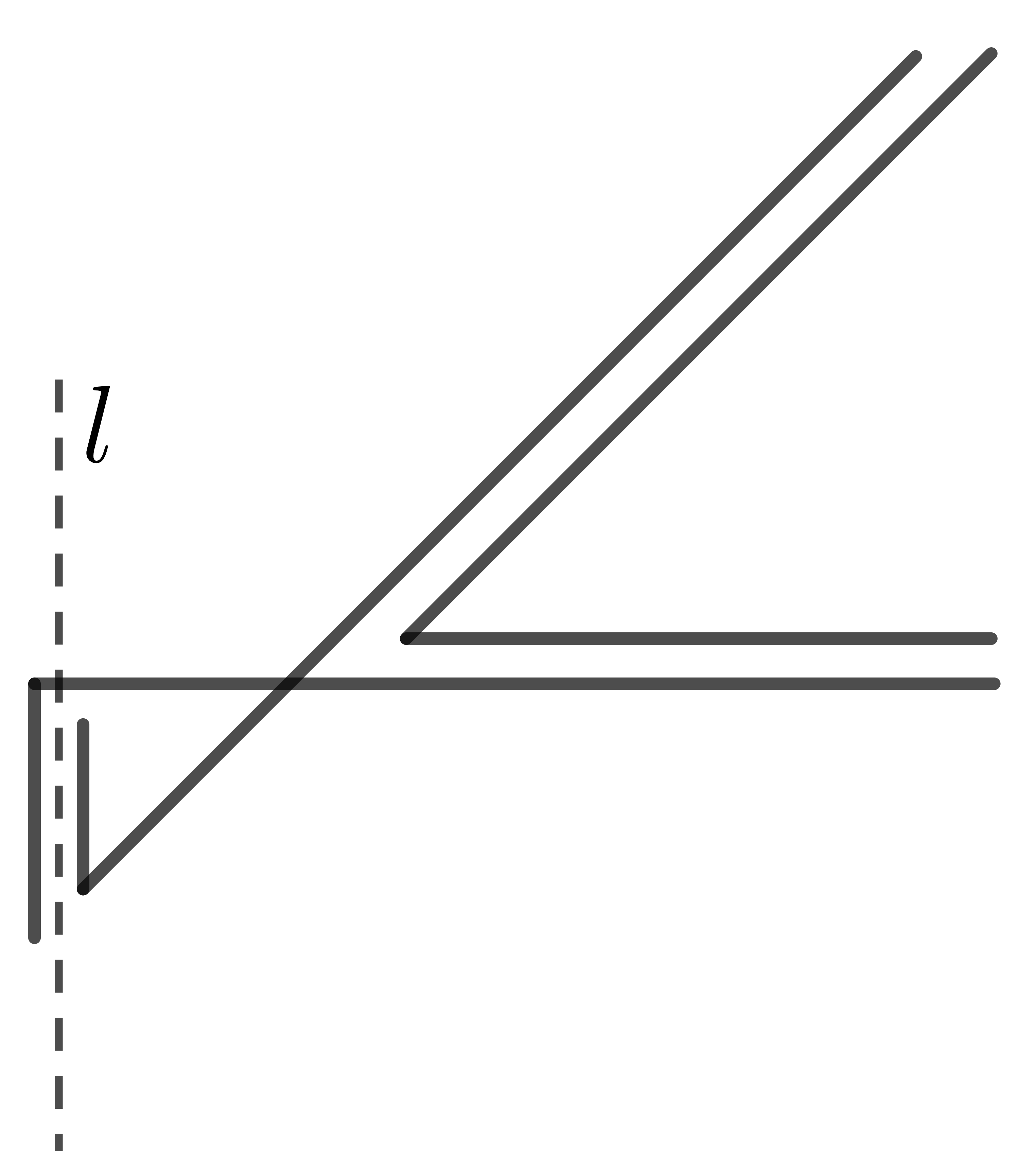} \label{fig: narrow-5}}
    \qquad
    \subfigure[][]{\includegraphics[scale=0.5]{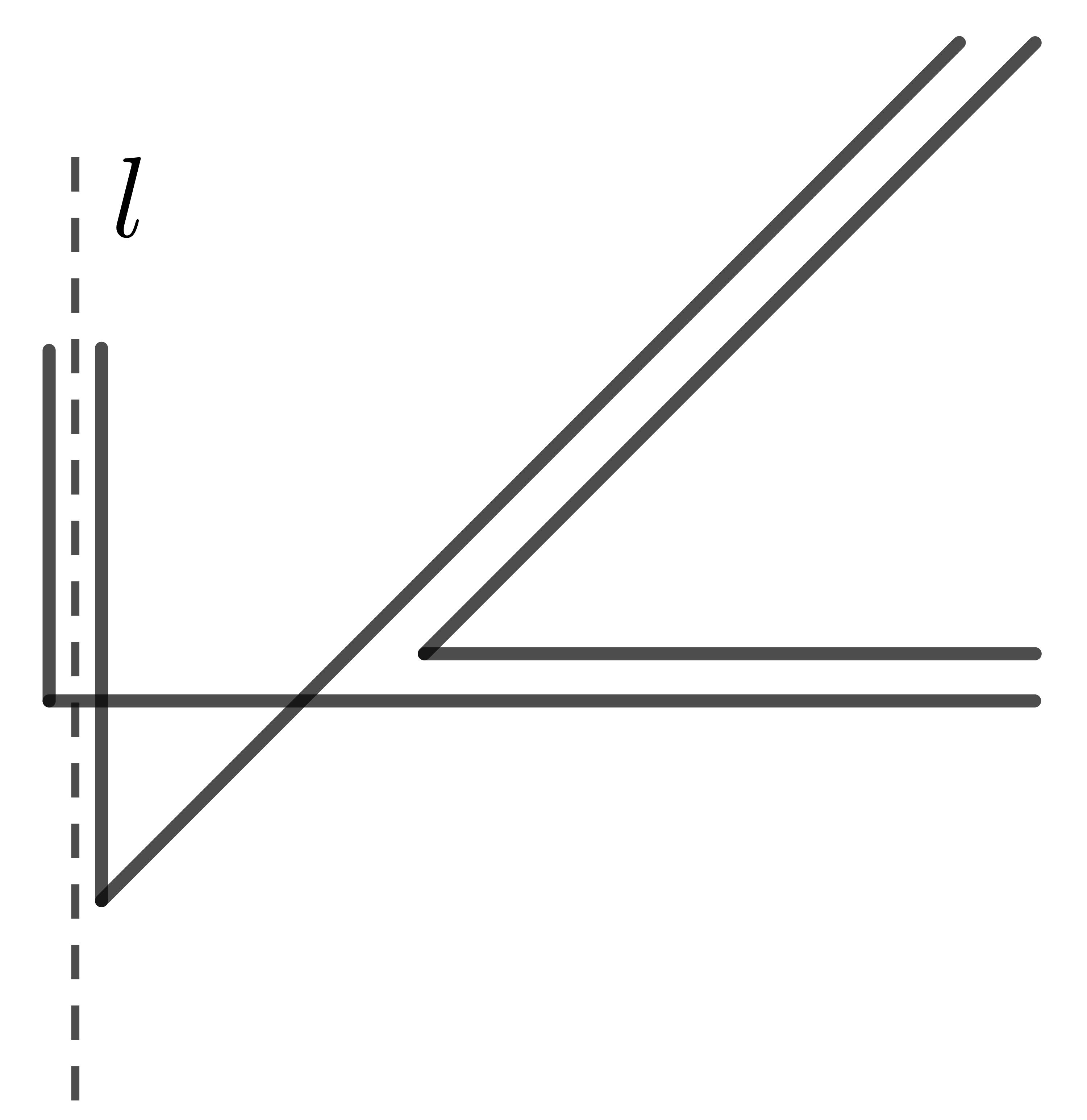} \label{fig: narrow-6}}
    \caption{The possible configurations derived from item (i) of the proof of Theorem~\ref{thm cliques}.}
    \label{fig: narrow-case}
\end{figure}
\begin{figure}[htb]
    \centering
    \subfigure[][]{\includegraphics[scale=0.5]{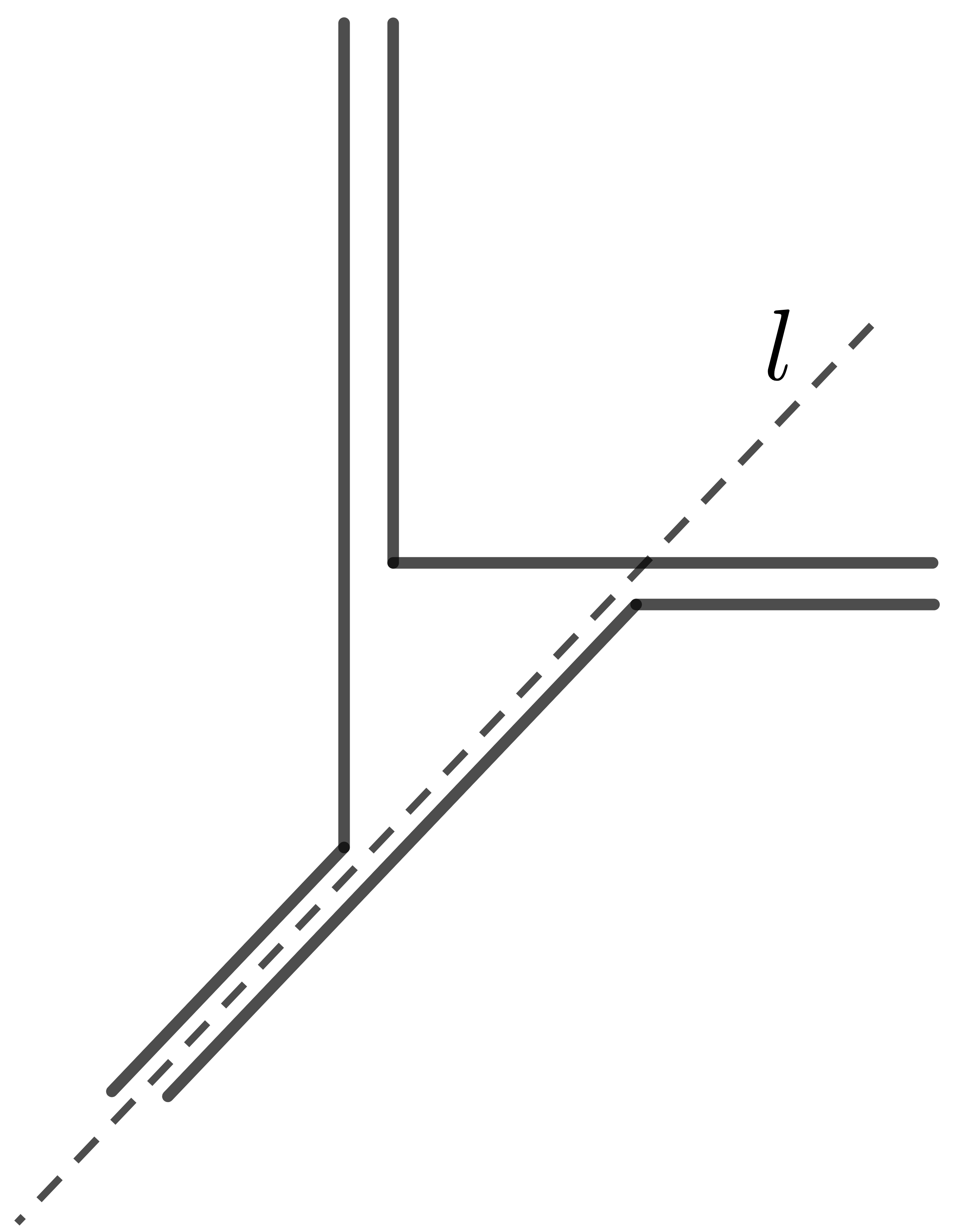} \label{fig: regular-1}}
    \qquad
    \subfigure[][]{\includegraphics[scale=0.4]{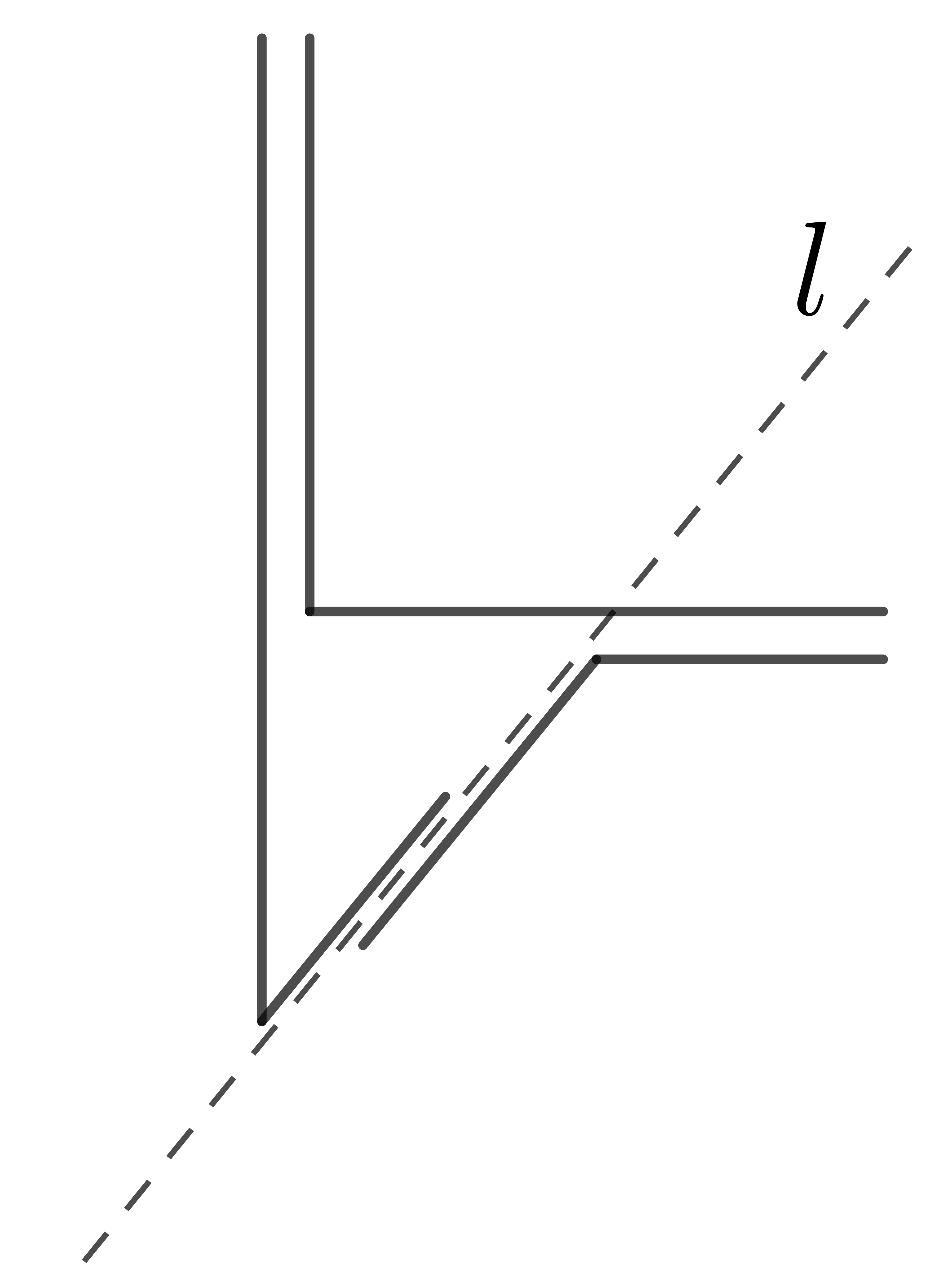} \label{fig: regular-1_1}}
    \qquad
    \subfigure[][]{\includegraphics[scale=0.5]{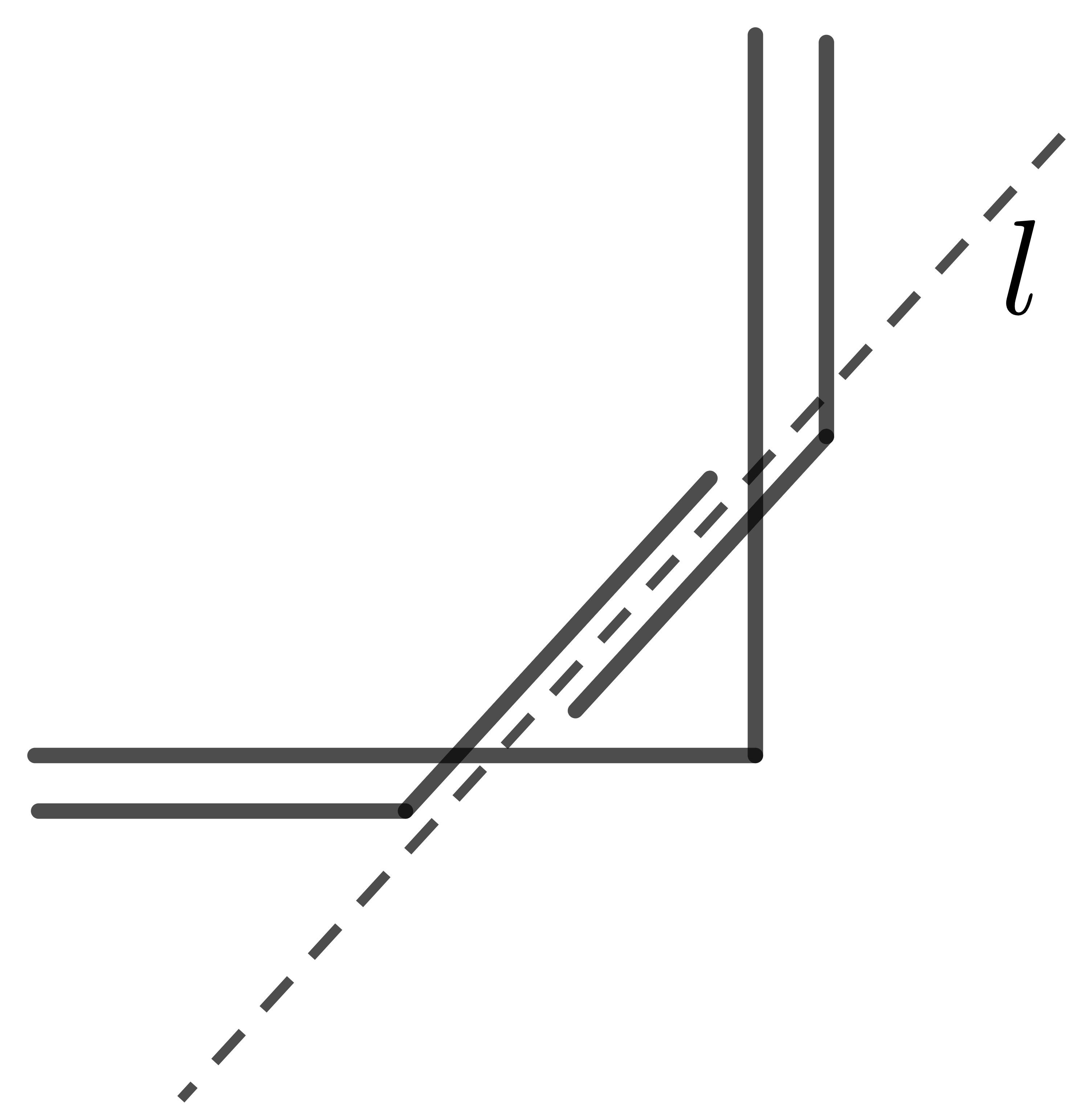} \label{fig: regular-2}}
    \qquad
    \subfigure[][]{\includegraphics[scale=0.4]{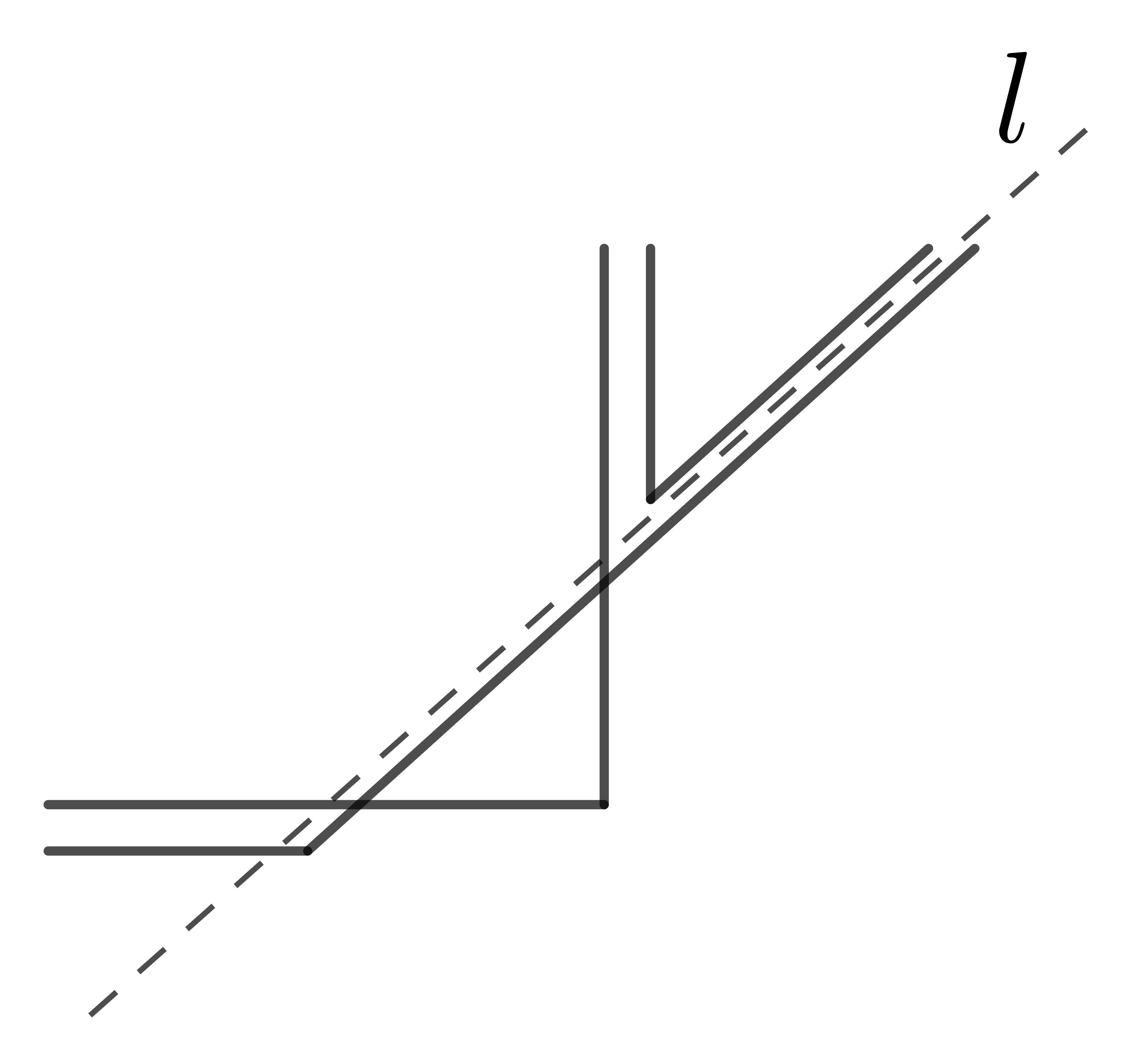} \label{fig: regular-2_1}}
    \caption{The possible configurations derived from item (ii) of the proof of Theorem~\ref{thm cliques}.}
    \label{fig: regular-case}
\end{figure}
This concludes the proof of the theorem.
\end{proof}

\section{Cycles on B$_1$-EPG\textsubscript{\lowercase{t}} Representations}
In this section, we characterize the B$_1$-EPG\textsubscript{t} representations of $4$-cycles.

Let $G$ be a chordless $4$-cycle and $\mathcal{P}=\{P_1, P_2, P_3, P_4\}$ the set of paths representing the vertices of $G$.

If $U(\mathcal{P})$ is a subdivision of a $4$-star, let $b$ be its central vertex and $a_1, a_2, a_3, a_4$ be the vertices of $U(\mathcal{P})$ such that $d(a_i)=1$, for all $1\leq i\leq4$. Consider the following cases:
\begin{itemize}
    \item If each $P_{a_i,b}\cup P_{a_{i+1},b}$ for $1\leq i\leq4$ is contained in a different member of $\mathcal{P}$, where addition is assumed to be modulo $4$, then $\mathcal{P}$ is called a \emph{true pie}. In a true pie, at least three of the four paths bend at $b$. See Figure~\ref{fig: b1_cycles_examples_truepie}.
    
    \item If each $P_{a_1, b}\cup P_{a_2, b}, P_{a_2, b}\cup P_{a_4, b}, P_{a_4, b}\cup P_{a_3, b}, P_{a_3, b}\cup P_{a_1, b}$ is contained in a different member of $\mathcal{P}$, then $\mathcal{P}$ is called a \emph{false pie}. In a false pie, at least two of the paths bend at $b$. See Figure~\ref{fig: b1_cycles_examples_falsepie}.
\end{itemize}

Let $Q$ be quadrilateral subgraph of $\mathcal{G}$ of any size, and let $s_1$, $s_2$, $s_3$ and $s_4$ be the segments of $\mathcal{G}$ forming the sides of $Q$, such that $s_i\cap_v s_{i+1}\neq\emptyset$ for $1\leq i\leq4$, where addition is assumed to be modulo $4$. We call $s_i\cap_v s_{i+1}$ the \emph{corners} of $Q$. If $Q$ is a subgraph of $U(\mathcal{P})$, each corner of $Q$ is the bend for a different member of $\mathcal{P}$, $P_2\cap_e P_3\neq\emptyset$, $P_3\cap_e P_4\neq\emptyset$, $P_4\cap_e P_1\neq\emptyset$, $P_2\cap_e P_4=\emptyset$ and $P_1\cap_e P_3=\emptyset$, then $\mathcal{P}$ is called a \emph{frame}. Consider the following cases:
\begin{itemize}
    \item If $Q$ is a rectangle, then $\mathcal{P}$ is called a \emph{rectangle frame} (or, simply, an \emph{r-frame}). See Figure~\ref{fig: b1_cycles_examples_rectangle}.
    
    \item If $Q$ is a trapezoid, then $\mathcal{P}$ is called a \emph{trapezoidal frame} (or, simply, a \emph{t-frame}). See Figure~\ref{fig: b1_cycles_examples_trapezoidal}.
    
    \item If $Q$ is a parallelogram, but not a rectangle, then $\mathcal{P}$ is called a \emph{parallelogram frame} (or, simply, a \emph{p-frame}). See Figure~\ref{fig: b1_cycles_examples_parallelogram}.
\end{itemize}

Let $T$ be a right triangle in $\mathcal{G}$. If $T\subseteq U(\mathcal{P}_C)$, each corner of $T$ is the bend for at most two different members of $\mathcal{P}$, $P_2\cap_e P_3\neq\emptyset$, $P_3\cap_e P_4\neq\emptyset$, $P_4\cap_e P_1\neq\emptyset$, $P_2\cap_e P_4=\emptyset$ and $P_1\cap_e P_3=\emptyset$, then $\mathcal{P}$ is called a \emph{flag}. See Figure~\ref{fig: b1_cycles_examples_flag}.

Let $G=T_1\cup T_2$, such that $T_1\cap_v T_2=\{v\}$ where $v$ is a corner of both $T_1$ and $T_2$. If $G\subseteq U(\mathcal{P}_C)$, each corner of $G$ is the bend for a different member of $\mathcal{P}$, $P_2\cap_e P_3\neq\emptyset$, $P_3\cap_e P_4\neq\emptyset$, $P_4\cap_e P_1\neq\emptyset$, $P_2\cap_e P_4=\emptyset$ and $P_1\cap_e P_3=\emptyset$, then $\mathcal{P}$ is called a \emph{butterfly}. See Figure~\ref{fig: b1_cycles_examples_butterfly}.

Note that the existence of a third direction on the grid, when compared to a rectangular grid, allows the arising of new representations of a $4$-cycle. See in Figure~\ref{fig: b1_cycles_examples} some examples of representations of a $4$-cycle on a triangular grid.

\begin{figure}[htb]
\center
\subfigure[][]{\includegraphics[scale=0.6]{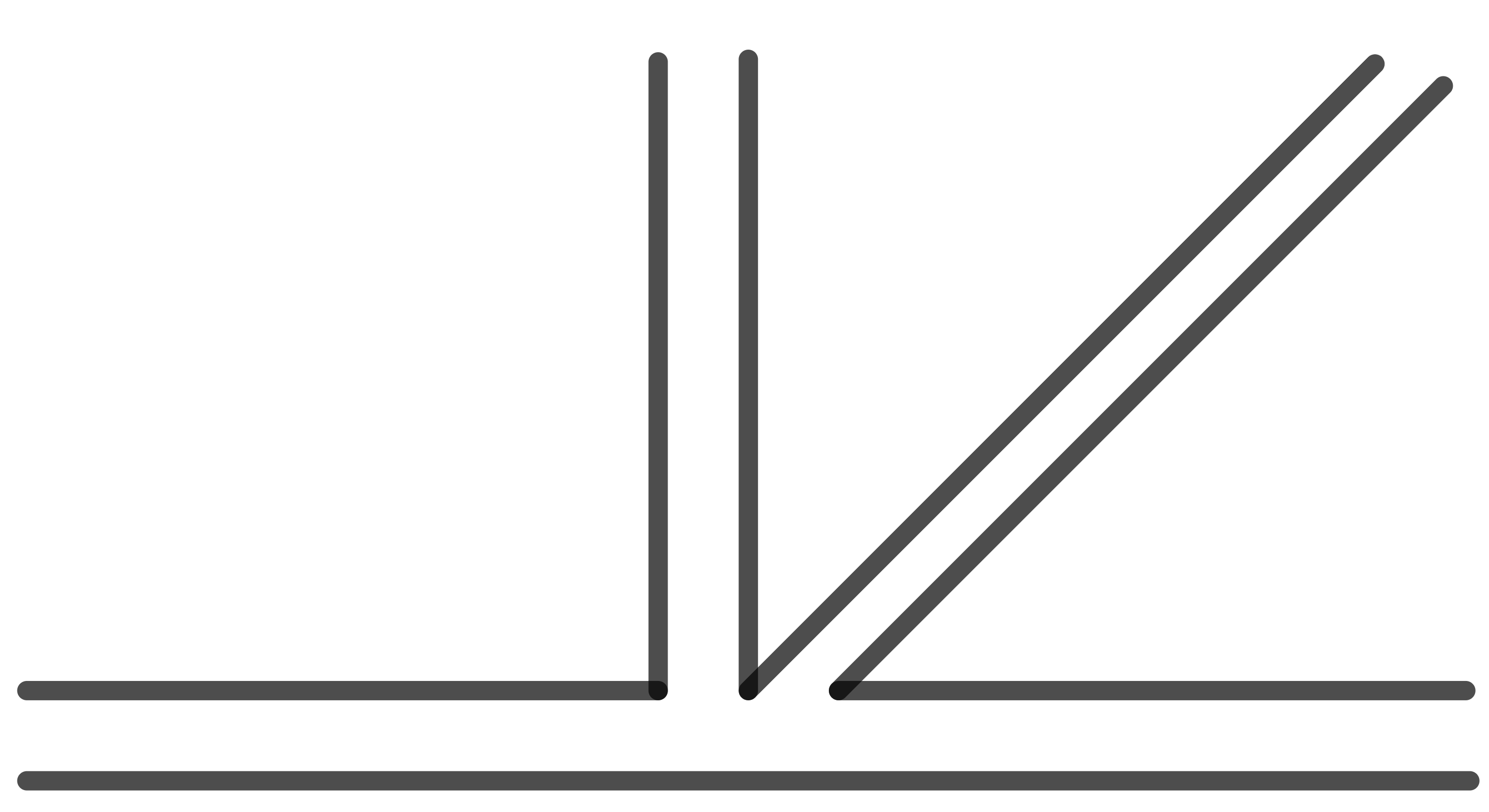} \label{fig: b1_cycles_examples_truepie}}
\qquad
\subfigure[][]{\includegraphics[scale=0.6]{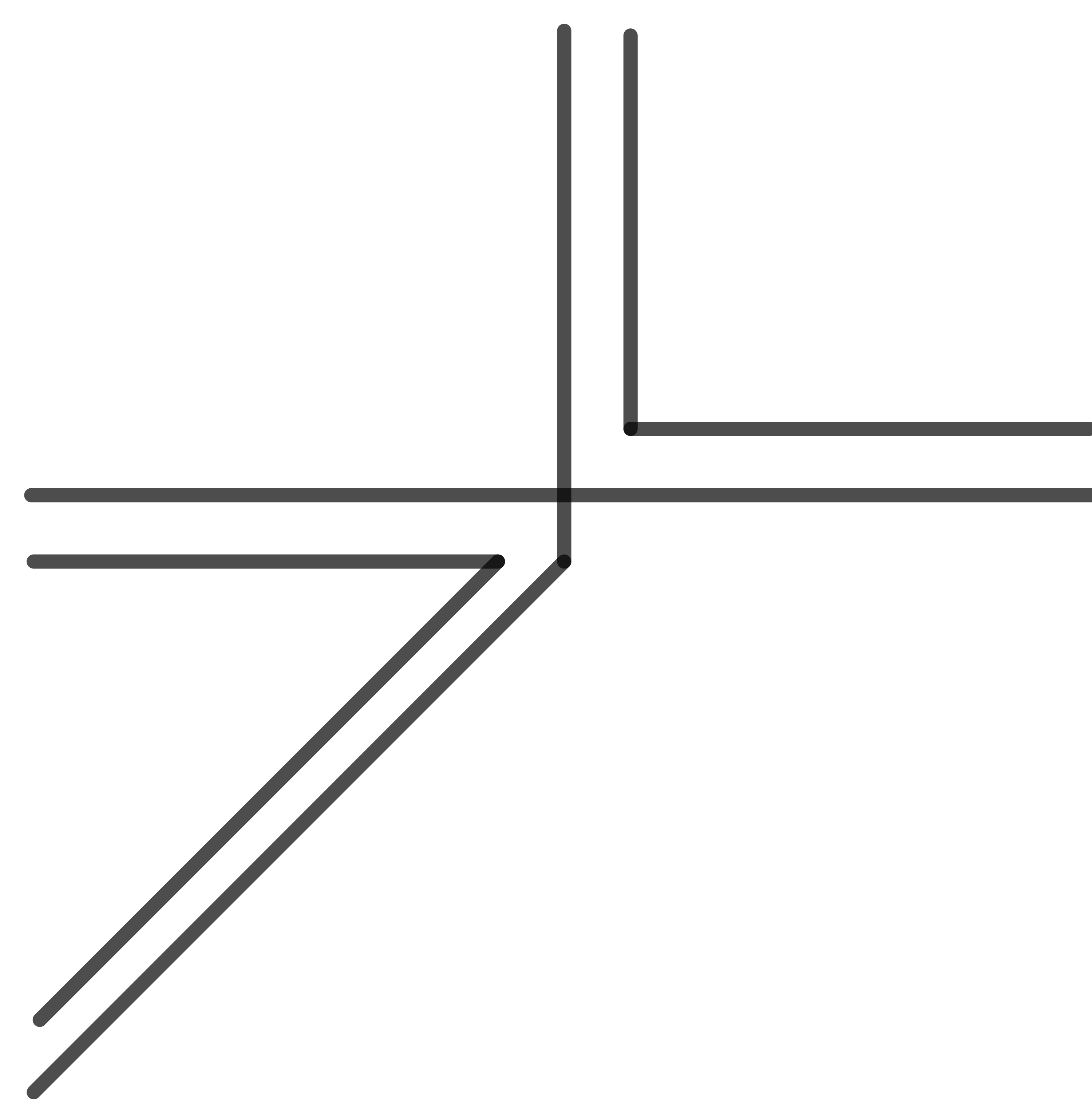} \label{fig: b1_cycles_examples_falsepie}}
\qquad
\subfigure[][]{\includegraphics[scale=0.5]{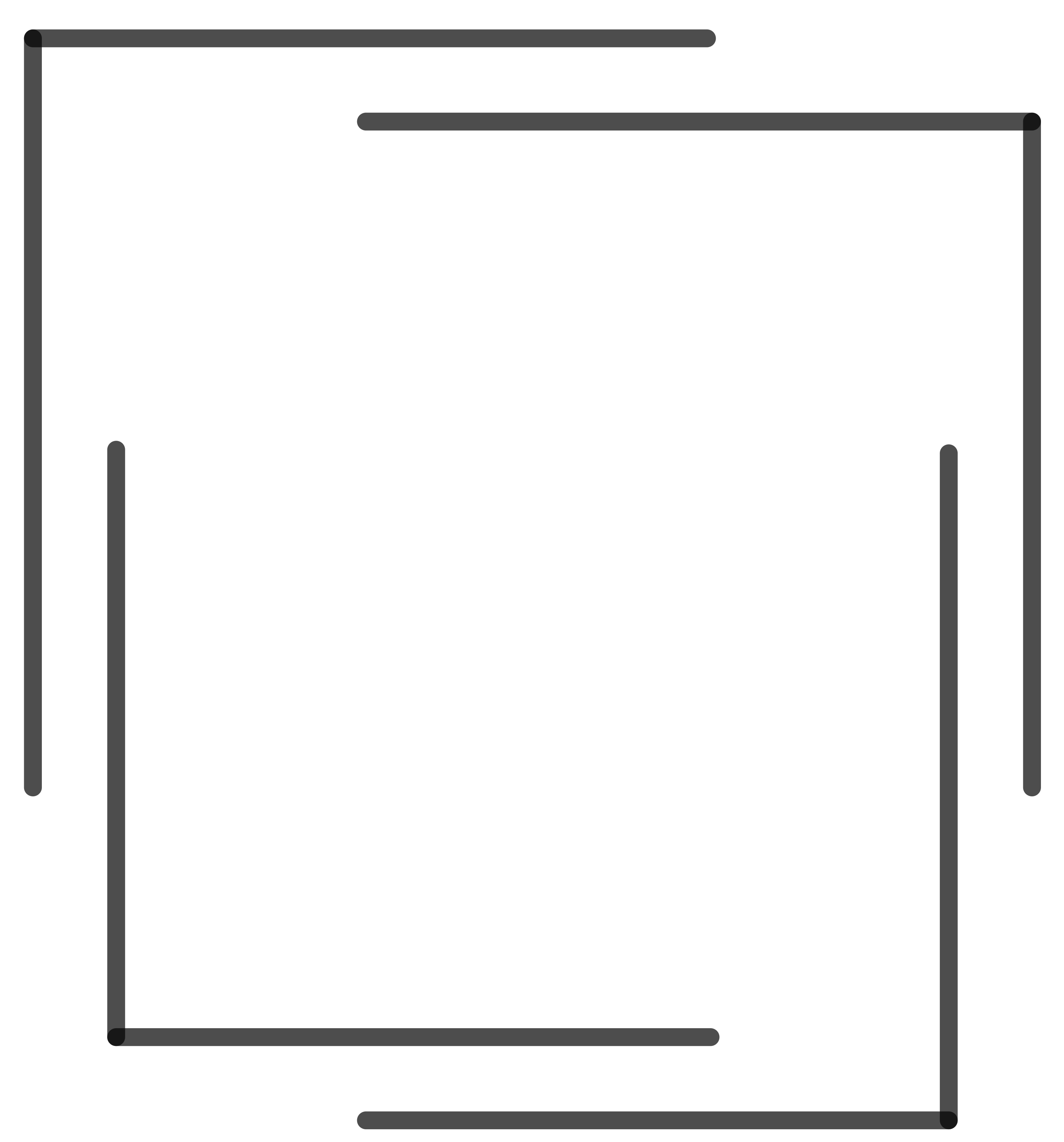} \label{fig: b1_cycles_examples_rectangle}}
\qquad
\subfigure[][]{\includegraphics[scale=0.5]{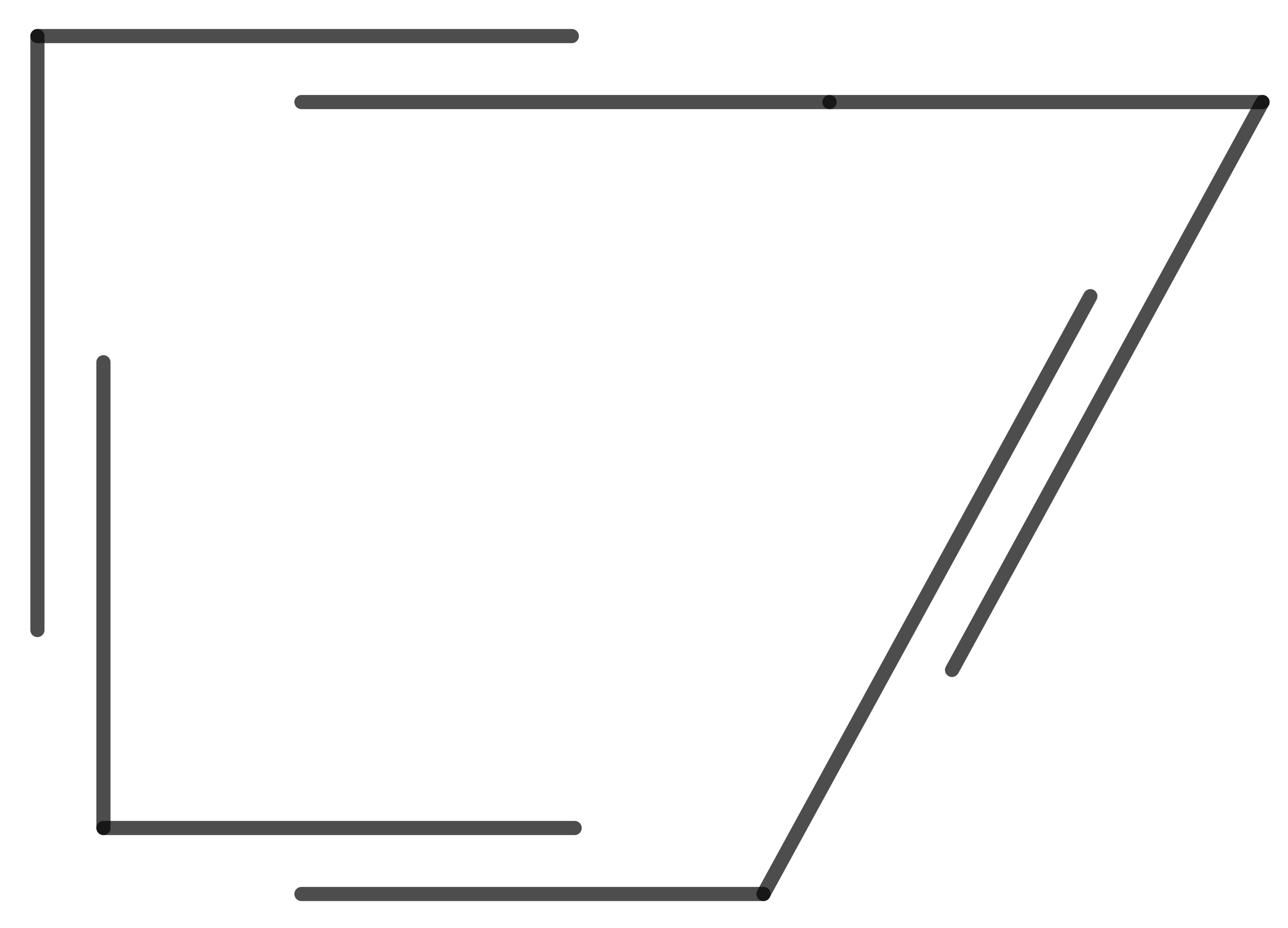} \label{fig: b1_cycles_examples_trapezoidal}}
\qquad
\subfigure[][]{\includegraphics[scale=0.5]{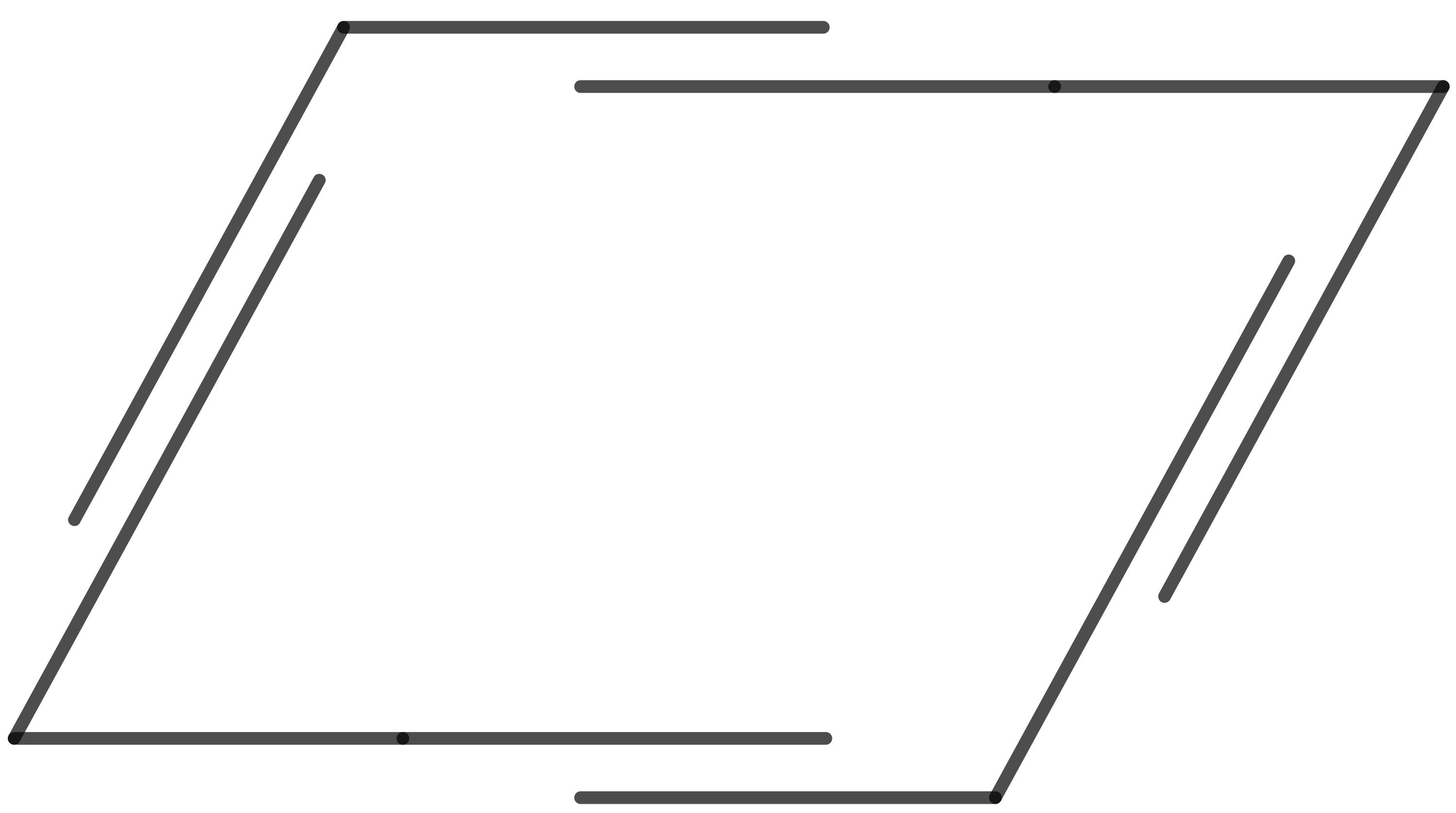} \label{fig: b1_cycles_examples_parallelogram}}
\qquad
\subfigure[][]{\includegraphics[scale=0.6]{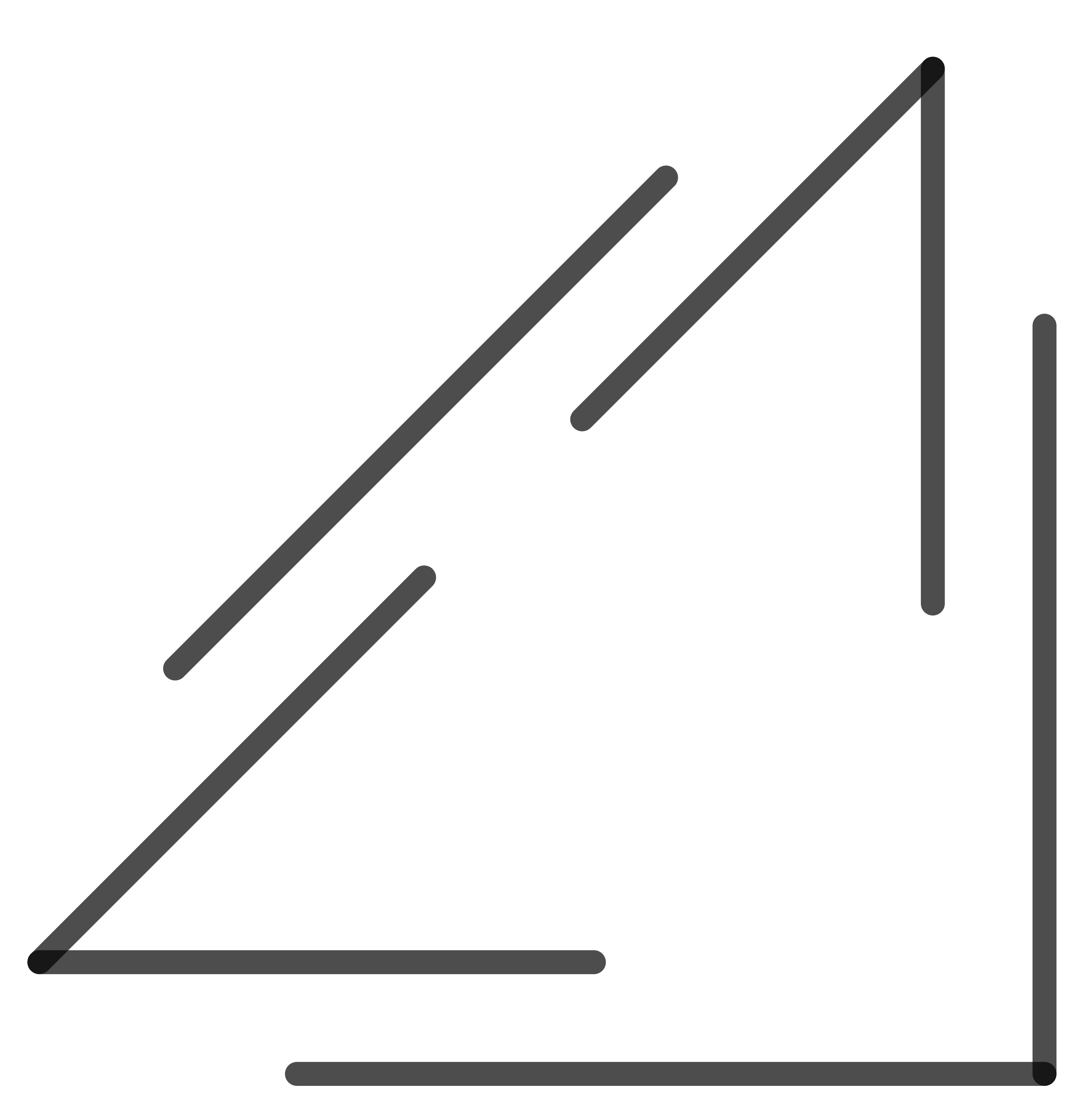} \label{fig: b1_cycles_examples_flag}}
\qquad
\subfigure[][]{\includegraphics[scale=0.5]{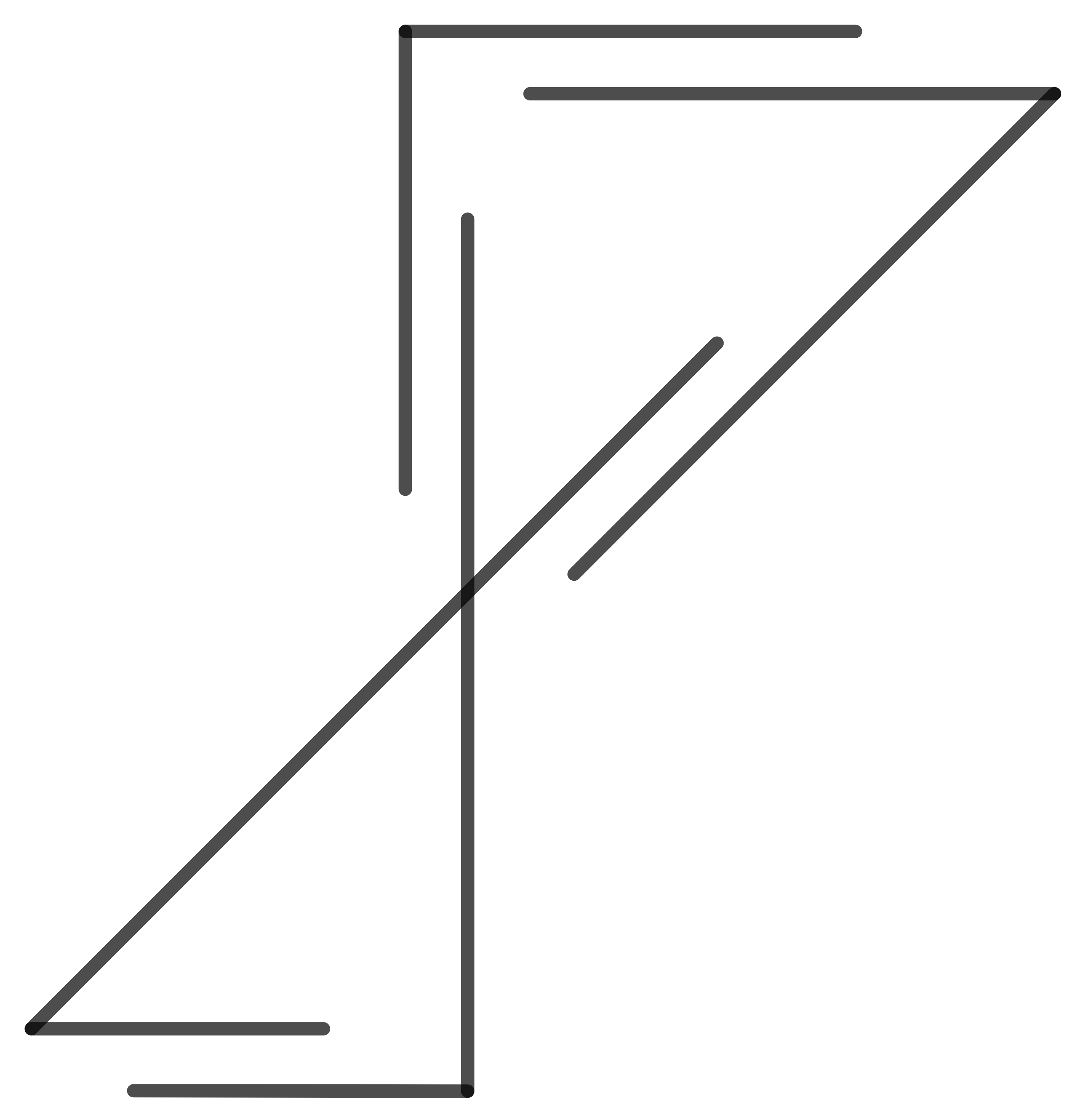} \label{fig: b1_cycles_examples_butterfly}}
\caption{Examples of B$_1$-EPG\textsubscript{t} representations of a $4$-cycle: (a) true-pie, (b) false-pie, (c) r-frame, (d) t-frame, (e) p-frame, (f) flag and (g) butterfly.}
\label{fig: b1_cycles_examples}
\end{figure}

\begin{thm} \label{thm_4-cycle_b1}
Let $\langle \mathcal{P},\mathcal{G}\rangle$ be a B$_1$-EPG\textsubscript{t} representation of a graph $G$. Then, every chordless $4$-cycle in $G$ corresponds to either a true pie, a false pie, a frame, a flag or a butterfly in $\mathcal{P}$.
\end{thm}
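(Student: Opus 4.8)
The plan is to follow the template of the proof of Theorem~\ref{thm cliques}: fix one representative shared edge for each consecutive pair, $e_{12}\in P_1\cap_e P_2$, $e_{23}\in P_2\cap_e P_3$, $e_{34}\in P_3\cap_e P_4$, $e_{41}\in P_4\cap_e P_1$, keep in mind that $P_1\cap_e P_3=\emptyset$ and $P_2\cap_e P_4=\emptyset$, and classify each $P_i$ according to how its two incident shared edges $e_{i-1,i}$ and $e_{i,i+1}$ (indices mod $4$) sit on it: either both lie on a single segment of $P_i$ (type $S$), or $P_i$ bends between them, so that they lie on its two distinct segments separated by its bend point (type $B$). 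The whole argument is then a case analysis on the number of type-$B$ paths and on their cyclic position.

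First I would dispose of the small counts. If $P_i$ is type $S$, then $e_{i-1,i}$ and $e_{i,i+1}$ are co-linear, and chaining this observation around the cycle shows that, as soon as at least three of the paths are type $S$, all four shared edges lie on a single grid line $\ell$. But then the restrictions of $P_1,\dots,P_4$ to $\ell$ would be an interval-like family whose edge-intersection graph contains the chordless $4$-cycle $P_1P_2P_3P_4$ --- impossible, since $P_1\cap_e P_3=\emptyset$ forces both $P_2$ and $P_4$ to span the gap between $P_1$ and $P_3$ along $\ell$, contradicting $P_2\cap_e P_4=\emptyset$. Hence exactly two, three or four of the paths are type $B$.

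Next I would go through the three surviving cases. If exactly two paths are type $B$ (whether adjacent or opposite in the cycle), the segments carrying the four shared edges all lie on just two grid lines, and those lines must meet in a single point $b$ --- parallel lines are excluded because a $1$-bend path cannot run from one of two parallel lines to the other (Remark~\ref{1bend_int_3}); a short count of which of the (at most four) half-lines emanating from $b$ each path occupies, combined with the cycle's (non-)intersection pattern, then identifies $U(\mathcal{P})$ as a subdivision of a $4$-star and $\mathcal{P}$ as a true pie or a false pie. If exactly three paths are type $B$, three grid lines occur; when they are concurrent we are again in the $4$-star situation (a true pie), and when they are not, they lie in three distinct directions (only opposite shared edges can be co-linear, and two non-parallel lines meet in a point), hence bound a right triangle $T$ whose corners are precisely the three bend points, and checking that each side of $T$ is covered by the two paths bending at its endpoints yields a flag. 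If all four paths are type $B$, the lines $\ell_{12},\ell_{23},\ell_{34},\ell_{41}$ carry the bends $b_1\in\ell_{41}\cap\ell_{12}$, $b_2\in\ell_{12}\cap\ell_{23}$, $b_3\in\ell_{23}\cap\ell_{34}$, $b_4\in\ell_{34}\cap\ell_{41}$; consecutive lines cannot be parallel, so when these four lines are distinct an opposite pair is forced to be parallel (pigeonhole on the three directions), and according to whether the quadrilateral $b_1b_2b_3b_4$ closes up convexly or crosses itself one gets an r-, t- or p-frame (with $Q$ a rectangle, trapezoid or parallelogram) or a butterfly (two right triangles sharing the crossing corner); when the four lines are not all distinct, the bends collapse onto at most two points and one is thrown back into the pie situation, or reaches a contradiction with $P_2\cap_e P_4=\emptyset$.

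The delicate part is not any individual case but the bookkeeping common to all of them: verifying, for each configuration, that the named subgraph --- the $K_{1,4}$-subdivision, the quadrilateral $Q$, the right triangle $T$, or the pair $T_1\cup T_2$ --- is genuinely contained in $U(\mathcal{P})$ and, in the pie cases, that $U(\mathcal{P})$ equals it; this requires pinning down the position of each shared edge relative to the bend points (the edge always lies on the closed side of a bend point toward the other shared edge, never beyond it, for otherwise a forbidden edge-intersection would be created). A secondary point to watch is the exceptional case of Remark~\ref{1bend_int} in which two consecutive paths share a bend point and type, so $P_i\cap_e P_j$ is not confined to one line; these instances fold into the relevant sub-cases, typically just reinforcing that a single crossing point $b$ is forced rather than contributing new geometry.
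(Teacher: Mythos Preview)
Your organization---classifying each $P_i$ as type $S$ or $B$ according to whether its two incident shared edges lie on one segment or straddle the bend, and then casing on the number of type-$B$ paths---is a clean decomposition and a genuinely different one from the paper's, which instead splits first on whether $\bigcap_v P_i$ is nonempty and then on coincidences among bend points. Your treatment of $\le 1$ type-$B$ (impossible), exactly two (pie; the adjacent sub-case is in fact vacuous), and exactly three (true pie if concurrent, flag otherwise) is sound at the level of a plan.

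There is, however, a real gap in the all-type-$B$ case. You dichotomize on whether the four lines $\ell_{12},\ell_{23},\ell_{34},\ell_{41}$ are all distinct, claiming that when they are, the quadrilateral $b_1b_2b_3b_4$ is either convex (a frame) or self-crossing (a butterfly), and that otherwise the bends collapse to a pie or a contradiction. But the four lines can be distinct while \emph{three of them are concurrent}. Take $\ell_{41}$ horizontal, $\ell_{12}$ vertical and $\ell_{23}$ diagonal, all through the origin, and $\ell_{34}$ the vertical line $x=t$ for some $t\neq 0$: then $b_1=b_2=(0,0)$ while $b_3=(t,t)$ and $b_4=(t,0)$. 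All four paths are type $B$, all four lines are distinct, yet the ``quadrilateral'' $b_1b_2b_3b_4$ degenerates to a right triangle and the representation is a \emph{flag}---not a frame, not a butterfly, not a pie. The paper isolates precisely this situation as the sub-case ``$\bigcap_v P_i=\emptyset$, every path has one bend, and two paths bend at the same point'' and derives the flag there. Your clause ``when the four lines are not all distinct\dots'' does not catch it, since the lines \emph{are} distinct; the right extra test is whether two adjacent bend points coincide (equivalently, whether three of the four lines concur), and that branch must be routed to the flag outcome.
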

\begin{proof}
Let $C_4 = (v_1, v_2, v_3, v_4)$ be a chordless cycle in $G$. Let $\mathcal{P}_C=\{P_i\mid1\leq i\leq4\}$, where $P_i$ is the path in $\mathcal{G}$ corresponding to $v_i$. If $\mathcal{P}_C$ contains paths that use only two directions of $\mathcal{G}$, then $\mathcal{P}_C$ is an EPG representation. Thus, as proved in~\cite{golumbic2009edge}, $\mathcal{P}_C$ is either a true pie, a false pie or an r-frame. Therefore, from now on we are assuming $\mathcal{P}_C$ contains paths that use all three directions of $\mathcal{G}$.

Suppose $\bigcap_v P_i\neq\emptyset$, then clearly $\bigcap_v P_i=\{b\}$, for some grid point $b$. If each path $P_i$ contains exactly two grid edges with endpoint $b$, we obtain a star subgraph with center point $b$ and edges $(a_1, b)$, $(a_2, b)$, $(a_3, b)$, $(a_4, b)$.
Without loss of generality, $P_1$ contains the grid edges $(a_1, b)$, $(a_2, b)$. If $P_2$ contains the grid edges $(a_2, b)$, $(a_3, b)$ or $(a_4, b)$, $(a_1, b)$, then we obtain a true pie. Otherwise, $P_2$ contains the grid edges $(a_2, b)$, $(a_4, b)$ or $(a_1, b)$, $(a_3, b)$ and we obtain a false pie.

If at least one of $P_1,\ldots,P_4$ contains only one grid edge with endpoint $b$, let $P_1'$ and $P_1''$ be the segments forming $P_1$, and without loss of generality assume $P_1$ contains only one grid edge with endpoint $b$, $P_1'$ is horizontal and $b\in P_1'$. It is possible to assume that $P_1'$ is horizontal because the current assumption is that there are segments in all directions of $\mathcal{G}$, thus, without loss of generality $P_1$ is the path having a horizontal segment. Suppose $P_1'\cap_v P_2=\{b\}$ and $P_1'\cap_v P_4=\{b\}$. Let $P_2'$ and $P_2''$ be the segments forming $P_2$, $P_4'$ and $P_4''$ the segments forming $P_4$, and without loss of generality assume $b\in P_2'$ and $b\in P_4'$. Thus, $P_2'$ and $P_4'$ are either both vertical, or both diagonal, or one is vertical and the other is diagonal, or one is horizontal and the other is vertical/diagonal, and $P_1\cap_e P_2\subseteq P_1''$ and $P_1\cap_e P_4\subseteq P_1''$. Let us analyze each case separately, considering without loss of generality that $b$ is the leftmost point in $P_1'$. Note that, the cases in which $b$ is the rightmost point in $P_1'$ are isomorphic to the ones in which $b$ is the leftmost point up to two flip operations in a row, one horizontal and one vertical in any order.
\begin{itemize}
    \item[(i)] Suppose $P_2'$ and $P_4'$ are both vertical. Since $P_2\cap_e P_4=\emptyset$, $P_2'$ and $P_4'$ are in opposite directions. Without loss of generality, assume $b$ is the lowest point in $P_2'$. By Remark~\ref{1bend_int_1}, $P_1''$ and $P_2''$ must be on the same grid line $l$. Note that $l$ must be diagonal. Since $\mathcal{G}$ has only one diagonal direction, either $b$ is the bend point of both $P_1$ and $P_2$, or the bend point of $P_1$ is to the left of $b$. Then, in both cases $P_1$ has two grid edges with endpoint $b$ (contradiction!).
    \item[(ii)] Suppose $P_2'$ and $P_4'$ are both diagonal. Since $P_2\cap_e P_4=\emptyset$, $P_2'$ and $P_4'$ are in opposite directions. Without loss of generality, assume $b$ is the highest point in $P_2'$. By Remark~\ref{1bend_int_1}, $P_1''$ and $P_2''$ must be on the same grid line $l$. Note that $l$ must be vertical. Thus, either $b$ is the bend point of both $P_1$ and $P_2$, or the bend point of $P_1$ is to the left of $b$. Then, in both cases $P_1$ has two grid edges with endpoint $b$ (contradiction!).
    \item[(iii)] Suppose one of $P_2'$ and $P_4'$ is vertical, say $P_2'$, and the other is diagonal. By Remark~\ref{1bend_int_1}, $P_1''$ must be on the same grid line $l$ as both $P_2''$ and $P_4''$. This is a contradiction, because $P_2''$ and $P_4''$ are in different directions.
    \item[(iv)] If one of $P_2'$ and $P_4'$ is horizontal, by Remark~\ref{1bend_int_2}, $b$ must be the bend point of $P_1$. This is a contradiction, because $P_1$ contains only one grid edge with endpoint $b$.
\end{itemize}

Therefore, without loss of generality assume $P_1\cap_e P_2\subseteq P_1'$. Since $P_1$ contains only one grid edge with endpoint $b$, $P_2\cap_e P_4=\emptyset$ and $b\in P_4$, then $P_1\cap_e P_4\subseteq P_1''$. Note that, since $b\in P_1$, $b\in P_2$, $b\in P_4$, $P_1\cap_e P_2\subseteq P_1'$ and $P_1\cap_e P_4\subseteq P_1''$, then $U(\mathcal{P}_C\setminus\{P_3\})\cong T$, where $T$ is a right triangle on the grid. Note that $P_3\cap_e P_2\neq\emptyset$, $P_3\cap_e P_4\neq\emptyset$, $P_3\cap_e P_1=\emptyset$ and $b\in P_3$. Note also that, if $P_3\cap_e P_4\subset P_4''$, then $P_3\cap_e P_4'\neq\emptyset$, since $b\in P_3$ and $P_3\cap_e P_1=\emptyset$. Therefore, $P_3\cap_e P_4\subseteq P_4'$. Thus, if $P_3\cap_e P_2\subset P_2'$, $\mathcal{P}_C$ is a flag and if $P_3\cap_e P_2\subseteq P_2''$, $\mathcal{P}_C$ is either a flag or a butterfly. See Figure~\ref{fig: thm cycles}.
\begin{figure}[htb]
    \centering
    \subfigure[][]{\includegraphics[scale=0.5]{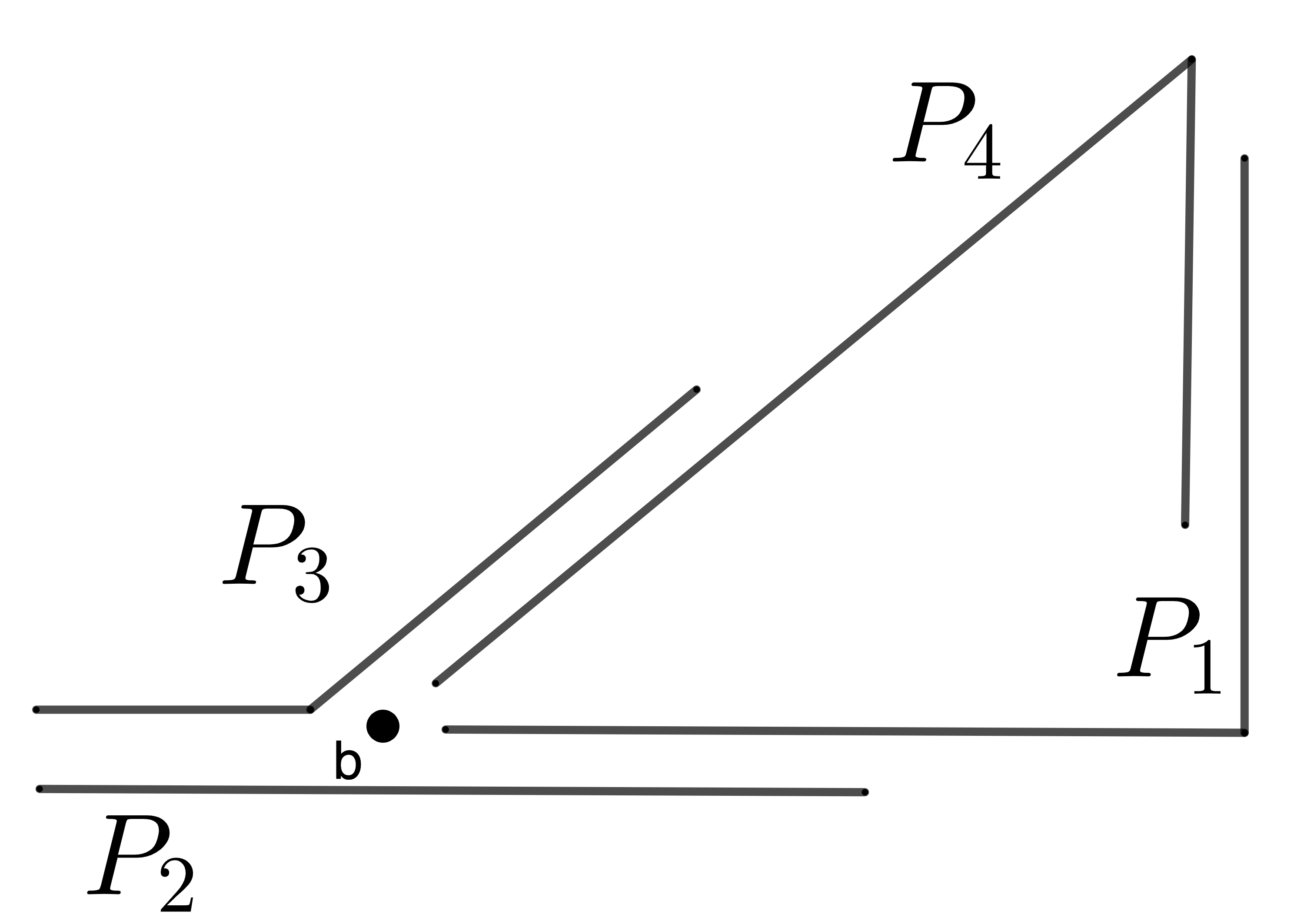}}
    \qquad
    \subfigure[][]{\includegraphics[scale=0.5]{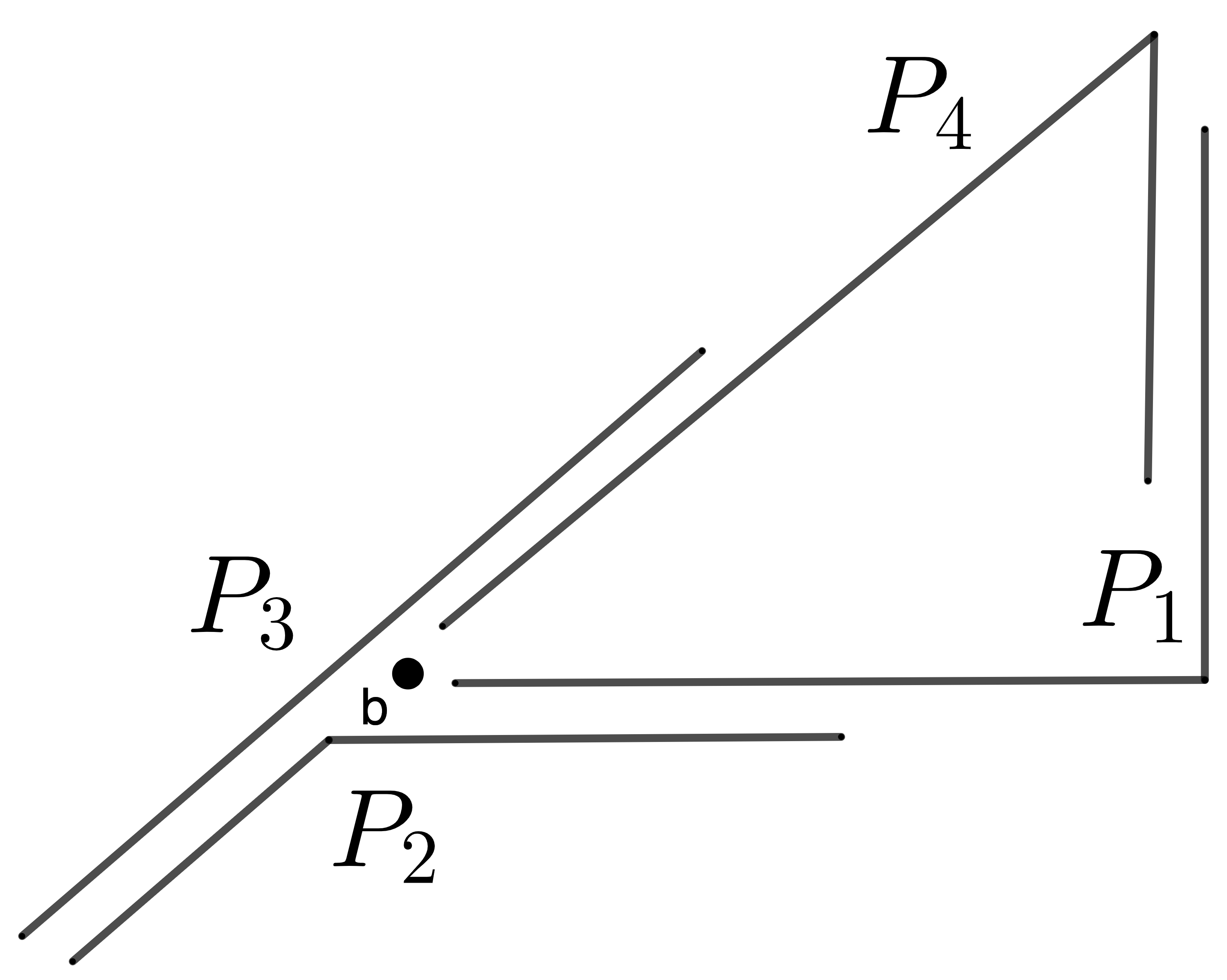}}
    \qquad
    \subfigure[][]{\includegraphics[scale=0.5]{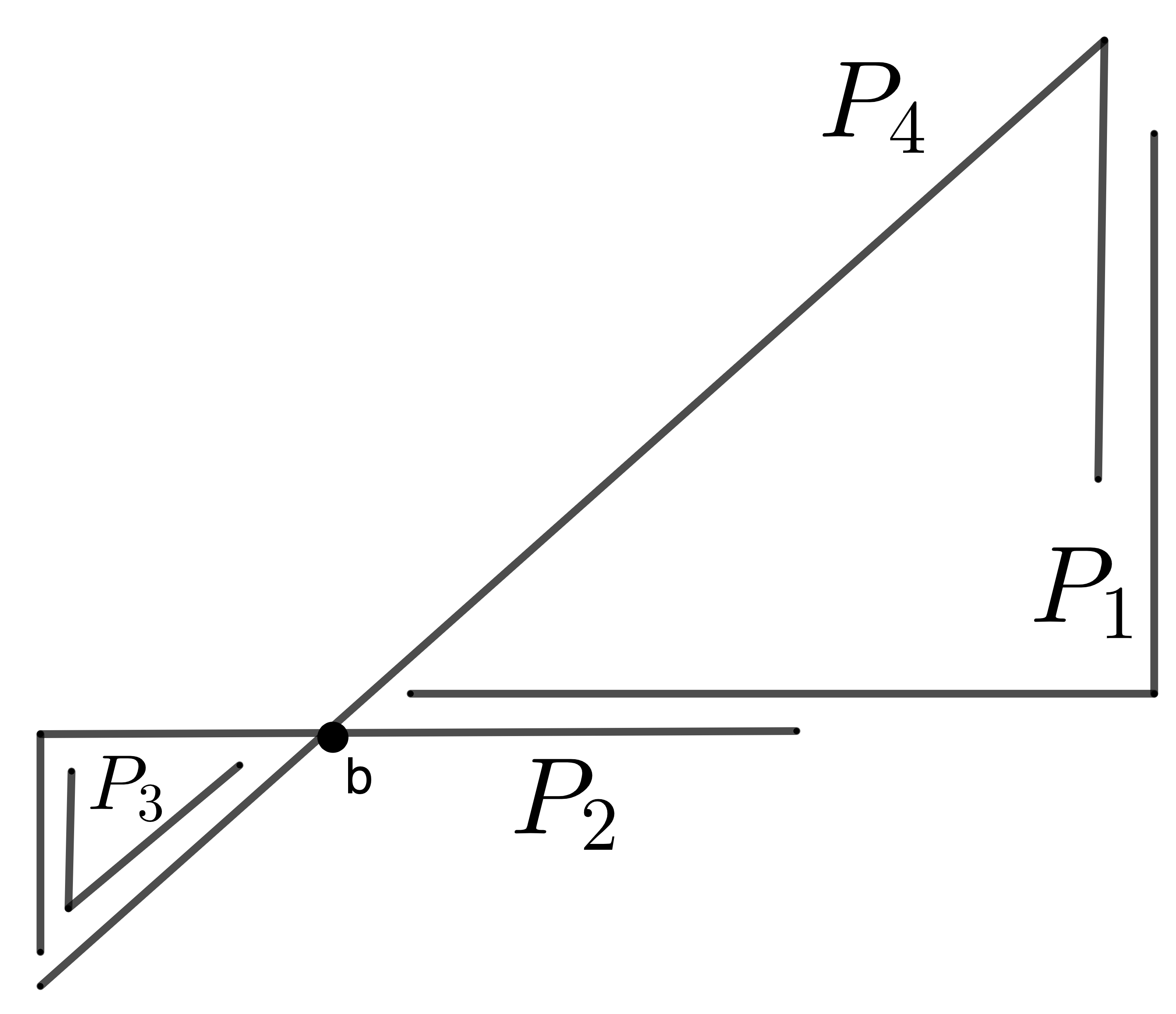}}
    \qquad
    \subfigure[][]{\includegraphics[scale=0.5]{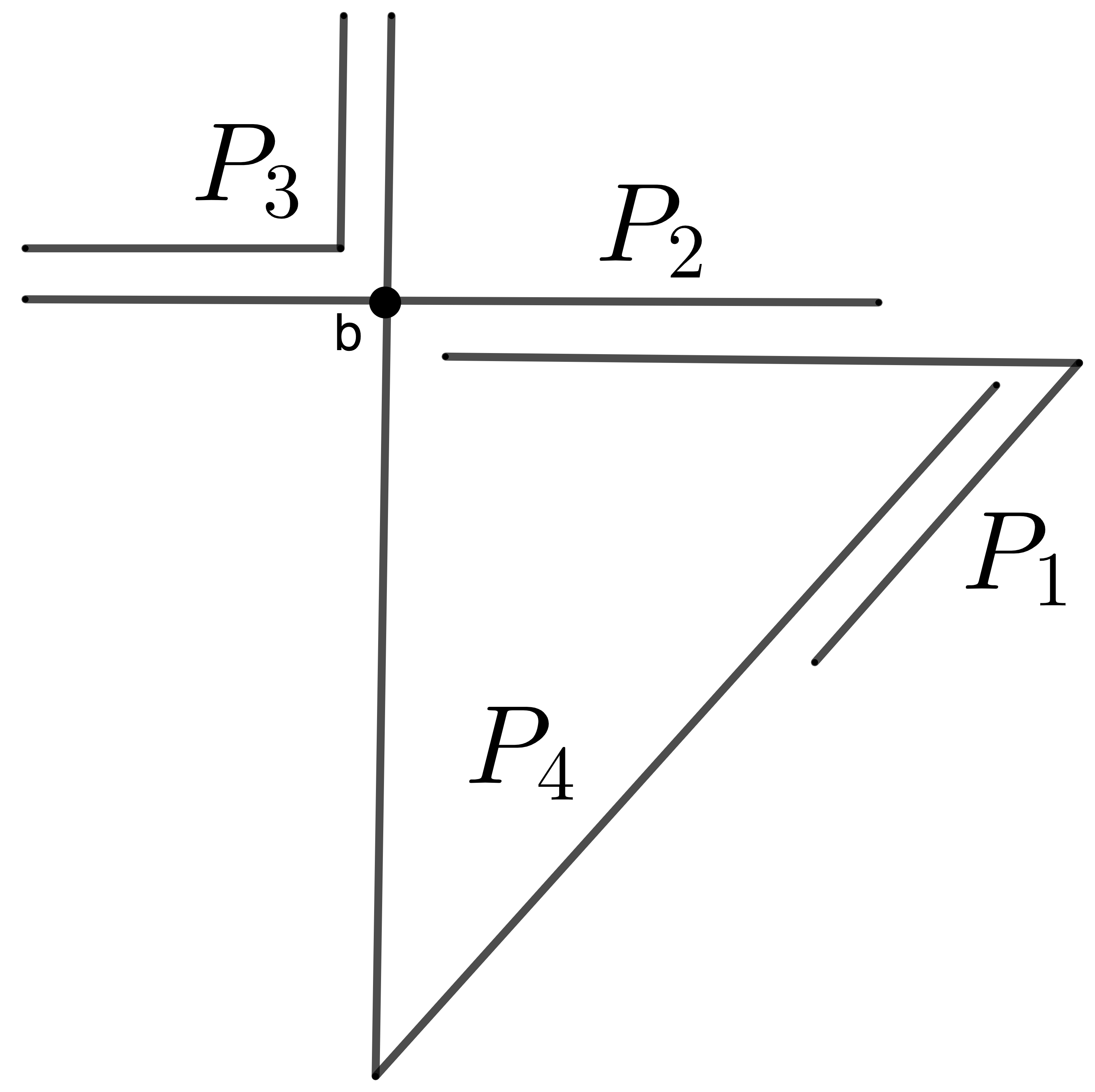}}
    \qquad
    \subfigure[][]{\includegraphics[scale=0.5]{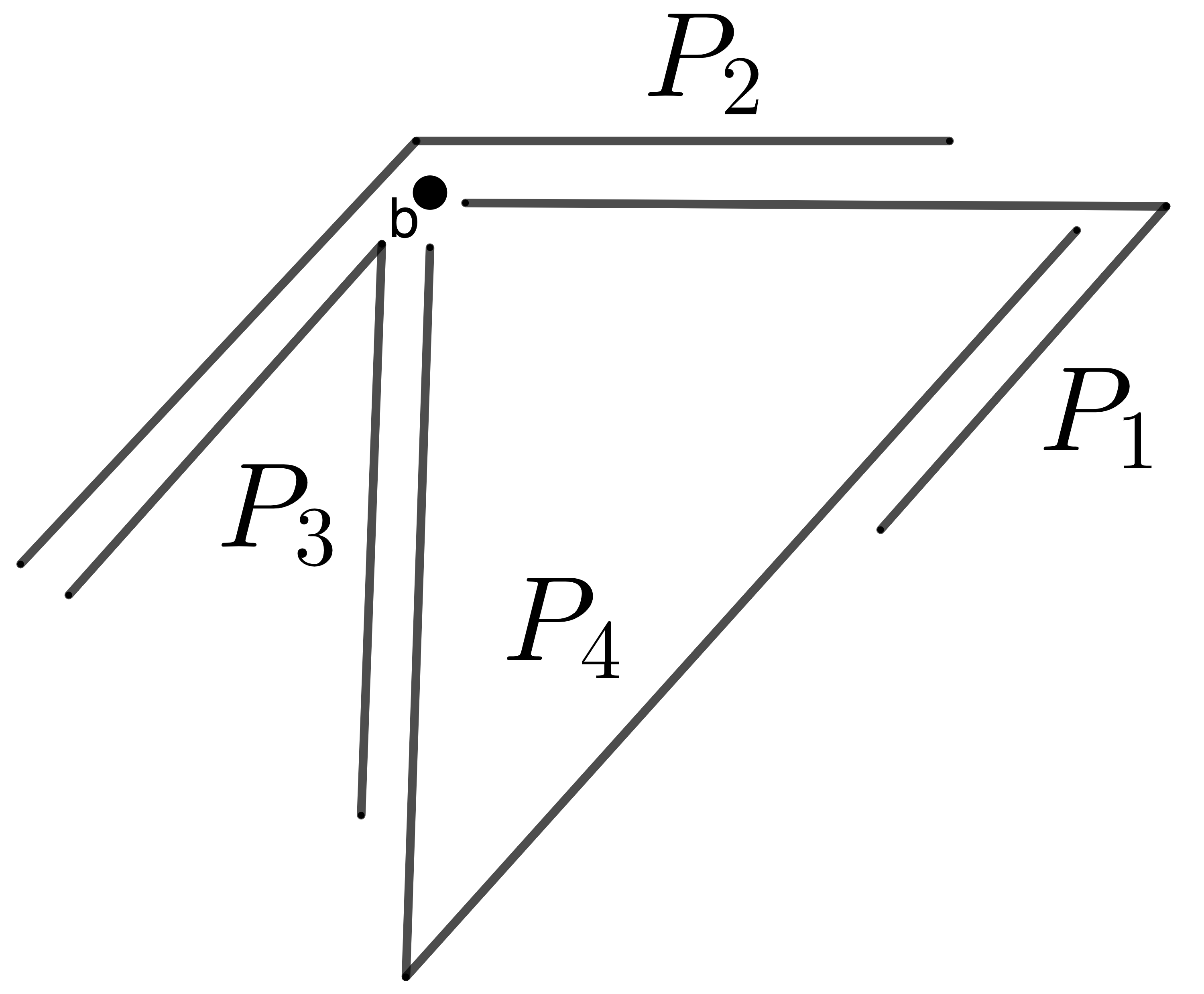}}
    \qquad
    \subfigure[][]{\includegraphics[scale=0.5]{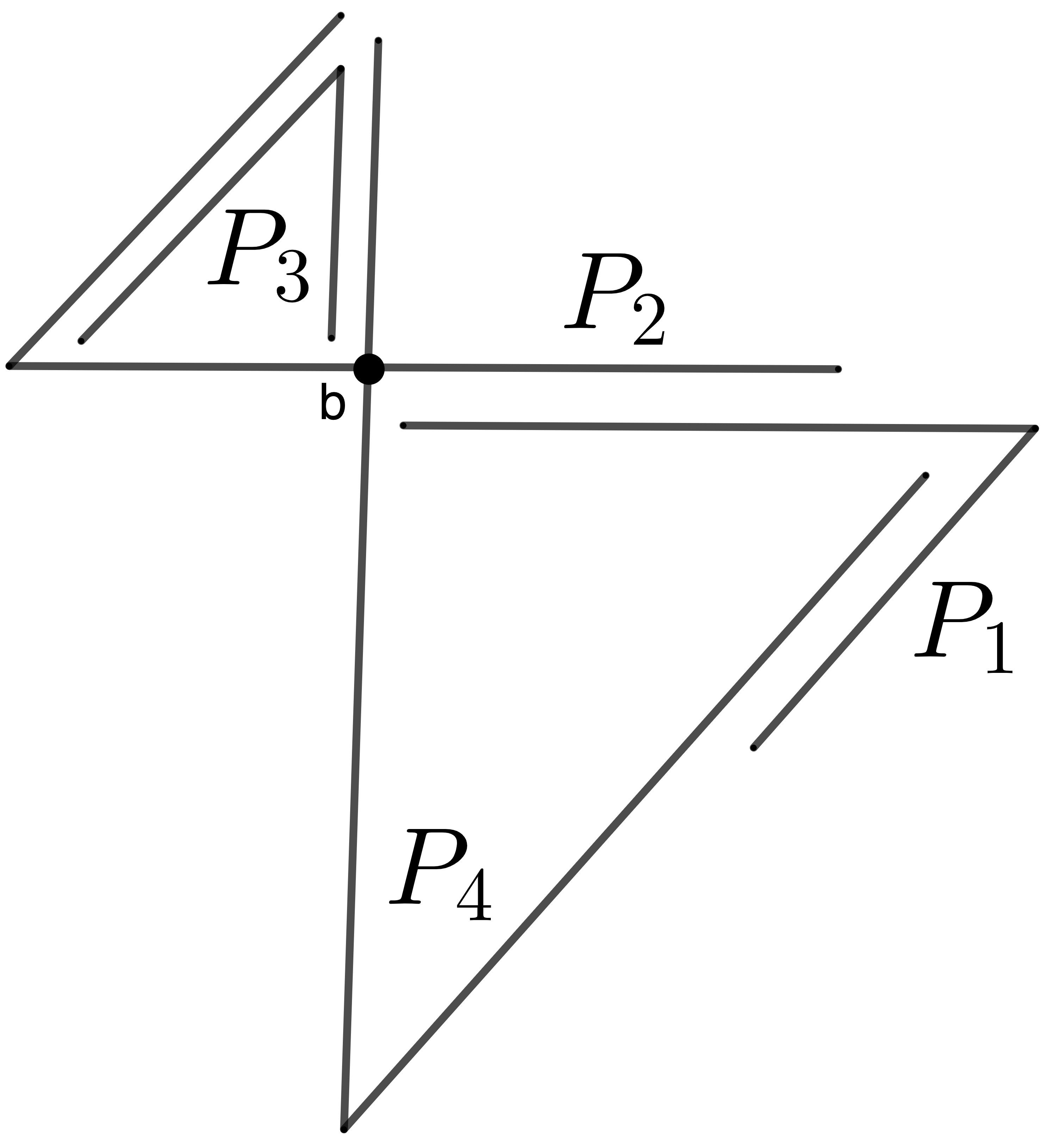}}
    \qquad
    \subfigure[][]{\includegraphics[scale=0.5]{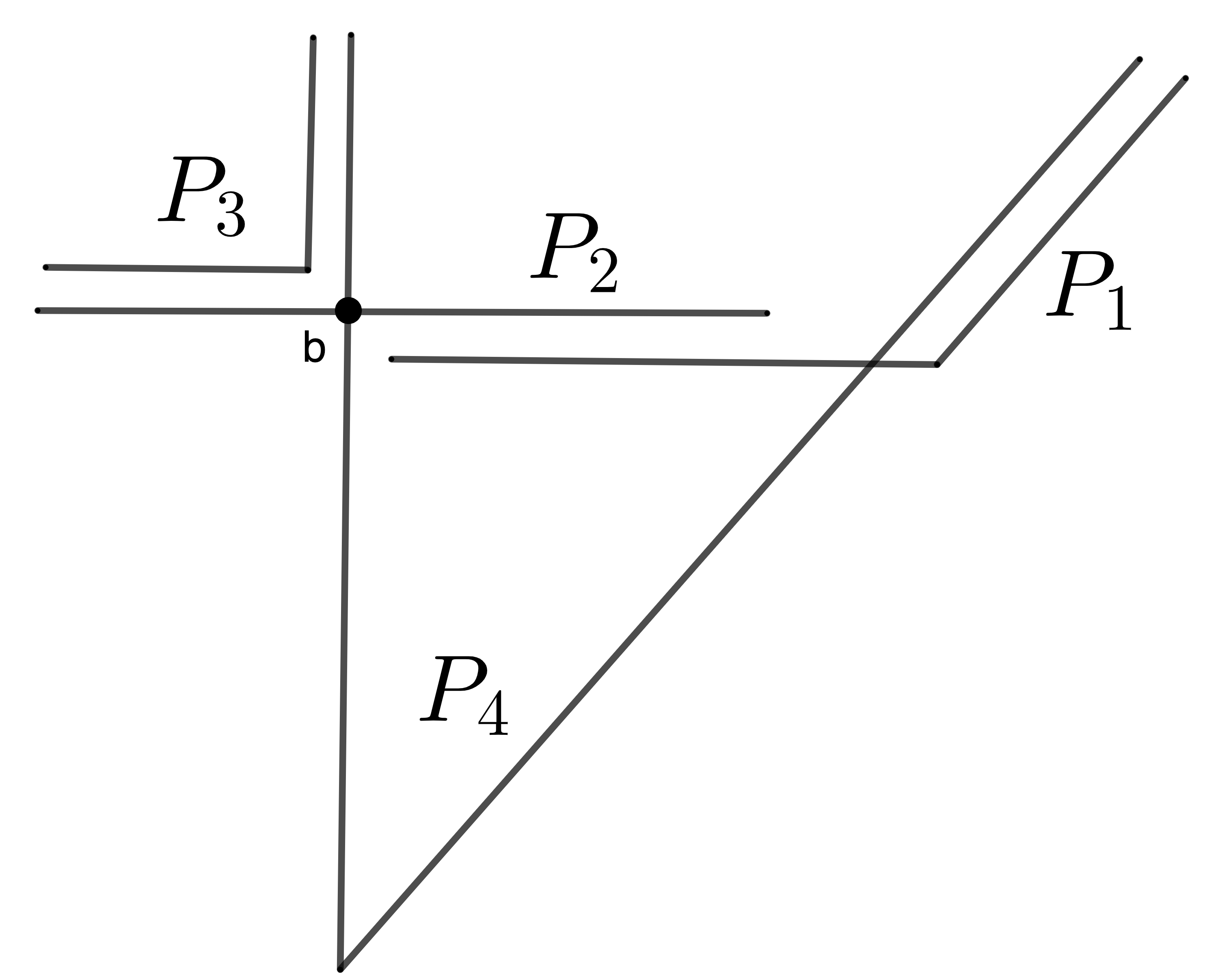}}
    \qquad
    \subfigure[][]{\includegraphics[scale=0.5]{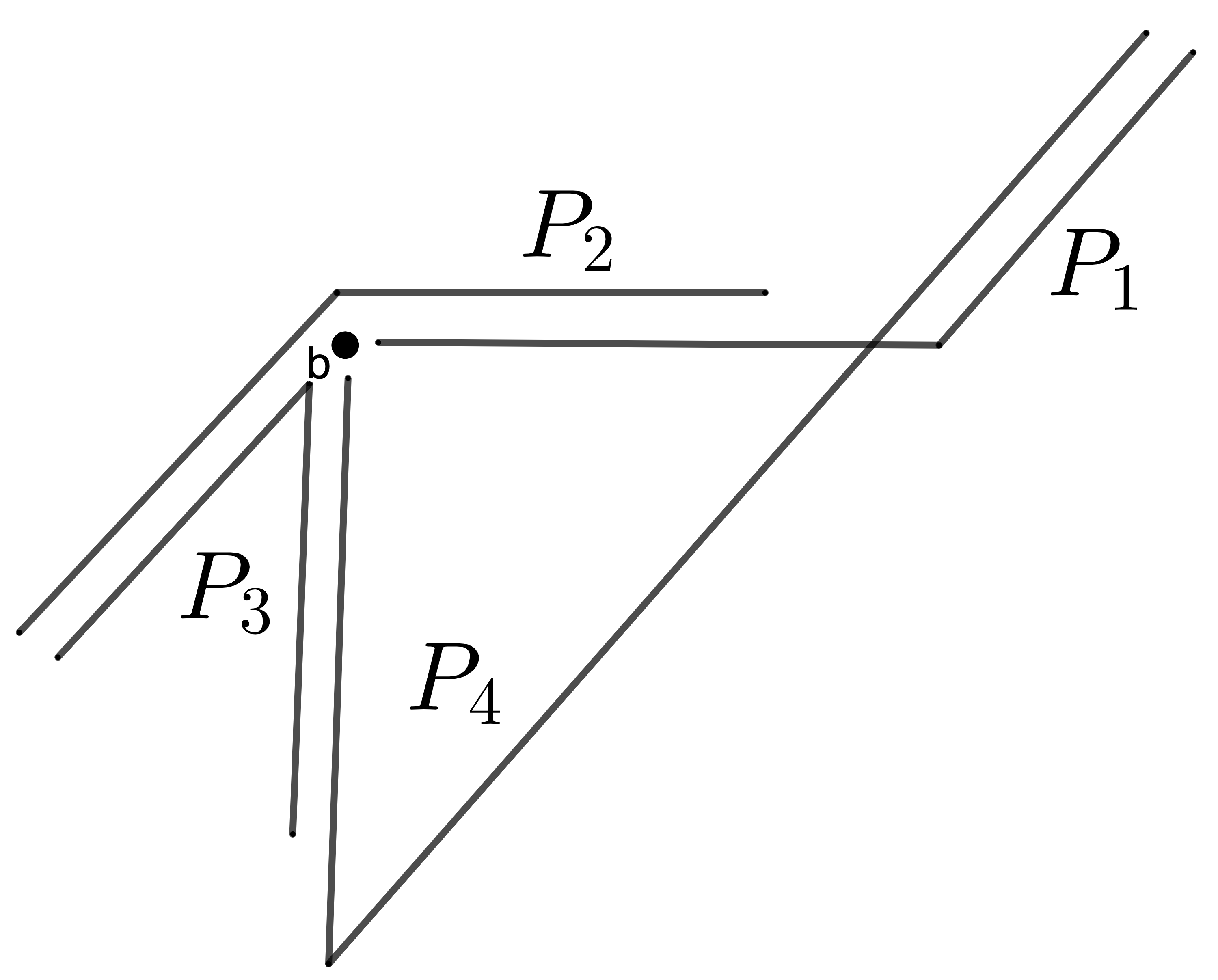}}
    \qquad
    \subfigure[][]{\includegraphics[scale=0.5]{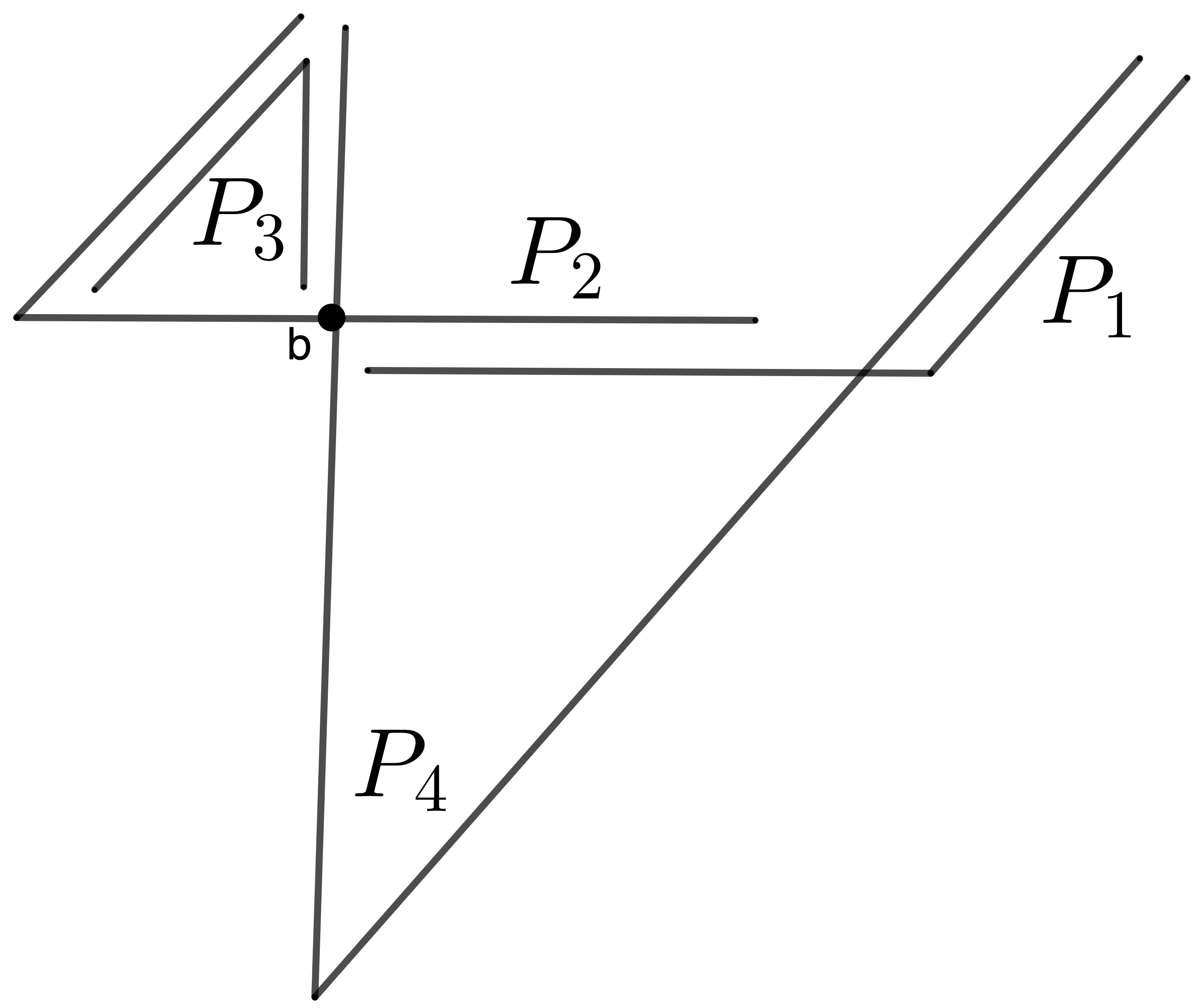}}
    \caption{Some possible configurations of a $C_4$ when $\bigcap_v P_i=\{b\}$ and at least one path contains only one grid edge with endpoint $b$.}
    \label{fig: thm cycles}
\end{figure}

Otherwise, $\bigcap_v P_i=\emptyset$. Assume there is at least one path with no bends and without loss of generality let $P_1$ be such a path. Note that, $P_1\cap_e P_2\neq\emptyset$, $P_1\cap_e P_4\neq\emptyset$ and $P_2\cap_e P_4=\emptyset$. If neither $P_2$ nor $P_4$ have a bend, then we obtain an interval representation of $C_4\setminus\{v_3\}$. However, in this case we cannot add the path $P_3$ with at most one bend. If both $P_2$ and $P_4$ have a single bend, consider the following cases: If $P_2$ and $P_4$ bend in the same direction, by Remark~\ref{1bend_int_3} the path $P_3$ cannot be added with one bend. Otherwise, let $b_2=(x_1, y_1)$ be the bend point of $P_2$, $b_4=(x_2, y_1)$ the bend point of $P_4$, $l_2$ the grid line containing a segment of $P_2$ such that $l_2\cap_v P_1=\{b_2\}$, $l_4$ the grid line containing a segment of $P_4$ such that $l_4\cap_v P_1=\{b_4\}$ and $l_2\cap_v l_4=\{b_3\}$. Then, $b_3=(x_i, y_2)$, for some $i\in\{1,2\}$, and $P_3$ can be a path that bends at $b_3$, in which case we obtain a flag.

Now, assume every $P_1,\ldots,P_4$ has a single bend and let $P_i'$ and $P_i''$ be the segments of $P_i$ and $b_i$ the bend point of $P_i$, for all $1\leq i\leq4$. Without loss of generality, we can assume that $P_i'\supseteq P_i\cap_e P_j$ and $P_i''\supseteq P_i\cap_e P_k$ for some $1\leq j,k\leq4$, such that $i\neq j$ and $i\neq k$. That is, a path $P_i$ edge intersect its neighbors on both segments. If that was not the case, $P_i$ could be treated as a path with no bends, since only one of its segments would be in fact relevant to the analysis, but this scenario was already considered previously. Consider at least two paths, $P_i$ and $P_j$, of $\mathcal{P}_C$ bend at the same point, that is, $b_i=b_j=b$.
\begin{itemize}
    \item If $P_i\cap_e P_j=\emptyset$, without loss of generality assume $P_i=P_1$, $P_j=P_3$, $P_1'\supseteq P_1\cap_e P_2$, $P_1''\supseteq P_1\cap_e P_4$, $P_3'\supseteq P_3\cap_e P_2$, $P_3''\supseteq P_3\cap_e P_4$. Thus, $b\in P_2$ and $b\in P_4$ (contradiction!).
    \item If $P_i\cap_e P_j\neq\emptyset$, without loss of generality assume $P_i=P_1$, $P_j=P_2$, $P_1'\supseteq P_1\cap_e P_2$, $P_2'\supseteq P_1\cap_e P_2$. Note that it is not possible that $P_1\subseteq P_2$ or $P_2\subseteq P_1$, since $P_1\cap_e P_4\neq\emptyset$, $P_2\cap_e P_4=\emptyset$, $P_2\cap_e P_3\neq\emptyset$ and $P_1\cap_e P_3=\emptyset$.
    
    Suppose $P_1''$ and $P_2''$ are on the same grid line and in the same direction. Thus, all segments of $P_1$ and $P_2$ are on the same grid lines. Note that $P_1\cap_e P_2$ is also contained in $P_1''$ and $P_2''$. Let $l_1$ be the grid line containing $P_1'$ and $P_2'$, and $l_2$ the grid line containing $P_1''$ and $P_2''$. Since $P_1\nsubseteq P_2$ and $P_2\nsubseteq P_1$, assume without loss of generality that $P_1'\supset P_2'$ and $P_2''\supset P_1''$. Thus, $P_1\cap_e P_4\subset P_1'$ and $P_2\cap_e P_3\subset P_2''$. Without loss of generality, assume $P_4''\supseteq P_1\cap_e P_4$ and $P_3''\supseteq P_2\cap_e P_3$. Note that $P_3''$ and $P_4''$ are in $l_2$ and $l_1$, respectively. Note also that, since $P_3'\supseteq P_3\cap_e P_4$ and $P_4'\supseteq P_3\cap_e P_4$, $P_3'$ and $P_4'$ must be on the same grid line, $l_3$, such that $l_3$ is not in the direction of $l_1$, since by Remark~\ref{1bend_int_3} this would imply that $P_4$ has at least two bends, and $l_2$, since by Remark~\ref{1bend_int_3} this would imply that $P_3$ has at least two bends. Thus, $l_3\cap_v l_2=\{b_3\}$ and $l_3\cap_v l_1=\{b_4\}$. This could imply that $P_1\cap_e P_3\neq\emptyset$ or $P_2\cap_e P_4\neq\emptyset$ (depending on the types and shapes of the paths), which would be a contradiction. Otherwise, $U(\mathcal{P}_C)$ contains a right triangle with corners $b=b_1=b_2$, $b_3$ and $b_4$, as a subgraph. Thus, $\mathcal{P}_C$ is a flag.
    
    Suppose $P_1''$ and $P_2''$ are on the same grid line and in opposite directions. Without loss of generality, assume $P_1''\supseteq P_1\cap_e P_4$, $P_2''\supseteq P_2\cap_e P_3$, $P_4''\supseteq P_1\cap_e P_4$ and $P_3''\supseteq P_2\cap_e P_3$. Thus, $P_1''$, $P_2''$, $P_3''$ and $P_4''$ are all on the same grid line. Moreover, $P_4'\supseteq P_3\cap_e P_4$ and $P_3'\supseteq P_3\cap_e P_4$. Therefore, $P_3'$ and $P_4'$ are on the same grid line and $b_3=b_4=b'$. Note that, if $b'\neq b$, either $P_1\cap_e P_3\neq\emptyset$ or $P_2\cap_e P_4\neq\emptyset$, which is a contradiction. But $b'=b$ is also a contradiction, since $\bigcap_v P_i=\emptyset$. Thus, $P_1''$ and $P_2''$ cannot be on the same grid line in opposite directions.
    
    Suppose $P_1''$ and $P_2''$ are not on the same grid line. Let $l_1$ be the grid line containing $P_1'$ and $P_2'$, and $l_2$ the grid line containing $P_1''$ and $l_3$ the grid line containing $P_2''$. Thus, $P_1''\supseteq P_1\cap_e P_4$ and $P_2''\supseteq P_2\cap_e P_3$. Without loss of generality, assume $P_4''\supseteq P_1\cap_e P_4$ and $P_3''\supseteq P_2\cap_e P_3$. Note that $P_3''$ and $P_4''$ are in $l_3$ and $l_2$, respectively. Note also that, since $P_3'\supseteq P_3\cap_e P_4$ and $P_4'\supseteq P_3\cap_e P_4$, $P_3'$ and $P_4'$ must be on the same grid line, $l_4$, such that $l_4$ is not in the direction of $l_2$, since by Remark~\ref{1bend_int_3} this would imply that $P_4$ has at least two bends, and $l_3$, since by Remark~\ref{1bend_int_3} this would imply that $P_3$ has at least two bends. Therefore, $l_4$ is in the same direction as $l_1$, but they are not coincident, otherwise $\bigcap_v P_i\neq\emptyset$ (contradiction!). Thus, $l_4\cap_v l_3=\{b_3\}$ and $l_4\cap_v l_2=\{b_4\}$. Thus, $U(\mathcal{P}_C)$ contains a right triangle with corners $b=b_1=b_2$, $b_3$ and $b_4$, as a subgraph. Thus, $\mathcal{P}_C$ is a flag.
\end{itemize}

Otherwise, all paths of $\mathcal{P}_C$ bend at different points. Let $l_1$ be the grid line containing $P_1\cap_e P_2$, $l_2$ the grid line containing $P_1\cap_e P_4$, $l_3$ the grid line containing $P_2\cap_e P_3$ and $l_4$ the grid line containing $P_3\cap_e P_4$. Since there are only three directions on the grid, by the pigeonhole principle two of the grid lines $l_1$, $l_2$, $l_3$ and $l_4$ must be parallel. Note that $l_2$ and $l_3$ cannot be parallel to $l_1$ or $l_4$. Without loss of generality, assume $l_2$ and $l_3$ are parallel. If $l_4$ and $l_1$ are also parallel, then $b_1$, $b_2$, $b_3$ and $b_4$ are the corners of a parallelogram. Thus, $\mathcal{P}_C$ is either an r-frame or a p-frame. If $l_4$ and $l_1$ are not parallel, let $l_4\cap_v l_1=\{b'\}$. If $U(\mathcal{P}_C)\nsupseteq\{b'\}$, then $b_1$, $b_2$, $b_3$ and $b_4$ are the corners of a trapezoid, and $\mathcal{P}_C$ is a t-frame. If $U(\mathcal{P}_C)\supseteq\{b'\}$, then $b_1$, $b_4$, $b'$ are the corners of a right triangle and $b_2$, $b_3$, $b'$ are the corners of another $F_{k'}$, and $\mathcal{P}_C$ is a butterfly.

This concludes the proof of the theorem.
\end{proof}

\section{Initial Results}
Clearly, every B$_k$-EPG representation is a B$_k$-EPG\textsubscript{t} representation. We next show that the converse does not hold.

The sun graphs $S_k$ are not B$_1$-EPG for $k\geq4$~\cite{golumbic2013single}. However, $S_4$ is B$_1$-EPG\textsubscript{t} (see Figure~\ref{fig: S4}). It is possible to extend the B$_1$-EPG\textsubscript{t} representation depicted in Figure~\ref{fig: S4} to show that every sun graph is B$_1$-EPG\textsubscript{t}, as we show next.
\begin{figure}[htb]

\center
\subfigure[][]{\includegraphics[scale=0.7]{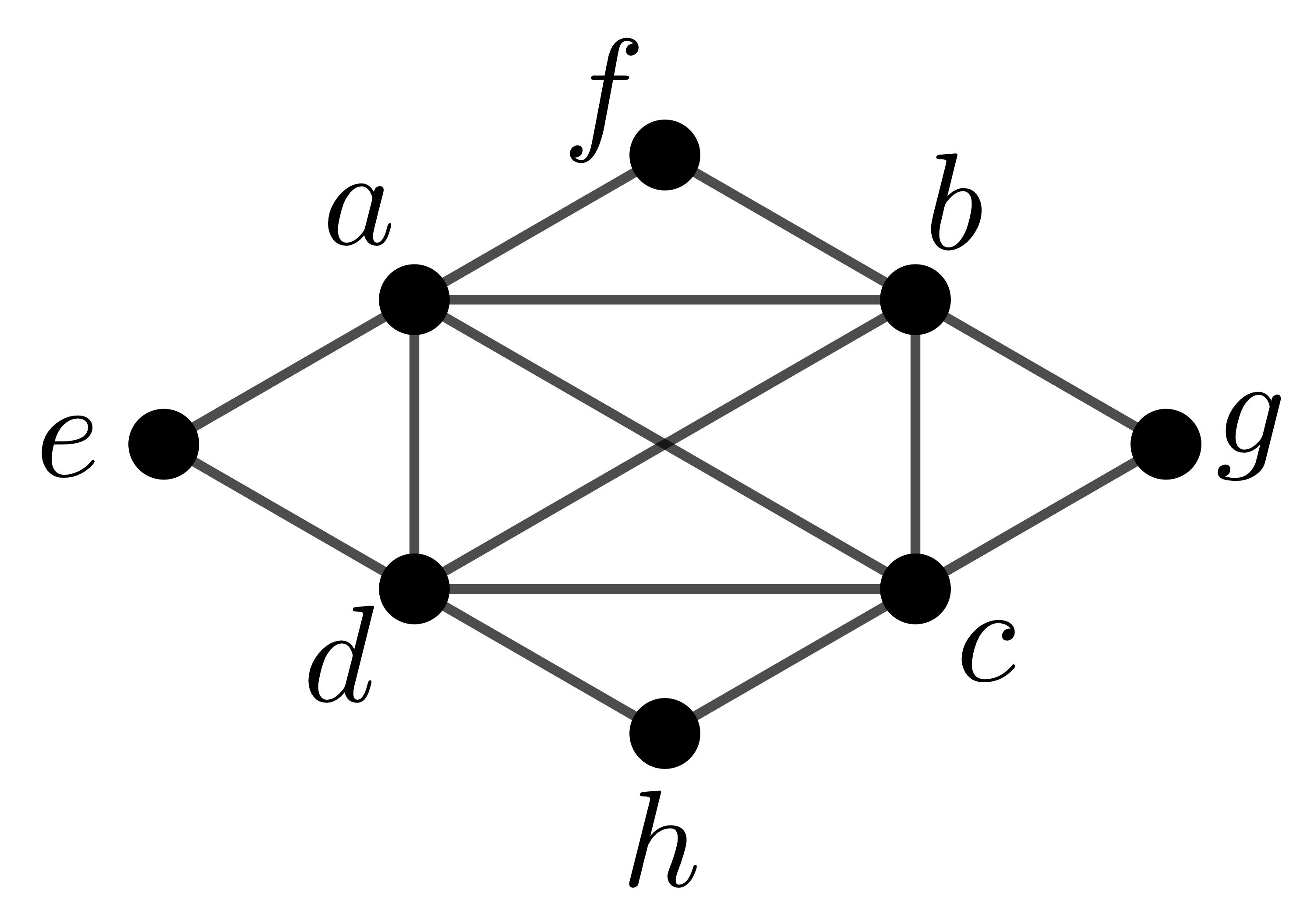}}
\qquad
\subfigure[][]{\includegraphics[scale=0.5]{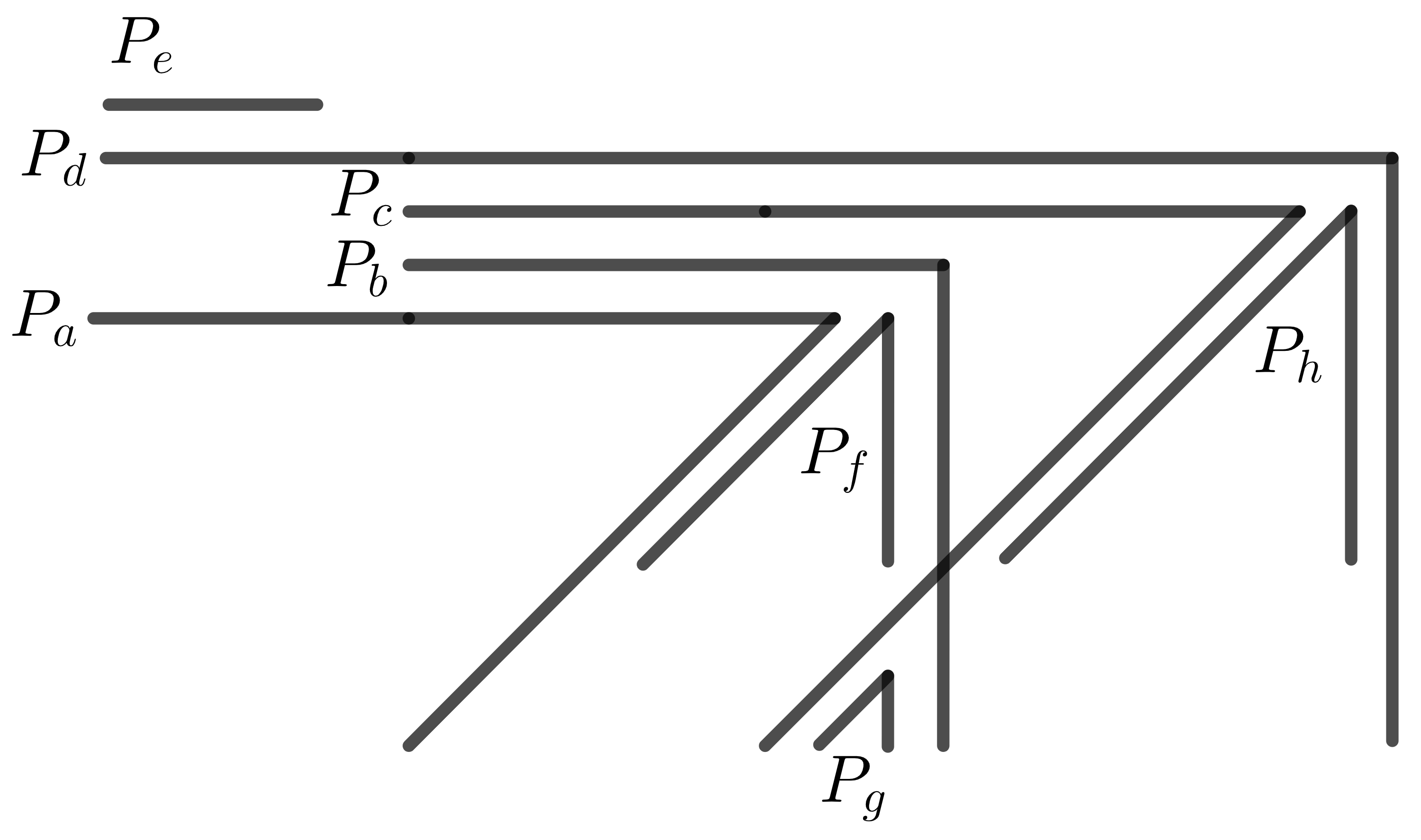}}
\caption{The $S_4$ and its B$_1$-EPG\textsubscript{t} representation.}
\label{fig: S4}
\end{figure}
\begin{thm}
Every $k$-sun graph $S_k$, $k\geq4$, has a B$_1$-EPG\textsubscript{t} representation on a $3\times\left\lceil\frac{k+4}{2}\right\rceil$ triangular grid.
\end{thm}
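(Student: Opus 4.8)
### Proof proposal

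\textbf{Overall approach.} The plan is to give an explicit construction that generalizes the representation in Figure~\ref{fig: S4}, laying all $2k$ paths on a triangular grid with only $3$ rows (the horizontal direction having exactly $3$ grid lines) and $\lceil (k+4)/2\rceil$ columns. The key structural idea is to realize the inner clique $\{y_1,\dots,y_k\}$ as an edge-clique (or a small collection of edge-cliques glued together) whose paths all run along the middle horizontal line, and then attach each outer vertex $x_i$ as a short path that catches exactly the two inner paths $y_i$ and $y_{i+1}$ — using the diagonal direction precisely to let $x_i$ meet both neighbours while avoiding all other $y_j$'s and all other $x_j$'s. So the construction is: first place the $y$-paths, then place the $x$-paths, then verify the adjacencies.

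\textbf{Key steps, in order.} First I would set up coordinates for the $3\times\lceil (k+4)/2\rceil$ grid and describe the $y_i$ paths: each $y_i$ is a $1$-bend path that contains a common middle-row segment (so that any two of them share an edge, giving the inner clique) but whose bends fan out into the top/bottom rows and into distinct diagonal edges, so that the ``private'' part of $y_i$ is where $x_i$ and $x_{i-1}$ will latch on. Because there are $k$ of them and only $\lceil (k+4)/2\rceil$ columns, the bends must be packed two-per-column roughly, which is why the column count is about $k/2$; I would make this packing precise (even/odd $i$ bend up/down, consecutive indices placed in adjacent columns). Second, I would define each $x_i$ as a short $1$-bend path using one horizontal/vertical edge plus one diagonal edge, positioned at the boundary between the ``territory'' of $y_i$ and that of $y_{i+1}$, so that $x_i\cap_e y_i\neq\emptyset$, $x_i\cap_e y_{i+1}\neq\emptyset$, and $x_i$ shares no edge with any other path. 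Third, I would check all three families of edges of $S_k$: (a) $y_iy_j$ for $i\ne j$ — they share the common middle segment; (b) $x_iy_i$ and $x_iy_{i+1}$ — by the placement of $x_i$; (c) non-edges $x_ix_j$ and $x_iy_j$ for $j\notin\{i,i+1\}$ — by checking the diagonal/row separation. I would also note the base case $k=4$ is Figure~\ref{fig: S4} and that the same picture extends periodically.

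\textbf{Main obstacle.} The delicate part is the simultaneous bookkeeping in step~(c): ensuring that when the $y_i$-bends are packed two to a column the ``fanned out'' private pieces of different $y_i$ genuinely stay edge-disjoint, and that each $x_i$ touches \emph{only} $y_i$ and $y_{i+1}$ and no stray third path — this is exactly where one must exploit that the triangular grid has a third direction (Remark~\ref{1bend_int} and Remark~\ref{1bend_int_3}), since on a rectangular grid this is known to be impossible for $k\ge 4$. A secondary nuisance is the parity in $\lceil(k+4)/2\rceil$: the ``$+4$'' slack covers the two extra columns needed to close the cyclic pattern (so that $x_k$ linking $y_k$ back to $y_1$ behaves like all the others), and I would handle even and odd $k$ by the same picture with one column of padding in the odd case. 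Once the construction is drawn, all verifications are routine edge-by-edge checks, so I would present the construction via a labelled figure (as the paper does for $S_4$) plus a short table of the path endpoints, and then state the adjacency checks compactly rather than grinding through every pair.
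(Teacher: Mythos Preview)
Your plan is essentially the paper's: an explicit construction realizing the inner clique as an edge-clique on a common horizontal segment, attaching each outer vertex by a short $1$-bend path that catches exactly two consecutive inner paths, packing with even/odd parity to get the $\lceil(k+4)/2\rceil$ column count, and then verifying adjacencies edge by edge. Two implementation details differ from the paper and are worth adjusting before you try to write down coordinates: first, the paper places the common clique segment on the \emph{top} row (row~$3$), with every inner path bending \emph{downward}, and the even/odd alternation is between a \emph{vertical} second segment and a \emph{diagonal} second segment (not ``up vs.\ down'' from a middle row); this is what lets each $P_i^s$ live entirely below row~$3$ and touch exactly the vertical tail of one neighbour and the diagonal tail of the other without ever meeting the shared clique edge. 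Second, the cyclic closure $s_k$ does \emph{not} behave like all the other $s_i$: it is handled by a special one-edge horizontal path on row~$3$, and $P_1^v$ and $P_k^v$ are the only two clique paths stretched one extra unit to the left to meet it --- so the ``$+4$'' slack is spent there rather than on a periodic wraparound.
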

\begin{proof}
Let $V(S_k)=K\cup S$, where $K=\{v_1,\ldots,v_k\}$ is the clique and $S=\{s_1,\ldots,s_k\}$ is the independent set, such that $(s_i,v_i)\in E(S_k)$ and $(s_i,v_{i+1})\in E(S_k)$, for all $1\leq i\leq k$, where addition is assumed to be modulo $k$. Let $\{1,2,3\}$ be a set of consecutive rows and $\{1,\ldots, \lceil\frac{k+4}{2}\rceil\}$ be a set of consecutive columns of a triangular grid, where row $1$ is the lowest one and column $1$ is the leftmost one. A grid point labeled $(x,y)$ represents the intersection point of the column $x$ and the row $y$ for some $x\in\{1,\ldots, \lceil\frac{k+4}{2}\rceil\}$ and $y\in\{1,2,3\}$.

Let $v_1$ be represented by the $1$-bend path $P_1^v$ consisting of $(1,3)$ and $(2,2)$ as extreme points, and having $(3,3)$ as a bending point. For all $v_i$ such that $1<i\leq k$ and $i$ is odd, build a $1$-bend path $P_i^v$ consisting of $(2,3)$ and $(\frac{i+1}{2},1)$ as extreme points, and having $(\frac{i+5}{2},3)$ as a bending point. For all $v_i$ such that $1<i\leq k$ and $i$ is even, build a path $P_i^v$ consisting of $(2,3)$ and $(\frac{i+4}{2},1)$ as extreme points, and having $(\frac{i+4}{2},3)$ as a bending point. Finally, enlarge $P_k^v$ by stretching it horizontally from $(2,3)$ to $(1,3)$.

Let $s_k$ be represented by the path $P_k^s$ consisting of $(1,3)$ and $(2,3)$ as extreme points. For all $s_i$ such that $1\leq i<k$ and $i$ is odd, build a path $P_i^s$ consisting of $(\frac{i+3}{2},2)$ and $(\frac{i+5}{2},2)$ as extreme points, and having $(\frac{i+5}{2},3)$ as a bending point. For all $v_i$ such that $1\leq i<k$ and $i$ is even, build a path $P_i^s$ consisting of $(\frac{i+2}{2},1)$ and $(\frac{i+4}{2},1)$ as extreme points, and having $(\frac{i+4}{2},2)$ as a bending point.

We claim that $\mathcal{R}=\{P_1^v,\ldots,P_k^v,P_1^s,\ldots,P_k^s\}$ is a B$_1$-EPG\textsubscript{t} representation of $S_k$. Note that
\begin{itemize}
    \item[-] all $P_i^v$, $1\leq i\leq k$, share the grid edge $((2,3),(3,3))$;
    \item[-] only $P_1^v$, $P_k^v$ and $P_k^s$ share the grid edge $((1,3),(2,3))$;
    \item[-] the grid edge $((\frac{i+3}{2},2),(\frac{i+5}{2},3))$ is shared only by $P_i^s$ and $P_i^v$ and the grid edge $((\frac{i+5}{2},3),(\frac{i+5}{2},2))$ is shared only by $P_i^s$ and $P_{i+1}^v$, for all $1\leq i<k$ such that $i$ is odd;
    \item[-] the grid edge $((\frac{i+4}{2},2),(\frac{i+4}{2},1))$ is shared only by $P_i^s$ and $P_i^v$ and the grid edge $((\frac{i+2}{2},1),(\frac{i+4}{2},2))$ is shared only by $P_i^s$ and $P_{i+1}^v$, for all $1\leq i<k$ such that $i$ is even.
\end{itemize}
Thus, $\mathcal{R}$ is a B$_1$-EPG\textsubscript{t} representation of $S_k$, $k\geq4$.
\end{proof}
Other examples of graphs that are not B$_1$-EPG but have a B$_1$-EPG\textsubscript{t} representation are shown in Figure~\ref{fig: examples}. The graphs in Figures~\ref{fig: F1}, \ref{fig: F2} and \ref{fig: F6} are forbidden induced subgraphs for the class B$_1$-EPG, as shown in~\cite{alconrelationship}. Nonetheless, they are B$_1$-EPG\textsubscript{t}, as shown in Figures~\ref{fig: F1_model}, \ref{fig: F2_model} and \ref{fig: F6_model}. The graph given in Figure~\ref{fig: outer} has rectangular bend-number two, as shown in~\cite{biedl2010edge}. However, we were able to find a B$_1$-EPG\textsubscript{t} representation of it (see Figure~\ref{fig: outer_model}).
\begin{figure}[htb]

\center
\subfigure[][]{\includegraphics[scale=0.5]{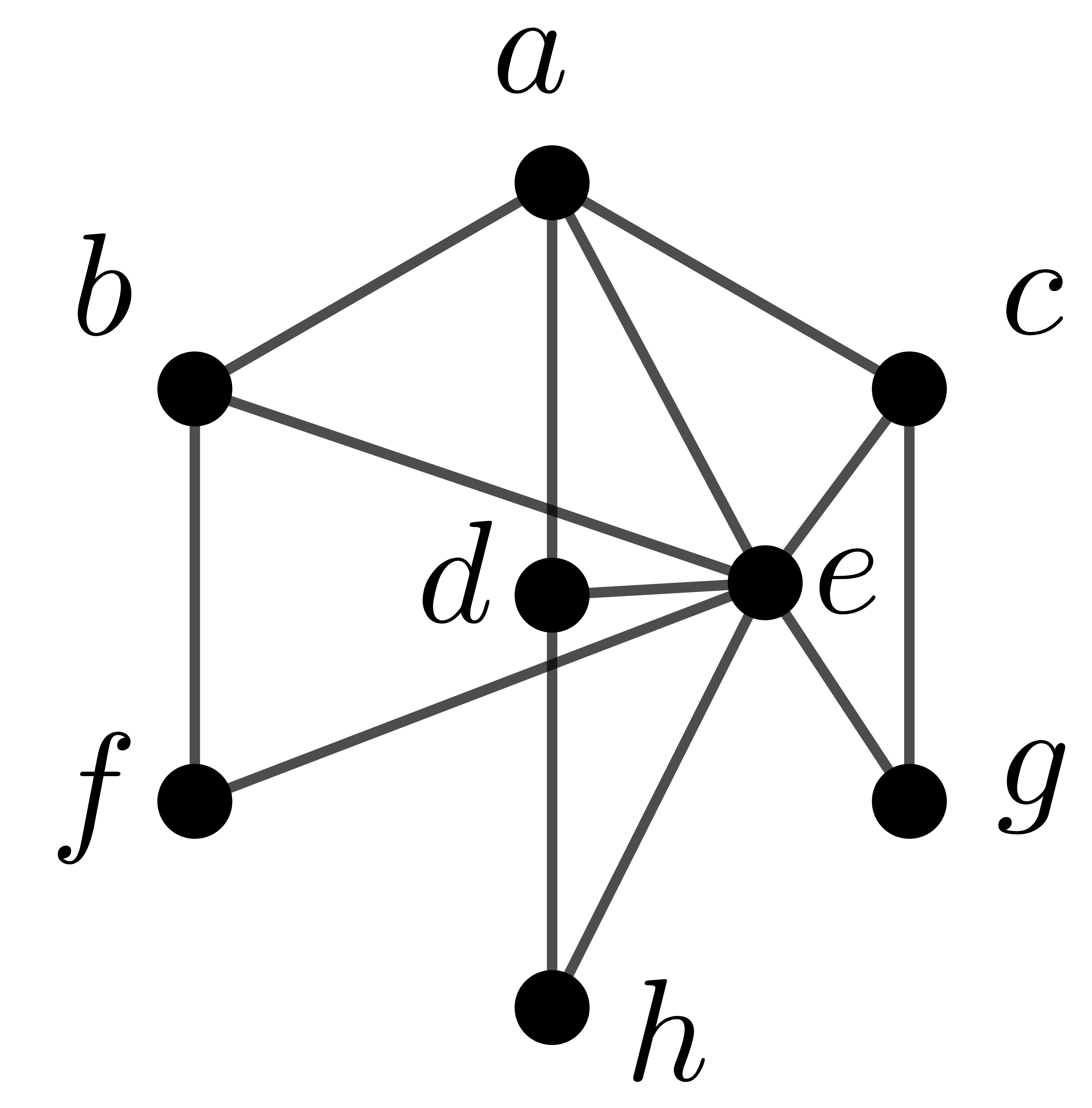} \label{fig: F1}}
\qquad
\subfigure[][]{\includegraphics[scale=0.5]{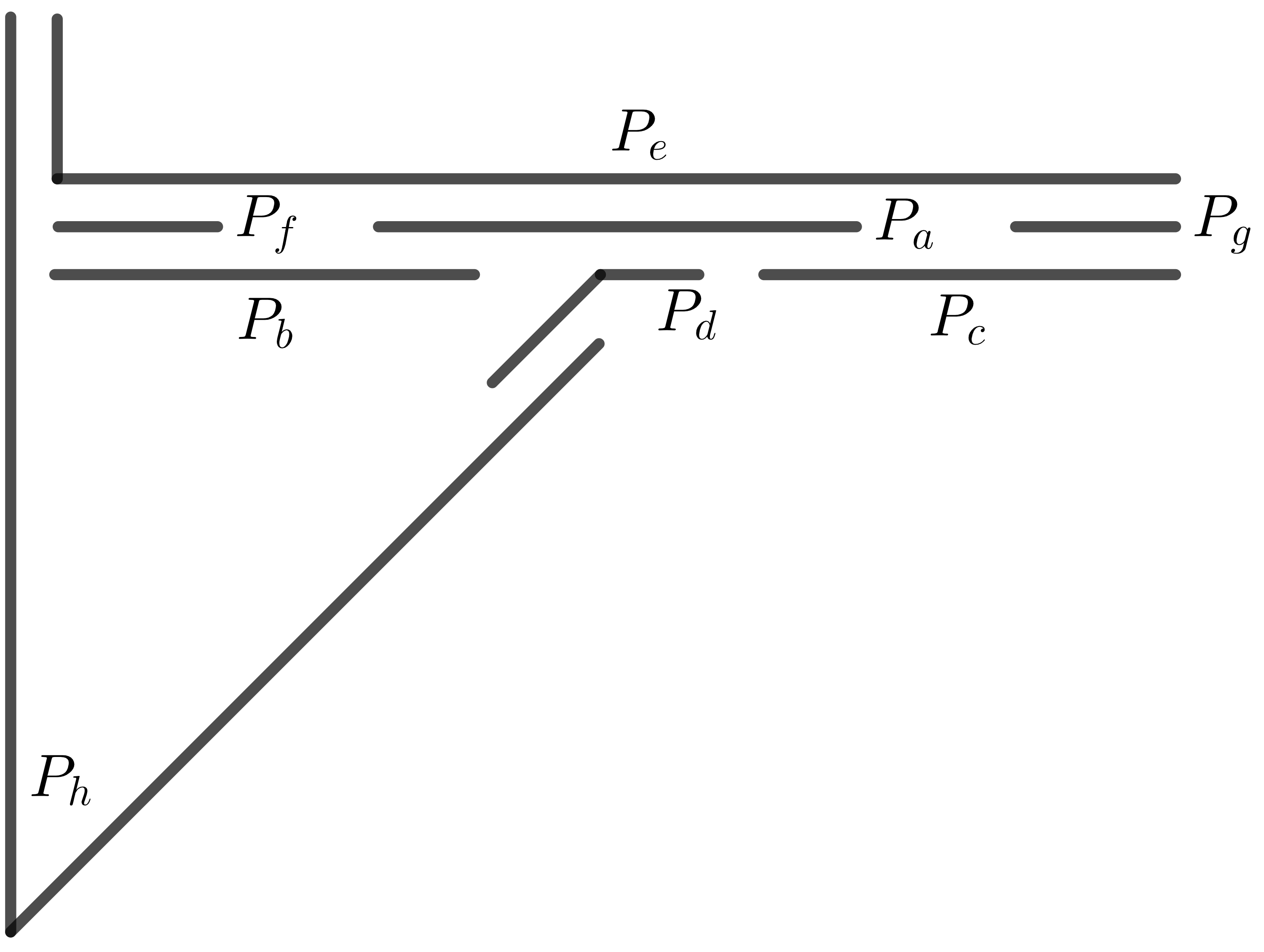} \label{fig: F1_model}}
\qquad
\subfigure[][]{\includegraphics[scale=0.4]{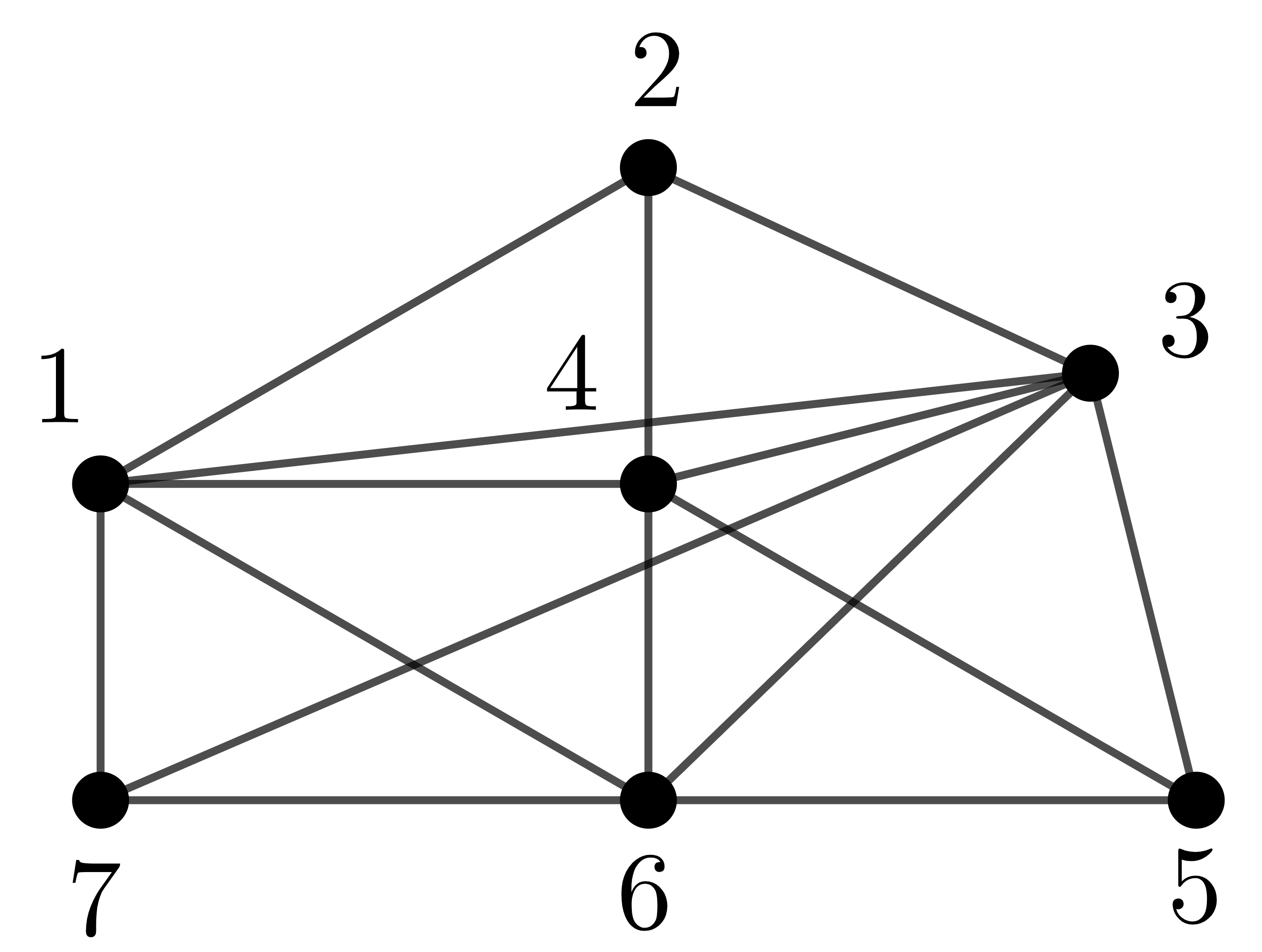} \label{fig: F2}}
\qquad
\subfigure[][]{\includegraphics[scale=0.5]{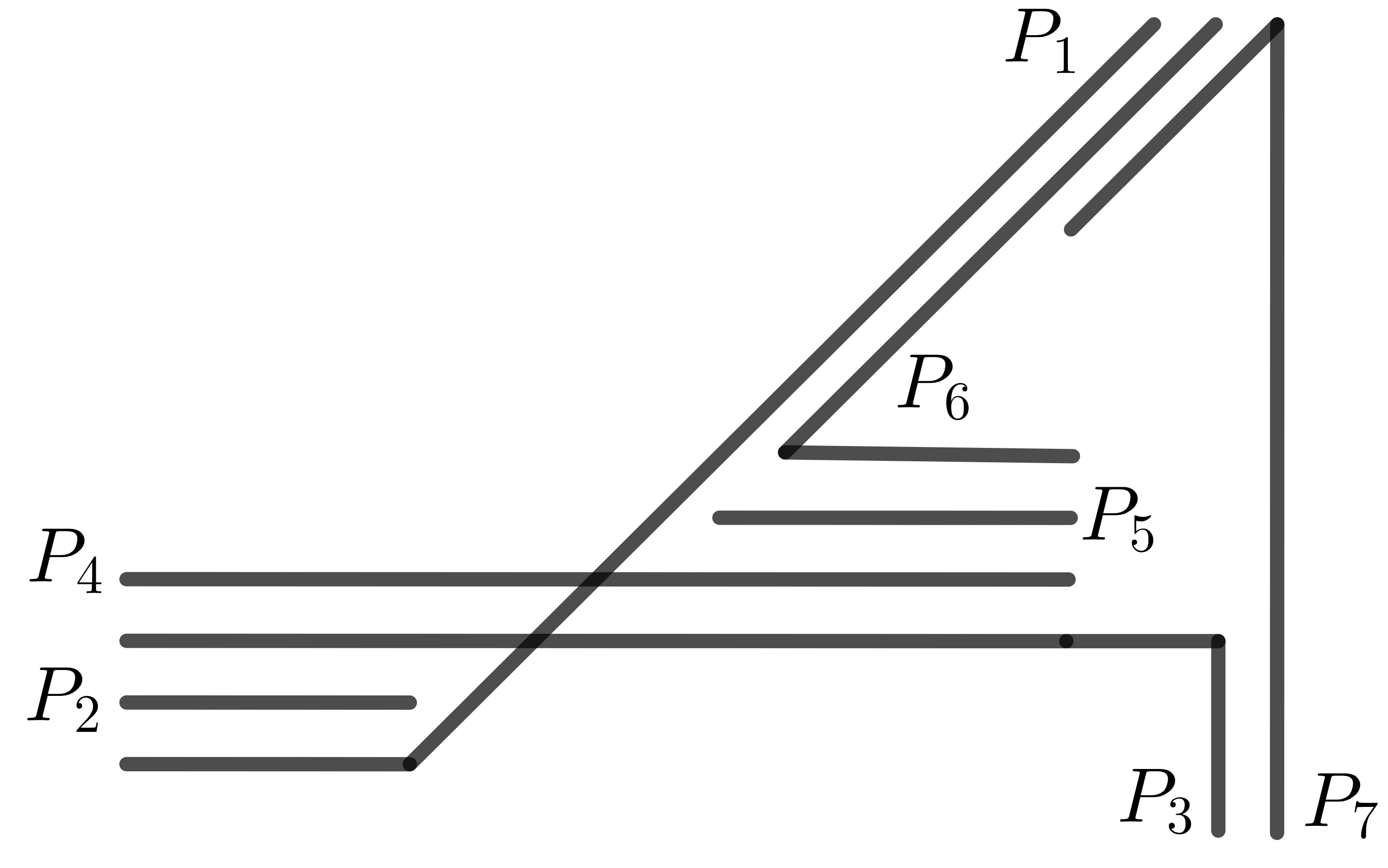} \label{fig: F2_model}}
\qquad
\subfigure[][]{\includegraphics[scale=0.5]{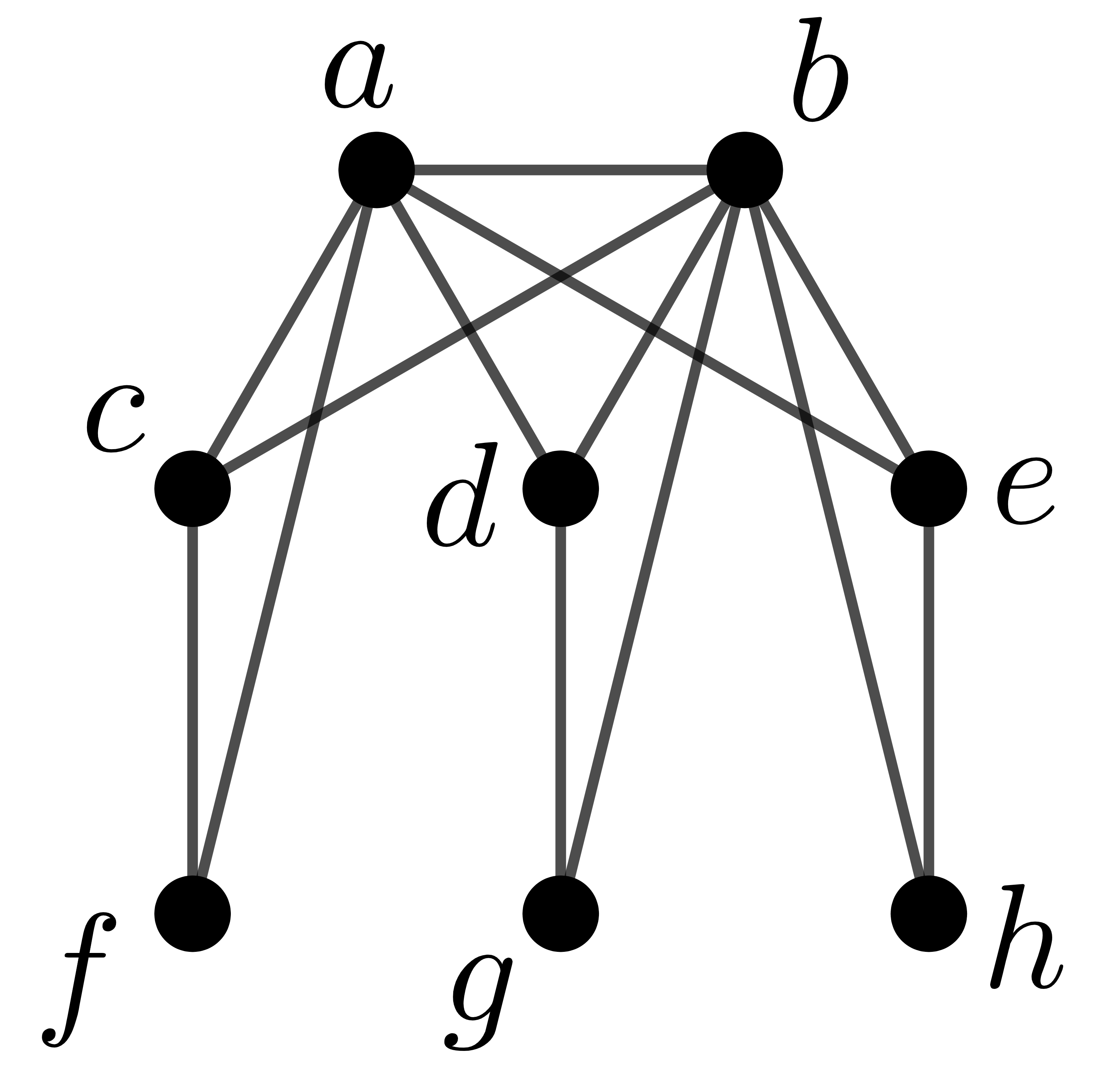} \label{fig: F6}}
\qquad
\subfigure[][]{\includegraphics[scale=0.5]{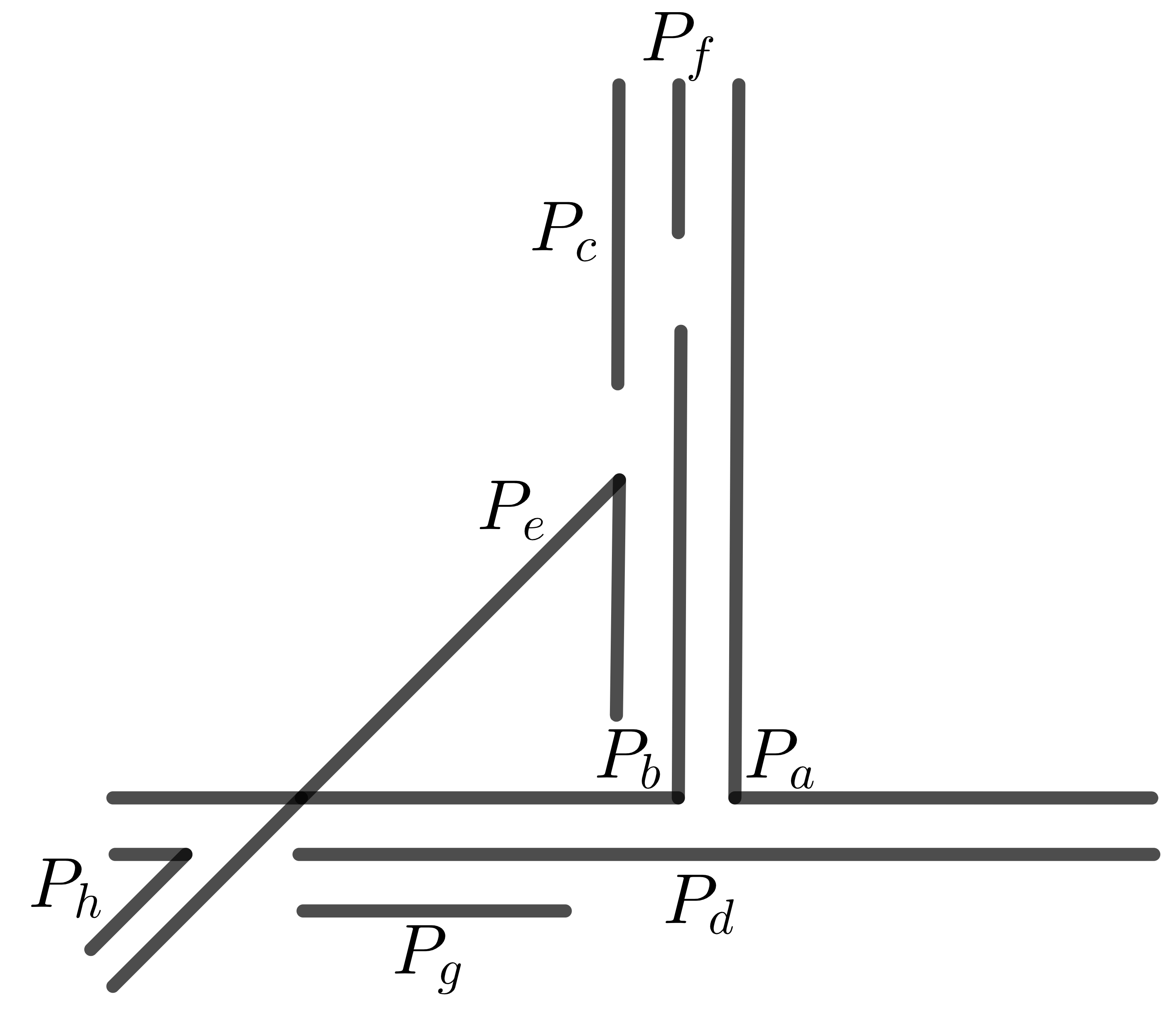} \label{fig: F6_model}}
\qquad
\subfigure[][]{\includegraphics[scale=0.65]{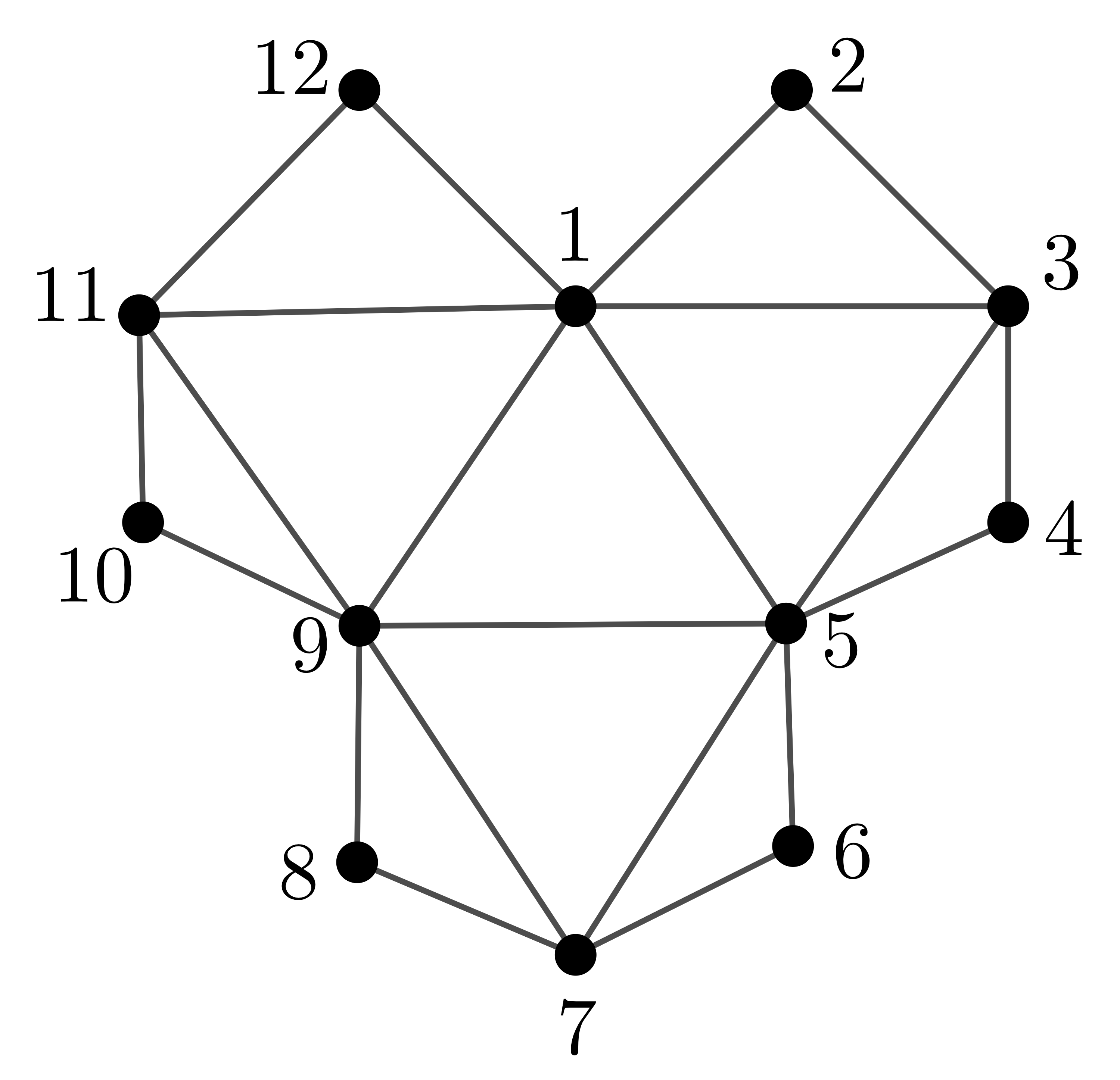} \label{fig: outer}}
\qquad
\subfigure[][]{\includegraphics[scale=0.6]{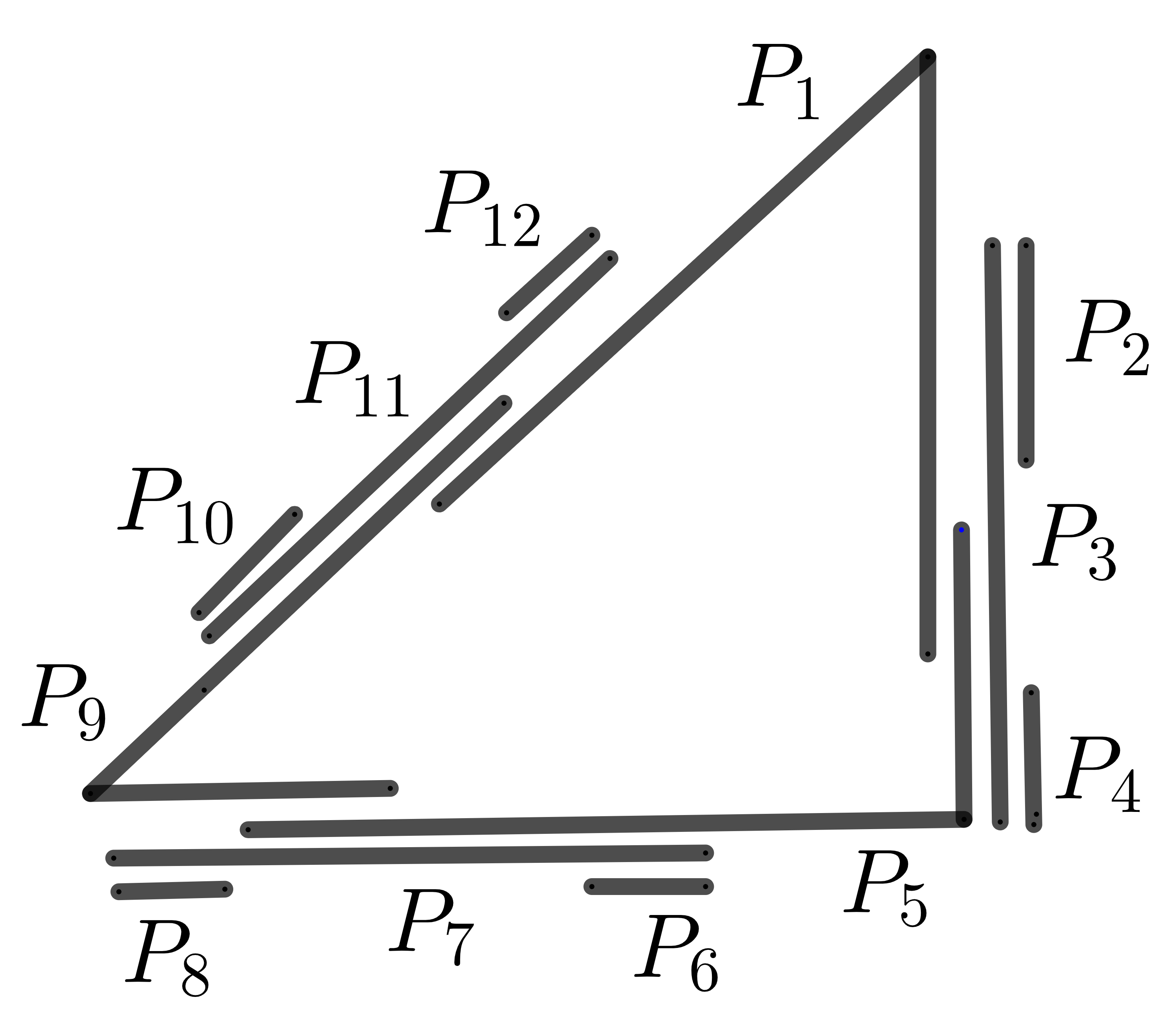} \label{fig: outer_model}}
\caption{Examples of B$_2$-EPG graphs and their B$_1$-EPG\textsubscript{t} representations.}
\label{fig: examples}
\end{figure}
It is known that $K_{2,n}$ is B$_1$-EPG if and only if $n\leq4$ (see~\cite{asinowski2009edge}). The graphs $K_{2,n}$, $n\leq6$, are B$_1$-EPG\textsubscript{t}. See in Figure~\ref{fig: K26_EPGt} an example of a B$_1$-EPG\textsubscript{t} representation of $K_{2,6}$.
\begin{figure}[htb]

\center
\subfigure[][]{\includegraphics[scale=0.7]{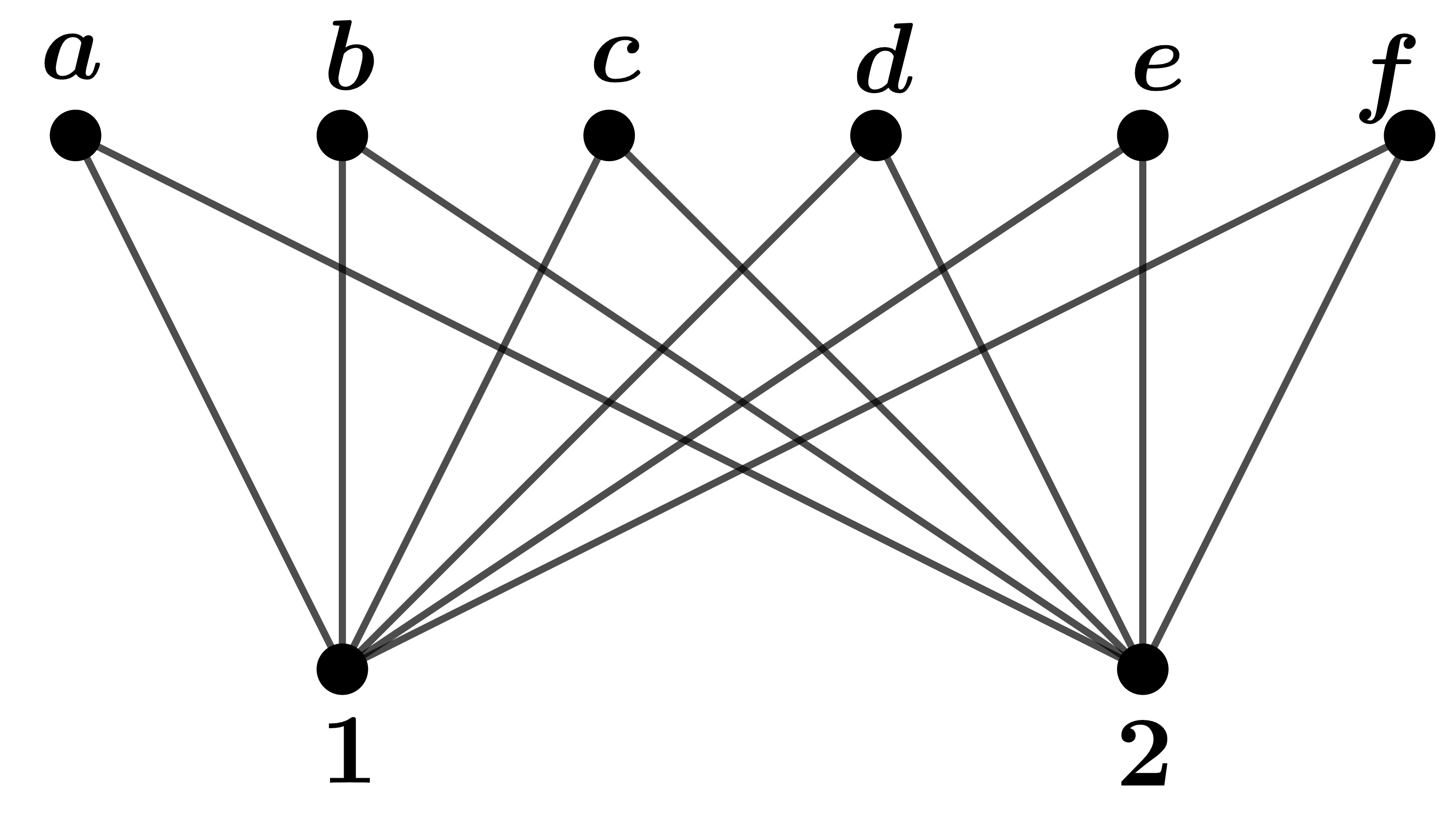}}
\qquad
\subfigure[][]{\includegraphics[scale=0.7]{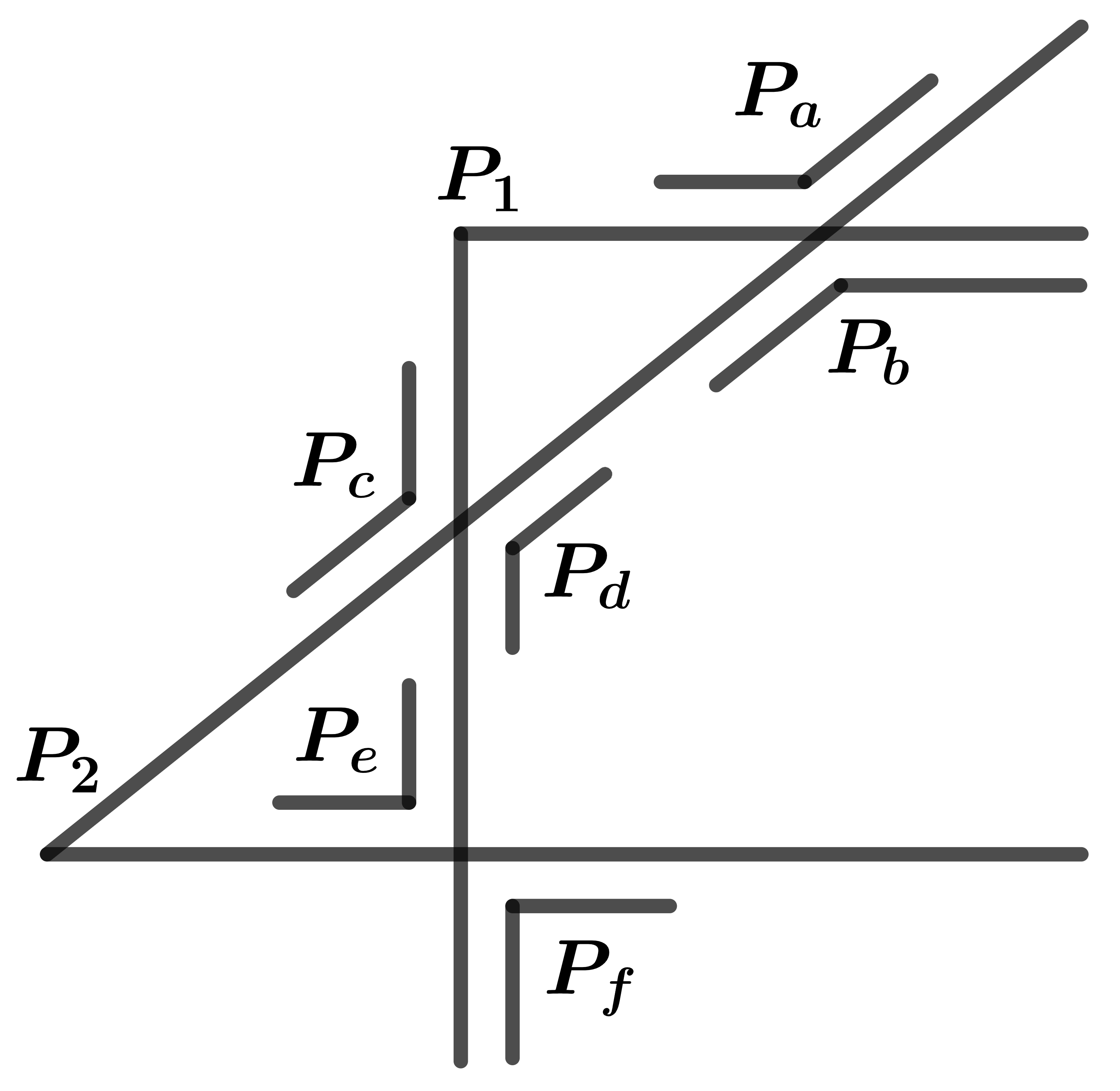}}
\caption{(a) Graph $K_{2,6}$ and (b) its B$_1$-EPG\textsubscript{t} representation.}
\label{fig: K26_EPGt}
\end{figure}

\begin{thm}
The $K_{2,7}$ is not B$_1$-EPG\textsubscript{t}.
\end{thm}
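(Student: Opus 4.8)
The plan is to argue by contradiction. Suppose $\langle \mathcal{P},\mathcal{G}\rangle$ were a B$_1$-EPG\textsubscript{t} representation of $K_{2,7}$, with bipartition classes $\{a,b\}$ and $\{c_1,\dots,c_7\}$, and let $P_a,P_b$ and $P_1,\dots,P_7$ be the corresponding paths. Since each class is an independent set, $P_a\cap_e P_b=\emptyset$ and $P_i\cap_e P_j=\emptyset$ for $i\neq j$, while each $P_i$ shares an edge with both $P_a$ and $P_b$ (as $c_i$ is adjacent to $a$ and $b$). For every $i$, fix edges $e_i\in P_i\cap_e P_a$ and $f_i\in P_i\cap_e P_b$; note that the $e_i$'s are pairwise distinct (otherwise two paths $P_i$ would share the edge $e_i$), and so are the $f_i$'s. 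The whole proof aims at showing that these requirements cannot be met by as many as seven paths $P_i$.

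First I would classify the paths. As $P_a$ has at most one bend, it has at most two segments, lying on a set $\mathcal{L}_a$ of at most two grid lines; similarly for $\mathcal{L}_b$. By the first item of Remark~\ref{1bend_int_3}, a path with at most one bend cannot share edges with two distinct parallel lines, so $e_i$ and $f_i$ are either (I) on a common line $L\in\mathcal{L}_a\cap\mathcal{L}_b$, or (II) on two non-parallel lines $L\in\mathcal{L}_a$ and $L'\in\mathcal{L}_b$. In case (II), $P_i$ cannot be a $0$-bend path (it meets two distinct lines), so it has exactly one bend, and the second item of Remark~\ref{1bend_int_3} forces that bend to be at $L\cap L'$, with one segment on $L$ and the other on $L'$. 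Assigning each $i$ to the class of its line $L$ in case (I) or of its ordered pair $(L,L')$ in case (II), we get at most two ``common-line classes'' and at most $|\mathcal{L}_a|\cdot|\mathcal{L}_b|\le 4$ ``corner classes''.

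Next I would bound each class. A common-line class on $L$ has at most one member: the segment of $P_i$ on $L$ must contain both $e_i$ and $f_i$, hence the whole portion of $L$ between the (edge-disjoint) intervals $P_a\cap_e L$ and $P_b\cap_e L$, so two such members would both contain the edge of $P_a\cap_e L$ nearest $P_b\cap_e L$, contradicting $P_i\cap_e P_j=\emptyset$. For a corner class $(L,L')$ with $\beta=L\cap L'$, each member bends at $\beta$ with its $L$-segment running toward $P_a\cap_e L$ and its $L'$-segment toward $P_b\cap_e L'$. If $L'\in\mathcal{L}_a$ (so $P_a$ has segments on both $L$ and $L'$), then $\beta$ is the bend point of $P_a$, hence an endpoint of $P_a\cap_e L$; then the direction of the $L$-segment at $\beta$ is forced, all members contain the same edge at $\beta$, and the class has at most one member. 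The same conclusion holds if $L\in\mathcal{L}_b$. Otherwise (call the class \emph{eligible}), the only remaining freedom is the choice of the two outward directions at $\beta$, giving four possible local shapes, of which at most two are pairwise edge-disjoint, so the class has at most two members.

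Finally I would run the arithmetic over the possible direction patterns of $P_a$ and $P_b$. If either path has no bend, then there are at most two corner classes and at most one common-line class, for a total of at most $1+2\cdot 2=5<7$; so both $P_a$ and $P_b$ must be $1$-bend paths, each using exactly two of the three directions. One then checks the few cases according to how many lines (equivalently, how many directions) $P_a$ and $P_b$ have in common: in every case, summing $1$ over the common-line classes and $2$ (resp.\ $1$) over the eligible (resp.\ non-eligible) corner classes yields at most $6$, the extremal configuration being two edge-disjoint $1$-bend paths sharing exactly one direction and no line, which produces three eligible corner classes. Hence at most $6$ of the $P_i$ can coexist, a contradiction. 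The step I expect to be the real work, and the most error-prone, is this last bookkeeping: one must determine, configuration by configuration, exactly which corner classes are eligible --- i.e.\ for which the forced bend point $\beta$ is not pinned to a bend point of $P_a$ or $P_b$ --- and verify that the eligible ones are never plentiful enough to bring the count to $7$.
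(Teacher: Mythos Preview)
Your proposal is correct and follows essentially the same approach as the paper's proof: both argue, via Remark~\ref{1bend_int_3}, that each $P_i$ must bend at an intersection point of the grid lines carrying the segments of $P_a$ and $P_b$, that there are at most three such points (Remark~\ref{1bend_int}), and that at most two $P_i$'s can coexist at each point. Your treatment is more careful than the paper's --- you explicitly handle the common-line case and distinguish eligible from non-eligible corner classes --- but the core counting argument and the extremal bound of six are the same.
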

\begin{proof}
Let $V(K_{2,7})=S\cup S'$, where $S=\{v_1, v_2\}$ and $S'=\{v_3,\ldots, v_7\}$ are independent sets. Let $\mathcal{R}=\{P_1, P_2,\ldots, P_{7}\}$ be an EPG\textsubscript{t} representation of $K_{2,7}$, such that $P_i$ is the path representing $v_i$, for all $1\leq i\leq7$. The proof will be done in three cases, as follows.

\noindent {\it Case 1.} Assume $P_1$ and $P_2$ are paths with no bend and consider the grid lines $l_1$ and $l_2$ containing $P_1$ and $P_2$ respectively.
\begin{itemize}
    \item[-] If $l_1$ and $l_2$ are parallel, consider $P_i\in\mathcal{R}$ for some $3\leq i\leq7$. Since $P_i$ must share an edge with both $P_1$ and $P_2$, by Remark~\ref{1bend_int_3}, it must have at least $2$ bends.
    \item[-] If $l_1$ and $l_2$ are not parallel, let $b$ be the grid point in which $l_1$ and $l_2$ intersect each other. By Remark~\ref{1bend_int_3}, we could have at most two paths $P_i, P_j\in\mathcal{R}$, for some $3\leq i,j\leq7$, such that $P_i$ and $P_j$ bend at $b$ and $P_1$, $P_2$, $P_i$ and $P_j$ form a false pie.
\end{itemize}

\noindent {\it Case 2.} If one of $P_1$ and $P_2$ is a path with no bend and the other is a $1$-bend path, the grid lines $l_1$ and $l_2$ containing the segments forming $P_1$ and $P_2$ intersect each other at most at two points, $b_1$ and $b_2$. Again, by Remark~\ref{1bend_int_3}, we could have at most four paths forming false pies with $P_1$ and $P_2$ at $b_1$ and $b_2$.

\noindent {\it Case 3.} If $P_1$ and $P_2$ are paths with one bend, by Remark~\ref{1bend_int}, the grid lines containing the segments forming $P_1$ and $P_2$ intersect each other at most at three grid points, $b_1$, $b_2$ and $b_3$. By Remark~\ref{1bend_int_3}, we could have at most six paths forming false pies with $P_1$ and $P_2$ at $b_1$, $b_2$ and $b_3$.
\end{proof}

\section{Clique Coloring of B$_{1}$-EPG\textsubscript{t} graphs}
A \emph{$k$-coloring} of a graph $G$ is a function $f:V(G)\rightarrow\{1, 2, \ldots, k\}$ such that $f(v)\neq f(w)$ for adjacent vertices $v, w\in V(G)$. The \emph{chromatic number} $\chi(G)$ of a graph $G$ is the smallest positive integer $k$ such that $G$ has a $k$-coloring. A \emph{$k$-clique coloring} of a graph $G$ is a function $f:V(G)\rightarrow\{1, 2, \ldots, k\}$ such that no clique of $G$ with size at least two is monocolored. A graph $G$ is \emph{$k$-clique colorable} if $G$ has a k-clique coloring. The \emph{clique chromatic number} of $G$, denoted by $\chi_{c}(G)$, is the smallest $k$ such that $G$ has a $k$-clique coloring. 

Clique coloring has some similarities with usual coloring. For example, every $k$-coloring is also a $k$-clique coloring, and
$\chi(G)$ and $\chi_{c} (G)$ coincide if $G$ is triangle-free. But there are also essential differences, for example, a clique coloring of a graph
needs not be a clique coloring for its subgraphs. Indeed, subgraphs may have a greater clique chromatic number than the
original graph. Another difference is that even a $2$-clique colorable graph can contain an arbitrarily large clique.
 
It has been proved that chordal graphs, and in particular interval graphs, are $2$-clique colorable~\cite{poon2000coloring}. Moreover, the following result holds for strongly perfect graphs, a superclass of chordal graphs.

\begin{lem}[Bacsó et al.~\cite{bacso2004coloring}] \label{lemma-bacso}
Every strongly perfect graph admits a $2$-clique coloring in which one of the color classes is an independent set.
\end{lem}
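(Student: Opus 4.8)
The plan is to read the claim off directly from the definition of strongly perfect graphs. Recall that $G$ is \emph{strongly perfect} when every induced subgraph $H$ of $G$ possesses an independent set meeting every maximal clique of $H$ — a \emph{strong stable set}. The first step is to apply this property to $G$ itself, viewed as an induced subgraph of itself, obtaining an independent set $S \subseteq V(G)$ such that $S \cap K \neq \emptyset$ for every maximal clique $K$ of $G$.

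The second step is to exhibit the coloring. Define $f : V(G) \rightarrow \{1,2\}$ by $f(v) = 1$ if $v \in S$ and $f(v) = 2$ otherwise. By construction the color class $f^{-1}(1) = S$ is an independent set, so it only remains to verify that $f$ is a clique coloring, i.e., that no maximal clique of size at least two is monochromatic. Let $K$ be such a clique. Since $S$ is a strong stable set, $K$ contains a vertex of $S$, hence a vertex of color $1$; and since $|K| \geq 2$ while $S$ is independent, $K$ is not contained in $S$, hence $K$ contains a vertex of color $2$. Thus $K$ uses both colors, so $f$ is a $2$-clique coloring, one of whose classes (namely $S$) is independent.

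I do not anticipate a genuine obstacle: the statement is essentially a reformulation of the defining property of strong perfection. The only point requiring a little care is the convention in the definition of clique coloring — one must work with maximal cliques (a clique coloring avoids monochromatic \emph{maximal} cliques), and it is precisely the maximal cliques that the strong stable set $S$ is guaranteed to intersect, which is why the verification above suffices.
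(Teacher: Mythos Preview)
Your argument is correct and is the standard one: a strong stable set $S$ of $G$ (which exists by the definition of strong perfection applied to $G$ itself) yields the desired coloring by assigning color $1$ to $S$ and color $2$ to $V(G)\setminus S$. Note that the paper does not give its own proof of this lemma---it is quoted from Bacs\'o et al.~\cite{bacso2004coloring} and used as a black box---so there is no paper proof to compare against; your reasoning matches what one finds in that reference.
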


\subsection{B$_1$-EPG\textsubscript{t} graphs are 7-clique colorable}
In~\cite{bonomo2017clique}, the authors have shown that B$_1$-EPG graphs are $4$-clique colorable. The strategy they used was to first assign colors independently to the horizontal and vertical segments of each path, and then they showed how to combine those colors into a single color for each path, as required. Here, we use the same strategy to show that B$_1$-EPG\textsubscript{t} graphs are $7$-clique colorable.

Let us introduce some more terminology. Let $P_1$, $P_2$ and $P_3$ be paths forming a claw-clique $C$, such that $\bigcap_v P_i=\{b\}$, and $b$ is the bend point of $P_1$ and $P_2$. We say that
\begin{itemize}
    \item $C$ is \emph{normal} if none of $P_1$, $P_2$ or $P_3$ have diagonal segments;
    \item $C$ is \emph{centered} at $b$, or that $b$ is the \emph{center} of $C$;
    \item It will be important in the proof of the theorem to distinguish the case in which $b$ is also the bend point of $P_3$ and the case in which it is not. For that reason, we say that $C$ is a \emph{regular claw} if $b$ is not the bend point of $P_3$. Therefore, a regular claw may be shaped $\perp$, $\top$, $\vdash$, $\dashv$, \dashd{scale=1.2}, \ddash{scale=1.2}, \ubranch{scale=1.2}, \lbranch{scale=1.2}, \diaghor{scale=1.2}, \hordiag{scale=1.2}, \diagver{scale=1.2} or \verdiag{scale=1.2}.
\end{itemize}
We say that two $1$-bend paths, $P_1$ and $P_2$, are \emph{similar} if $P_1$ and $P_2$ have segments in the same direction. For example, paths with shapes \uwide{scale=1.2} and \unarrow{scale=1.2} are similar.

\begin{thm}
Let $G$ be a B$_1$-EPG\textsubscript{t} graph. Then, $G$ is $7$-clique colorable.
\end{thm}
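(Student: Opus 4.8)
The plan is to mimic the strategy of Bonomo, Mazzoleni, Pergel and Valencia-Pabon~\cite{bonomo2017clique} for B$_1$-EPG graphs, but adapted to the three directions of the triangular grid. First I would color, independently, the three families of grid lines: assign colors to the horizontal segments, to the vertical segments, and to the diagonal segments of the paths. For each of the three directions, consider the auxiliary ``line graph'' whose vertices are the grid lines of that direction and where two lines are adjacent if some path has a segment on each of them (equivalently, if they are crossed by a common $1$-bend path); this graph has maximum degree at most something small, and in fact one can $2$-color the grid lines of each fixed direction so that, along any single grid line, the segments lying on it can be partitioned by their ``ownership'' into two classes — the point being to get, for each path, a pair of colors, one from each of the (at most) two directions it uses, drawn from a bounded palette. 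Doing the bookkeeping carefully should give each path a label in a set of size $7$: the idea is that a $0$-bend path gets one of $3$ ``pure'' colors according to its direction, while a $1$-bend path gets one of several ``mixed'' colors encoding the unordered pair of directions it spans, and we must check $3+\binom{3}{2}+\cdots$ stays at $7$ after the refinements forced below.

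The core of the argument is then to verify that this labeling is a clique coloring, i.e.\ that no maximal clique of size $\geq 2$ is monochromatic. By Theorem~\ref{thm cliques}, every maximal clique $C$ with three vertices is an edge-clique, a claw-clique, or one of the triangular-clique subtypes (flag, paw, cricket, bull, extended-bull, net), and a clique of size $2$ is just an edge shared by two paths; larger cliques are handled by restricting attention to three of their paths, or by a Helly-type argument using that B$_1$-EPG\textsubscript{t} has Strong Helly number $3$ (Theorem on $sH$). For an edge-clique, all paths share a grid edge $e$, which lies on a single grid line $\ell$ of some fixed direction; two paths sharing $e$ either both have a segment on $\ell$, so the coloring of the line-graph of that direction must have been set up to separate them, and I would arrange the $2$-coloring of each direction's lines (plus a parity/ownership bit along each line) precisely so that two paths sharing an edge get distinct labels. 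For a claw-clique centered at $b$ with paths $P_1,P_2,P_3$, I would use the terminology just introduced (normal vs.\ not, regular claw vs.\ not, similar paths) to argue that the three paths cannot all receive the same mixed color: at a claw, the three paths emanate in three distinct grid-edge directions out of $b$, so at least two of them differ in which unordered pair of directions they use, unless they are ``similar'', in which case an extra refinement of the palette (this is where additional colors beyond $3+3$ are spent, pushing the count to $7$) separates them. The triangular-clique cases are handled analogously: in each of the six subtypes, reading off from Table~\ref{Tab:table-cliques} the inside/midway/outside status of the three paths relative to the corners $x,y,z$ of the right triangle $T$, one sees that the three paths occupy segments on the three sides of $T$ (three distinct directions) in a pattern that forces at least two of them onto differently-colored grid lines or into differently-labeled mixed classes.

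The main obstacle I expect is the simultaneous constraint satisfaction: the $2$-coloring of grid lines of one direction interacts with the colorings of the other two directions, and the ``merge'' step (combining a horizontal color, a vertical color, and a diagonal color into one of $7$ final colors) must be done so that \emph{every} clique type above is non-monochromatic at once, without the palette exceeding $7$. Concretely, the delicate point is the claw-clique with three pairwise-similar paths and the net-clique (the genuinely new configuration afforded by the third direction), where two of the three paths can agree on their pair of directions; resolving these forces the refinement that raises the bound from the ``EPG value'' of $4$ to $7$, and I would want to check the arithmetic $3 + 3 + 1 = 7$ (three pure colors, three ``which two directions'' mixed colors, and one extra color used as a tie-breaker for similar/same-shape paths at a shared edge or claw center) survives all cases. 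A cleaner alternative, which I would try first, is to color each of the three direction-line-sets with $2$ colors and then set the final color of a path to be determined by the multiset of colors of the (one or two) lines it lies on together with its direction-pattern, and then count how many such combinations actually occur — aiming to show it is at most $7$ and that distinct combinations are forced within each clique. Once the clique-by-clique verification is organized around Theorem~\ref{thm cliques} and the Strong Helly number $3$, the remaining work is careful but routine case analysis on at most three paths per clique.
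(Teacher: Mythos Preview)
Your proposal has a genuine gap at its foundation. You propose to color \emph{grid lines} (one $2$-coloring per direction) and then label each path by the colors of the one or two lines it lies on, plus its direction pattern. But two distinct paths can lie on exactly the same pair of grid lines --- e.g.\ many $\llcorner$-shaped paths bending at the same grid point --- and they form an edge-clique. Any line-based labeling assigns all of them the same color, so the resulting coloring is not a clique coloring. The ``parity/ownership bit along each line'' you allude to is precisely the missing ingredient, and it is not free: what is needed is a $2$-clique-coloring of the \emph{segments} on each individual grid line, viewed as an interval graph. The paper obtains this from Lemma~\ref{lemma-bacso} (Bacs\'o et al.), which moreover guarantees that one of the two color classes is an independent set of segments; that independence is used repeatedly in the subsequent case analysis and is not something your line-graph coloring provides.

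Consequently your arithmetic $3+3+1=7$ is off the mark. The paper's seven colors arise as the triples in $\{a,b\}^3\setminus\{(a,a,a)\}$: each path gets a coordinate for each of the three directions, with the segment's interval-color in the directions it uses and $b$ in its missing direction (so $(a,a,a)$ never occurs). Edge-cliques and triangular-cliques are then automatically non-monochromatic, but certain \emph{regular} claw-cliques can still be monochromatic in one of the colors $(a,a,b),(a,b,a),(b,a,a)$. The paper fixes these by a local recoloring at each offending claw center, governed by invariants (I)--(IV) that ensure no new monochromatic clique is created; this recoloring step, which is the technical heart of the argument, is entirely absent from your plan. Finally, your appeal to the Strong Helly number is unnecessary: once every $3$-vertex clique is non-monochromatic, any larger maximal clique contains one and is therefore non-monochromatic too.
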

\begin{proof}
Let $\langle \mathcal{P},\mathcal{G}\rangle$ be a B$_1$-EPG\textsubscript{t} representation of the graph $G$. Each path of $\mathcal{P}$ is composed of either a single segment, formed by one or more edges on the same row, column or diagonal of the grid $\mathcal{G}$, or of two segments sharing a point of the grid. We will first assign colors independently to the horizontal, vertical and diagonal segments of each path, and then we will show how to combine those colors into a single color for each path, as required.

First, we use Lemma~\ref{lemma-bacso} to color the segments on each row, each column and each diagonal of $\mathcal{G}$ as if they were vertices of an interval graph, with two colors $a$ and $b$, such that the segments colored $b$ form an independent set, that is, a pairwise non intersecting set. In addition, we assign a $b$ to the missing component of each path.

We thus obtain seven types of paths according to the colors given to their corresponding segments: $(a, a, b)$, $(a, b, a)$, $(b, a, a)$, $(b, b, a)$, $(a, b, b)$, $(b, a, b)$, $(b, b, b)$ where the first component corresponds to the horizontal segment of the path, the second component corresponds to the vertical segment of the path and the third component corresponds to the diagonal segment of the path. Note that since we assigned $b$ to the missing components and every path has a missing component, there is no path with color $(a, a, a)$.

In Theorem~\ref{thm cliques}, we have characterized the B$_1$-EPG\textsubscript{t} representations of cliques with three vertices. Thus, if we ensure that such cliques are not monocolored, then every maximal clique won't be either.

Let us now investigate which cliques could be monocolored. Edge-cliques of $\langle \mathcal{P},\mathcal{G}\rangle$ are also cliques of the interval graph corresponding to the row, column or diagonal of the grid to which the edge (where all the paths of the clique intersect) belongs. Thus, the colors of the paths in such a clique have to be different in the horizontal, vertical or diagonal component, that is, the clique is not monocolored.

Let $P_1$, $P_2$ and $P_3$ be paths forming a triangular-clique $C$. If the segments in which they intersect have different colors, then the respective paths also have different colors. Thus, the only way $C$ could be monocolored is if all the segments of $P_1$, $P_2$ and $P_3$ have the same color. Since segments colored $b$ form an independent set, the segments of $P_1$, $P_2$ and $P_3$ must all have color $a$ in this hypothetical scenario. However, in a triangular-clique, every path has a distinct missing component. Thus, even in the case in which all segments are colored $a$, $P_1$, $P_2$ and $P_3$ still have different colors because their distinct missing component are colored $b$. Therefore, there are no triangular-cliques monocolored.

Let us now turn our attention to the claw-cliques. Let $P_1$, $P_2$ and $P_3$ be the paths forming one such clique $C$ and let $\bigcap_v P_i=\{b\}$.
\begin{itemize}
    \item If $b$ is the bend point of $P_1$, $P_2$ and $P_3$, then we have a similar case to the one with the triangular-cliques. Every path has a distinct missing component, which is colored with $b$,
    and the only way $C$ could be monocolored is if their existing components are colored $a$. Therefore, $C$ is not monocolored.
    \item Otherwise, $C$ is regular. Without loss of generality, assume $P_1$ and $P_2$ are the ones having $b$ as a bend point. Note that $P_1$ and $P_2$ are similar.
    
    Let us analyze the case in which $C$ is shaped $\perp$. In this case, $P_1$ and $P_2$ intersect each other on their vertical segments, $P_1'$ and $P_2'$, and their horizontal segments, $P_1''$ and $P_2''$, intersect $P_3$. If $P_1'$ and $P_2'$ have different colors, $C$ can't be monocolored. Since they can't both have color $b$, they must have color $a$. If $P_1''$ and $P_2''$ have different colors, $C$ can't be monocolored. If both $P_1''$ and $P_2''$ have color $b$, the horizontal segment of $P_3$ must have color $a$ and, again, $C$ can't be monocolored. Therefore, $P_1$ and $P_2$ have both segments colored $a$, and also the same missing component. Thus, $P_1$ and $P_2$ have the same color. One of the segments of $P_3$ is the one containing the intersections with $P_1$ and $P_2$, that is, the horizontal one. If $P_3$ has a missing component distinct from $P_1$ and $P_2$, then $C$ is not monocolored. Otherwise, $C$ can only be monocolored if the vertical segment of $P_3$ has color $a$.
    
    In general, regardless of the shape of $C$, it is only monocolored if $P_1$, $P_2$ and $P_3$ have all the same missing components and their existing ones are all colored $a$.
\end{itemize}

Therefore, only regular claws can be monocolored, and the possible coloring of the paths in our monocolored claw-cliques are $(a, a, b)$, $(a, b, a)$ and $(b, a, a)$.

Now, for each point $x$ of the grid which is the center of one or more claw-cliques monocolored, we will perform a recoloring of at most four paths having a bend at $x$. In this way, each path will be recolored at most once, as it has at most one bend. Paths without bends will not be recolored.

The order in which we process the points $x$ of the grid does not matter: The recolorings are independent of recolorings at other grid points. In the recoloring we will assign color $b$ to some segments that were originally colored $a$, obeying the following rules, for any fixed point $x$ of the grid:
\begin{itemize}
    \item[(I)] the recolored paths either get color $(a, b, b)$, $(b, a, b)$ or $(b, b, a)$;
    \item[(II)] every segment of a path $P$ with a bend at $x$ that is recolored $b$ is contained in a segment of a path similar to $P$ with a bend at $x$ that is colored $a$;
    \item[(III)] if we recolor two paths with a bend at $x$, they only share $x$. If we recolor more than two paths with a bend at $x$, if two of them share grid edges, then they do not belong to same regular claw;
    \item[(IV)] after recoloring, there is no claw-clique colored $(a, a, b)$, $(a, b, a)$ or $(b, a, a)$ centered at $x$.
\end{itemize}

To recolor a monocolored regular claw $C$ we consider two cases.

\textbf{Case 1:}
If $C$ is normal, the recoloring procedure is the same as in~\cite{bonomo2017clique}. The difference is that now every path has a color with three coordinates (one for each component), but in this case, since $C$ is normal, the third component is constant for every path in $C$. Thus, if the recoloring is valid in the two coordinate color scenario, the addition of a third same constant coordinate does not invalidate it.

\textbf{Case 2:}
Otherwise, let $x$ be the center of $C$. We say a shape is \emph{missing} at $x$ if either there is no path of this shape with a bend at $x$ or there is at least one path of this shape with a bend at $x$ that is not colored $(a, a, b)$, $(a, b, a)$ or $(b, a, a)$. We distinguish three cases.
\begin{enumerate}
    \item Two or more of the shapes \uwide{scale=1.2}, \unarrow{scale=1.2}, \lnarrow{scale=1.2}, \lwide{scale=1.2}, \dlnarrow{scale=1.2}, \dunarrow{scale=1.2}, \duwide{scale=1.2}, \dlwide{scale=1.2} are missing at $x$.
    
    If there is no monocolored claw clique centered at $x$, we do not recolor anything. Clearly, (I)–(IV) hold. Otherwise, there is a unique monocolored claw clique at $x$. Let $C$ be such a clique. By symmetry, the shapes \dashd{scale=1.2} and \ddash{scale=1.2}, \diagver{scale=1.2} and \verdiag{scale=1.2}, \diaghor{scale=1.2} and \hordiag{scale=1.2}, \ubranch{scale=1.2} and \lbranch{scale=1.2} are treated the same.
    \begin{itemize}
        \item If $C$ is shaped \dashd{scale=1.2}, both shapes \lnarrow{scale=1.2} and \lwide{scale=1.2} are missing at $x$. Of all \uwide{scale=1.2} or \unarrow{scale=1.2}-shaped paths with bend at $x$, choose the one with the shortest diagonal segment, and recolor it $(a, b, b)$. Then (I)–(IV) hold.
        
        \item If $C$ is shaped \diagver{scale=1.2}, both shapes \dlnarrow{scale=1.2} and \dlwide{scale=1.2} are missing at $x$. Of all \duwide{scale=1.2} or \dunarrow{scale=1.2}-shaped paths with bend at $x$, choose the one with the shortest vertical segment, and recolor it $(b, b, a)$. Then (I)–(IV) hold.
        
        \item If $C$ is shaped \diaghor{scale=1.2}, both shapes \unarrow{scale=1.2} and \lwide{scale=1.2} are missing at $x$. Of all \uwide{scale=1.2} or \lnarrow{scale=1.2}-shaped paths with bend at $x$, choose the one with the shortest horizontal segment, and recolor it $(b, b, a)$. Then (I)–(IV) hold.
        
        \item If $C$ is shaped \ubranch{scale=1.2}, both shapes \dlnarrow{scale=1.2} and \duwide{scale=1.2} are missing at $x$. Of all \dlwide{scale=1.2} or \dunarrow{scale=1.2}-shaped paths with bend at $x$, choose the one with the shortest diagonal segment, and recolor it $(b, a, b)$. Then (I)–(IV) hold.
    \end{itemize}
    
    \item Exactly one of the shapes \uwide{scale=1.2}, \unarrow{scale=1.2}, \lnarrow{scale=1.2}, \lwide{scale=1.2}, \dlnarrow{scale=1.2}, \dunarrow{scale=1.2}, \duwide{scale=1.2}, \dlwide{scale=1.2} is missing at $x$.
    
    By symmetry, we can assume the missing shape is either \uwide{scale=1.2}, or \lnarrow{scale=1.2}, or \dlnarrow{scale=1.2}, or \duwide{scale=1.2}.
    \begin{itemize}
        \item If \uwide{scale=1.2} (resp. \lnarrow{scale=1.2}) is the missing shape, let $\mathcal{P}$ be the set of all paths with bend at $x$ that have the shape \lwide{scale=1.2} (resp. \unarrow{scale=1.2}). If there is a path $P \in \mathcal{P}$ whose horizontal segment is contained in another path with bend at $x$, then recolor $P$ with $(b, b, a)$. Otherwise, if there is a path $P \in \mathcal{P}$ whose diagonal segment is contained in another path with bend at $x$, then recolor $P$ with $(a, b, b)$. In both cases, the choice of $P$ ensures that (I)–(IV) hold.
        
        If for each of the paths in $\mathcal{P}$, their horizontal (diagonal) segment strictly contains all horizontal (diagonal) segments of paths with bend at $x$, then choose any \unarrow{scale=1.2}-shaped (resp. \lwide{scale=1.2}-shaped) path $P_1$ and any \lnarrow{scale=1.2}-shaped (resp. \uwide{scale=1.2}-shaped) path $P_2$ with bend at $x$, recolor $P_1$ with $(b, b, a)$ and $P_2$ with $(a, b, b)$ and observe that (I)–(IV) hold by the choice of $P_1$ and $P_2$.
        
        \item If \dlnarrow{scale=1.2} (resp. \duwide{scale=1.2}) is the missing shape, let $\mathcal{P}$ be the set of all paths with bend at $x$ that have the shape \dunarrow{scale=1.2} (resp. \dlwide{scale=1.2}). If there is a path $P \in \mathcal{P}$ whose vertical segment is contained in another path with bend at $x$, then recolor $P$ with $(b, b, a)$. Otherwise, if there is a path $P \in \mathcal{P}$ whose diagonal segment is contained in another path with bend at $x$, then recolor $P$ with $(b, a, b)$. In both cases, the choice of $P$ ensures that (I)–(IV) hold.
        
        If for each of the paths in $\mathcal{P}$, their vertical (diagonal) segment strictly contains all vertical (diagonal) segments of paths with bend at $x$, then choose any \dlwide{scale=1.2}-shaped (resp. \dlnarrow{scale=1.2}-shaped) path $P_1$ and any \duwide{scale=1.2}-shaped (resp. \dunarrow{scale=1.2}-shaped) path $P_2$ with bend at $x$, recolor $P_1$ with $(b, b, a)$ and $P_2$ with $(b, a, b)$ and observe that (I)–(IV) hold by the choice of $P_1$ and $P_2$.
    \end{itemize}
    
    \item None of the shapes \uwide{scale=1.2}, \unarrow{scale=1.2}, \lnarrow{scale=1.2}, \lwide{scale=1.2}, \dlnarrow{scale=1.2}, \dunarrow{scale=1.2}, \duwide{scale=1.2}, \dlwide{scale=1.2} is missing at $x$.
    \begin{itemize}
        \item Consider the shortest of all segments (or one of them if there is more than one) of paths with bend at $x$ that are shaped \uwide{scale=1.2}, \unarrow{scale=1.2}, \lnarrow{scale=1.2} and \lwide{scale=1.2}, and let $Q$ be the path it belongs to. By symmetry, we may assume $Q$ is shaped either \uwide{scale=1.2} or \unarrow{scale=1.2}.
        
        If $Q$ is shaped \uwide{scale=1.2} (resp. \unarrow{scale=1.2}), let $\mathcal{P}$ be the set of all \lwide{scale=1.2}-shaped (resp. \lnarrow{scale=1.2}-shaped) paths with bend at $x$. If there is a path $P \in \mathcal{P}$ whose horizontal (diagonal) segment is contained in another path with bend at $x$, then recolor $P$ with $(b, b, a)$ (or with $(a, b, b)$, respectively), and recolor $Q$ with $(b, b, a)$. The choice of $P$ and $Q$ guarantees (I)–(IV).
        
        Otherwise, for each of the \lwide{scale=1.2}-shaped (resp. \lnarrow{scale=1.2}-shaped) paths in $\mathcal{P}$, their horizontal (diagonal) segment strictly contains all horizontal (diagonal) segments of paths with bend at $x$. Choose any \unarrow{scale=1.2}-shaped (resp. \uwide{scale=1.2}) path $P_1$ and any \lnarrow{scale=1.2}-shaped (resp. \lwide{scale=1.2}) path $P_2$ with bend at $x$, and recolor $P_1$ with $(b, b, a)$ and $P_2$ with $(a, b, b)$ ($Q$ is not recolored in this case). Again, (I)–(IV) hold.
        
        \item Consider the shortest of all segments (or one of them if there is more than one) of paths with bend at $x$ that are shaped \dlnarrow{scale=1.2}, \dunarrow{scale=1.2}, \duwide{scale=1.2} and \dlwide{scale=1.2}, and let $Q$ be the path it belongs to. By symmetry, we may assume $Q$ is shaped either \duwide{scale=1.2} or \dunarrow{scale=1.2}.
        
        If $Q$ is shaped \duwide{scale=1.2} (resp. \dunarrow{scale=1.2}), let $\mathcal{P}$ be the set of all \dlwide{scale=1.2}-shaped (resp. \dlnarrow{scale=1.2}-shaped) paths with bend at $x$. If there is a path $P \in \mathcal{P}$ whose vertical (diagonal) segment is contained in another path with bend at $x$, then recolor $P$ with $(b, b, a)$ (or with $(b, a, b)$, respectively), and recolor $Q$ with $(b, b, a)$. The choice of $P$ and $Q$ guarantees (I)–(IV).
        
        Otherwise, for each of the \dlwide{scale=1.2}-shaped (resp. \dlnarrow{scale=1.2}-shaped) paths in $\mathcal{P}$, their vertical (diagonal) segment strictly contains all vertical (diagonal) segments of paths with bend at $x$. Choose any \dunarrow{scale=1.2}-shaped (resp. \duwide{scale=1.2}) path $P_1$ and any \dlnarrow{scale=1.2}-shaped (resp. \dlwide{scale=1.2}) path $P_2$ with bend at $x$, and recolor $P_1$ with $(b, a, b)$ and $P_2$ with $(b, b, a)$ ($Q$ is not recolored in this case). Again, (I)–(IV) hold.
    \end{itemize}
\end{enumerate}

Once such a recoloring is found, the segments colored $b$ may no longer be an independent set. Property (II) ensures that we create no monocolored edge-clique. There are no triangular-cliques monocolored after the recoloring process as well. To see this, let $C$ be a triangular-clique formed by paths $P_1$, $P_2$ and $P_3$.

Suppose one of the paths, say $P_1$, got recolored. Then, $P_1$ had before the recoloring both segments with color $a$. Since the recoloring assigns $b$ to one of the segments of a chosen path, $P_1$ has color $a$ in one of its segments, $P_1'$, and $b$ in the other, $P_1''$. Thus, either $P_2$ or $P_3$ has a missing component (with color $b$) in the direction of $P_1'$. Therefore, $C$ is not monocolored. Suppose two of the paths, say $P_1$ and $P_2$, got recolored. Let $P_1'$, $P_1''$, $P_2'$ and $P_2''$ be the segments of $P_1$ and $P_2$ such that $P_1' \cap_e P_2' \neq \emptyset$. If both $P_1'$ and $P_2'$ got recolored, then $P_1''$ has color $a$ and $P_2$ has a missing component (with color $b$) in the direction of $P_1''$. Therefore, $C$ is not monocolored. If none of $P_1'$ and $P_2'$ got recolored, then they have color $a$ and $P_3$ has a missing component (with color $b$) in the direction of them. Therefore, $C$ is not monocolored. If only one of $P_1'$ and $P_2'$ got recolored, then they have different colors and, thus, $C$ is not monocolored. The argument set out above works also if all three paths got recolored, since it relies on the fact that some missing component (colored $b$) has a different color from a segment that was not recolored in the process. Therefore, there are no triangular-cliques monocolored after the recoloring process.

Moreover, we can use the same argument to state that non-regular claws are not monocolored after the recoloring process.

We claim that properties (I)–(III) guarantee we have no new monocolored regular claw. Assume instead that a claw-clique $C$ centered at a grid point $x$ gets monocolored after the process, and by symmetry assume it is shaped either \dashd{scale=1.2}, \diagver{scale=1.2}, \diaghor{scale=1.2} or \ubranch{scale=1.2}. By (I), it is either monocolored $(a, b, b)$, $(b, a, b)$ or $(b, b, a)$.
\begin{itemize}
    \item If $C$ is monocolored $(a, b, b)$ (resp. $(b, a, b)$), then it is shaped either \dashd{scale=1.2} (resp. \diagver{scale=1.2}) or \diaghor{scale=1.2} (resp. \ubranch{scale=1.2}). In the first case, the diagonal (resp. vertical) segment of one of the paths having a bend at $x$, let us say $P$, had to be recolored. Property (II) implies that there is a path $Q$, similar to $P$, having a bend at $x$ and whose diagonal (resp. vertical) segment contains the diagonal (resp. vertical) segment of $P$ and is colored $a$; This leads to a contradiction, because by maximality, $Q$ belongs to $C$. In the second case, since by (III) at most one of the paths of $C$ that have a bend at $x$ was recolored, and the segments that were originally colored $b$ formed an independent set, the diagonal (resp. vertical) segments of all the paths that belong to the clique and do not have a bend at $x$ were recolored $b$. Property (II) implies that there is a path belonging to the clique whose diagonal (resp. vertical) segment is colored $a$, a contradiction as well.
    
    \item If $C$ is monocolored $(b, b, a)$, then it is shaped either \dashd{scale=1.2}, \diagver{scale=1.2}, \diaghor{scale=1.2} or \ubranch{scale=1.2}. If it is shaped \diaghor{scale=1.2} (resp. \diagver{scale=1.2}), the horizontal (resp. vertical) segment of one of the paths having a bend at $x$, let us say $P$, had to be recolored. Property (II) implies that there is a path $Q$, similar to $P$, having a bend at $x$ and whose horizontal (resp. vertical) segment contains the horizontal (resp. vertical) segment of $P$ and is colored $a$; This leads to a contradiction, because by maximality, $Q$ belongs to $C$. If it is shaped \dashd{scale=1.2} (resp. \ubranch{scale=1.2}), since by (III) at most one of the paths of $C$ that have a bend at $x$ was recolored, and the segments that were originally colored $b$ formed an independent set, the horizontal (resp. vertical) segments of all the paths that belong to the clique and do not have a bend at $x$ were recolored b. Property (II) implies that there is a path belonging to the clique whose horizontal (resp. vertical) segment is colored $a$, a contradiction as well.
\end{itemize}

These observations and property (IV) ensure that after going through all grid points, we have found a $7$-clique coloring of $G$.

\end{proof}

We could not find examples of B$_1$-EPG\textsubscript{t} graphs having clique chromatic number $7$. However, this result allows us to conclude that Mycielski graphs with chromatic number greater than $7$ are not B$_1$-EPG\textsubscript{t}.

\section{Conclusions and Open Questions}
In this paper, we introduced the concept of B$_{k}$-EPG\textsubscript{t} graphs, a generalization of B$_k$-EPG graphs. Some examples of graphs that have rectangular bend-number two, but are B$_{1}$-EPG\textsubscript{t} were shown. We characterized the representation of cliques with three vertices and chordless $4$-cycles in B$_{1}$-EPG\textsubscript{t} graphs. In addition, we conjecture that the representation of cliques in B$_{1}$-EPG\textsubscript{t} graphs can be fully characterized by the edge-clique, claw-clique and triangular-clique. We also prove that B$_{1}$-EPG\textsubscript{t} graphs have Strong Helly number $3$. Furthermore, we prove that B$_{1}$-EPG\textsubscript{t} graphs are $7$-clique colorable. It is known that not every graph has a VPG representation. A VPG graph is a graph that can have its vertices represented by a collection of paths $\mathcal{P}$ on a rectangular grid, such that two vertices are adjacent in the graph if and only if the corresponding paths in $\mathcal{P}$ share at least one vertex of the grid. An interesting question is to determine whether this is the case for VPG\textsubscript{t} graphs as well, extending VPG graphs analogously as EPG\textsubscript{t} graphs extend EPG graphs. The complexity of recognizing B$_k$-EPG\textsubscript{t} graphs is open, for all $k\geq1$. 

A proposal for future work is to investigate whether there are graph problems that can be solved more efficiently if the input graph is B$_k$-EPG\textsubscript{t}, for some fixed $k\geq1$. In addition, we would like to know if there are examples of graphs having rectangular bend-number $k$ that are B$_{k'}$-EPG\textsubscript{t} for $k'\leq k-2$. Moreover, we may consider an even more general grid than the triangular one, by allowing the grid to have both diagonals, and consider the edge intersection graphs of paths in this grid.

\bibliographystyle{abbrv}

%%% Comment out this section when you \bibliography{references} is enabled.

\end{document}